\newcommand{\coq}{\text{\sc Coq}}
\newcommand{\hol}{\text{\sc Hol-light}}
\newcommand{\markus}[1]{{\color{blue}#1}}
\newcommand{\R}{\mathbb{R}}
\newcommand{\Z}{\mathbb{Z}}
\newcommand{\C}{\mathbb{C}}
\newcommand{\Q}{\mathbb{Q}}
\newcommand{\N}{\mathbb{N}}
\DeclareMathOperator{\bigo}{\mathcal{O}}
\DeclareMathOperator{\sgn}{sgn}
\DeclareMathOperator{\bigotilde}{\overset{\sim}{\mathcal{O}}}
\newcommand{\hasrealroots}{\texttt{has\_real\_roots}}
\newcommand{\hasrealrootsfun}[1]{\texttt{has\_real\_roots}(#1)}
\newcommand{\sumtwosquares}{\texttt{sum\_two\_squares}}
\newcommand{\sumtwosquaresfun}[2]{\texttt{sum\_two\_squares}(#1,#2)}
\newcommand{\qlist}{\texttt{q\_list}}
\newcommand{\hlist}{\texttt{h\_list}}
\newcommand{\clist}{\texttt{c\_list}}
\newcommand{\slist}{\texttt{s\_list}}
\newcommand{\univsos}{\texttt{univsos}}
\newcommand{\univsosone}{\texttt{univsos1}}
\newcommand{\univsostwo}{\texttt{univsos2}}
\theoremstyle{plain}
\newtheorem{theorem}{Theorem}[section]
\newtheorem{lemma}[theorem]{Lemma}
\newtheorem{proposition}[theorem]{Proposition}
\theoremstyle{definition}
\newtheorem{remark}[theorem]{Remark}
\newtheorem{example}{Example}
\newcommand{\qed}{\nobreak \ifvmode \relax \else
      \ifdim\lastskip<1.5em \hskip-\lastskip
      \hskip1.5em plus0em minus0.5em \fi \nobreak
      \vrule height0.75em width0.5em depth0.25em\fi}
\def\mohab#1{\textcolor{magenta}{#1}}
\definecolor{dkviolet}{rgb}{0.6,0,0.8}
\begin{document}
\author{Victor Magron$^{1}$ \and Mohab Safey El Din$^{2}$ \and Markus Schweighofer$^{3}$}
\date{\today}
\title{Algorithms for Weighted Sums of Squares Decomposition of
Non-negative Univariate Polynomials}

\footnotetext[1]{CNRS Verimag; 700 av Centrale 38401 Saint-Martin d'H\`eres, France} 
\footnotetext[2]{Sorbonne Universit\'es, UPMC Univ. Paris 06, CNRS, Inria Paris Center, LIP6, Equipe PolSys, F-75005, Paris, France}
\footnotetext[3]{{Fachbereich Mathematik und Statistik, Universit\"{a}t Konstanz, 78457 Konstanz, Germany}}
\maketitle



\begin{abstract}
  It is well-known that every non-negative univariate real
  polynomial can be written as the sum of two polynomial squares with
  real coefficients. When one allows a \emph{weighted} sum of
  \emph{finitely many} squares instead of a sum of two squares, then
  one can choose all coefficients in the representation to lie in the
  field generated by the coefficients of the polynomial. In
    particular, this allows an effective treatment of polynomials with
    rational coefficients.

  In this article, we describe, analyze and compare
  both from the theoretical and practical points of view, two
  algorithms computing such a weighted sums of squares decomposition
  for univariate polynomials with rational coefficients.

  The first algorithm, due to the third author  relies on real root isolation,
  quadratic approximations of positive polynomials and square-free
  decomposition but its complexity was not analyzed. 
  We provide bit complexity estimates, both on the runtime and the
  output size of this algorithm. They are exponential in the degree
  of the input univariate polynomial and linear in the maximum bitsize
  of its complexity. This analysis is obtained using quantifier
  elimination and root isolation bounds.

The second algorithm, due to Chevillard, Harrison, Joldes and Lauter,
 relies on complex root
isolation and square-free decomposition and has been introduced for
certifying positiveness of polynomials in the context of computer
arithmetics. Again, its complexity was not analyzed. 
We provide bit complexity estimates, both on the runtime and the
output size of this algorithm, which are polynomial in the degree of
the input polynomial and linear in the maximum bitsize of its
complexity. This analysis is obtained using Vieta's formula and root
isolation bounds.

Finally, we report on our implementations of both algorithms and
compare them in practice on several application benchmarks. While the
second algorithm is, as expected from the complexity result, more
efficient on most of examples, we exhibit families of non-negative
polynomials for which the first algorithm is better. 
\end{abstract}
\paragraph{Keywords:} 
non-negative univariate polynomials, Nichtnegativstellens\"atze, sums
of squares decomposition, root isolation, real algebraic geometry.

\section{Introduction}
\label{sec:intro}
Given a {subfield} $K$ of $\R$ and a non-negative univariate polynomial
$f \in K[X]$, we consider the problem of proving the existence and
computing the weighted sums of squares decompositions of $f$ with
coefficients also lying in $K$.

Beyond the theoretical interest of this question, finding certificates
of non-negative polynomials is mandatory in many application
fields. Among them, one can mention the stability proofs of critical
control systems often relying on Lyapunov functions~\cite{Rantzer00},
the certified evaluation of mathematical functions in the context of
computer arithmetics (see for instance~\cite{Chevillard11}), the
formal verification of real inequalities~\cite{kepler15} within proof
assistants such as $\coq$~\cite{coq} or $\hol$~\cite{hollight} ; in
these situations the univariate case is already an important one.
{In particular, formal proofs of polynomial non-negativity can
  be handled with sums of squares certificates. These certificates are
  obtained with tools available outside of the proof assistants and
  eventually verified inside. Because of the limited computing power
  available inside such proof assistants, this is crucial to devise
  algorithms that produce certificates, whose checking is
  computationally reasonably simple. In particular, we would like to
  ensure that such algorithms output sums of squares certificates of
  moderate bitsize and ultimately with a computational complexity
  being polynomial with respect to the input.  }


\paragraph*{Related Works.}
Decompositing non-negative univariate polynomials into sums of squares
has a long story ; very early quantitative aspects like the number of
needed squares have been studied. For the case $K = \Q$, Landau shows
in~\cite{Landau1906} that for every non-negative polynomial in $\Q[X]$,
there exists a decomposition involving a weighted sum of (at most)
eight polynomial squares in $\Q[X]$. In~\cite{Pourchet71}, Pourchet
improves
this result by showing the existence of a decomposition involving only
a weighted sum of (at most) five squares. This is done using
approximation and valuation theory ; extracting an algorithm from
these tools is not the subject of study of this
paper. 
%
\if{ To the best of our knowledge, the first answer to
  Question~\ref{th:sosK} is the Nichtnegativstellens\"atz result for
  univariate polynomials given by Pourchet in~\cite{Pourchet71} for
  the case $K = \Q$, where it is shown that for every non-negative
  polynomial in $\Q[X]$, there exists a decomposition involving a sum
  of (at most) five polynomial squares in $\Q[X]$.  An English
  translation is provided in~\cite[Chapter 17]{Rajwade93}.
  \markus{Pourchet lui même écrit que Landau
    [\url{https://www.math.ucdavis.edu/~deloera/MISC/BIBLIOTECA/trunk/Landau.pdf}]
    a déjà répondu à cette question
    [\url{http://matwbn.icm.edu.pl/ksiazki/aa/aa19/aa1917.pdf}]}
  Computing sum of squares decompositions with rational coefficients
  H. Peyrl and P.A. Parrilo Theoretical Computer Science, Vol. 409,
  Issue 2, pp. 269-281 (2008)
  [http://www.sciencedirect.com/science/article/pii/S0304397508006452?via
}\fi 

\if{
\mohab{\sout{In the multivariate case, alternative methods in Computer
    Algebra have been developed for algebraic certification of
    polynomial non-negativity as well as for polynomial
    optimization. Among them, we can mention approaches relying on
    critical point methods~\cite{Greuet11}, Cylindrical Algebraic
    Decomposition~\cite{Iwane13} and algebraic solutions of linear
    matrix inequalities (LMIs)~\cite{Kaltofen12} and \cite{Guo13}. The
    result from}}~\cite[Theorem 1.1]{Guo13} \mohab{\sout{states the
    existence of an algorithm to compute a sums of squares (SOS)
    decomposition of $f$ for the case $K = \Q$, with output bitsize
    bounded by}} $\tau^{\bigo{(1)}} 2^{\bigo{(n^3)}}$\mohab{\sout{,
    with $\tau$ being an upper bound on the coefficient bitsizes of
    $f$. Recent research efforts~\cite{Simone16} rely on
    determinantial varieties to solve LMIs but are currently
    restricted to problems of small size.}}

\mohab{\bf todo: parler des approches symboliques: celles qui
  conduisent \`a des solutions alg\'ebriques et celles qui conduisent
  \`a des solutions rationnelles.} 
}\fi  

More recently, the use of semidefinite programming for computing
sums-of-squares certificates of non-negativity for polynomials has
become very popular since \cite{Lasserre01,Parrilo00}.  Given a
polynomial $f$ of degree $n$, this method consists in finding a real
symmetric matrix $G$ with non-negative eigenvalues (a {\em semidefinite
  positive} matrix), such that $f(x) = v(x)^T G v(x)$, where $v$ is
the vectors of monomials of degree less than $n/2$. Hence, this leads
to solve a so-called Linear Matrix Inequality and one can rely on
semidefinite programming (SDP) to find the coefficients of $G$. This
task can be delegated to an SDP solver (e.g. SeDuMi, SDPA, SDPT3 among
others). An important technical issue arises from the fact that such
SDP solvers are most of the time implemented with floating-point
double precision. More accurate solvers are available
(e.g. SDPA-GMP~\cite{Nakata10GMP}). However, note that these solvers
always compute numerical approximations of the algebraic solution to
the semidefinite program under consideration. Hence, they are not
sufficient to provide algebraic certificates of posivity with rational
coefficients. Hence, a process for making exact and with rational
coefficients the computed numerical approximations of sums of squares
certificates is needed.  This issue has been tackled
in~\cite{Peyrl08,Kaltofen12}. The certification scheme described
in~\cite{Magron14} allows to obtain lower bounds of non-negative
polynomials over compact sets.  However, despite their efficiency,
these methods do not provide any guarantee to output a rational
solution to a Linear Matrix Inequality when it exists (and especially
when it is far from the computed numerical solution).

A more systematic treatment of this problem has been brought by the
symbolic computation community. Linear Matrix Inequalities can be
solved as a decision problem over the reals with polynomial
constraints using the Cylindrical Algebraic Decomposition algorithm
\cite{Collins75} or more efficient critical point methods (see
e.g. \cite{BPR96} for complexity estimates and \cite{HS12, GS14} for
practical algorithms). But using such general algorithms is an
overkill and dedicated algorithms have been designed for computing
exactly algebraic solutions to Linear Matrix Inequalities
\cite{Simone16,SPECTRA}. Computing rational solutions can also be
considered thanks to convexity properties \cite{SZ10}. In particular,
the algorithm in \cite{Guo13} can be used to compute sums of squares
certificates with rational coefficients fpr a non-negative univariate
polynomial of degree $n$ with coefficients of bit size bounded by
$\tau$ using $\tau^{\bigo{(1)}} 2^{\bigo{(n^3)}}$ bit operations at
most (see \cite[Theorem 1]{Guo13}). 
%

{For the case where $K$ is an arbitrary subfield of $\R$},
Schweighofer gives in~\cite{SchweighoferMasterThesis} a new proof of
the existence of a decomposition involving a sum of (at most) $n$
polynomial squares in $K[X]$. This existence proof comes together with
a recursive algorithm to compute such decompositions.  At each
recursive step, the algorithm performs real root isolation and
quadratic approximations of positive polynomials.  Later on, a second
algorithm is derived in~\cite[Section~5.2]{Chevillard11}, where the
authors show the existence of a decomposition involving a sum of (at
most) $n+3$ polynomial squares in $K[X]$. This algorithm is based on
approximating complex roots of perturbed positive polynomials.

These both latter algorithms were not analyzed despite the fact that
they were implemented and used. An outcome of this paper is a bit
complexity analysis for both of them, showing that they have better
complexities than the algorithm in \cite{Guo13}, the second algorithm
being polynomial in $n$ and $\tau$.

  \paragraph*{Notation for complexity estimates.}  
  {For complexity estimates, we use the bit complexity model.}
  For an integer $b \in \Z \backslash \{0\}$, we denote by
  $\tau(b) := \log_2 (|b|)+1$ the bitsize of $b$, with the convention
  $\tau(0) := 1$.  We write a given polynomial $f \in \Z[X]$ of degree
  $n \in \N$ as $f = \sum_{i=0}^n b_i X^i$, with
  $b_0, \dots, b_n \in \Z$.  In this case, we define
  $\| f \|_{\infty} := \max_{0 \leq i \leq n} |b_i|$ and, using a
  slight abuse of notation, we denote $\tau (\| f \|_{\infty})$ by
  $\tau(f)$. Observe that when $f$ has degree $n$, the bit size
  necessary to encode $f$ is bounded by $n\tau(f)$ (when storing each
  coefficients of $f$). {The derivative of $f$ is
    $f' = \sum_{i=1}^{n} i b_i X^{i-1}$.}
  For a rational number $q = \frac{b}{c}$, with
  $b \in \Z, c \in \Z \backslash \{0\}$ and ${\rm gcd}(b,c)=1$, we
  denote $\max \{\tau (b), \tau (c)\}$ by $\tau (q)$.
For two mappings $g, h : \N^l \to \R$, the expression
``$g(v) = \bigo{(h(v))}$'' means that there exists a integer
$b \in \N$ such that for all $v \in \N^l$, $g(v) \leq b h(v)$. The
expression ``$g(v) = \bigotilde{(h(v))}$'' means that there exists a
integer $c \in \N$ such that for all $v \in \N^l$,
$g(v) \leq h(v) \log_2 (h(v))^c$.

\paragraph*{Contributions.}
We present and analyze two algorithms, denoted by~$\univsosone$ and
$\univsostwo$, allowing to decompose a non-negative univariate
polynomial $f$ of degree $n$ into sums of squares with coefficients
lying in any subfield $K$ of $\R$.  {To the best of our knowledge, there was no prior complexity estimate for the output of such certification algorithms based on sums of squares in the univariate case.}
We summarize our contributions as follows:
\begin{itemize}
\item We  describe in Section~\ref{sec:method1}
  the first algorithm, called~$\univsosone$ in the sequel. It was
  originally given in~\cite[Chapter 2]{SchweighoferMasterThesis} ;
  Section~\ref{sec:method1} can be seen as a partial English translation of
    this text written in German since some proofs have been
    significantly simplified.
 In the same section, we analyze its bit complexity. When the input
    is a polynomial of degree $n$ with integer coefficients of maximum bitsize $\tau$,
   %
    we prove that Algorithm~$\univsosone$ uses
    $\bigotilde{((\frac{n}{2})^{\frac{3n}{2}} \tau)}$ boolean
    operations and returns polynomials of bitsize bounded by
    $\bigo{((\frac{n}{2})^{\frac{3n}{2}} \tau)}$. This is not
    restrictive: when $f\in \Q[X]$, one can multiply it by the least
    common multiple of the denominators of its coefficients and apply
    our statement for polynomials in $\Z[X]$.
  
%
  \item We describe in Section~\ref{sec:method2} the second
    algorithm~$\univsostwo$, initially given
    in~\cite[Section~5.2]{Chevillard11}. We also analyze its {bit}
    complexity.
%
{When the input is
  a univariate polynomial of degree $n$ with integer coefficients of
  maximum bitsize $\tau$, we prove that Algorithm $\univsostwo$
  returns a sums of square decompositions of $n+3$ polynomials with
  coefficients of bitsize bounded by $\bigo{(n^3 + n^2 \tau)}$ using
  $ \bigotilde{(n^4 + n^3 \tau )}$ boolean operations.}
\item Both algorithms are implemented within the $\univsos$ tool. The
  first release of $\univsos$ is a Maple library, is freely
  available\footnote{\url{https://github.com/magronv/univsos}} {
    and is integrated in the RAGlib (Real Algebraic Library) Maple
    package\footnote{\url{http://www-polsys.lip6.fr/~safey/RAGLib/}}}. The
  scalability of the library is evaluated in Section~\ref{sec:benchs}
  on several non-negative polynomials issued from the existing
  literature. Despite the significant difference of theoretical
  complexity between the two algorithms, numerical benchmarks indicate
  that both may yield competitive performance w.r.t.~specific
  sub-classes of problems.
\end{itemize}

\section{Preliminaries}
\label{sec:back}
We first recall the proof of the following classical result for
non-negative real-valued univariate polynomials (see e.g.~\cite[Section
8.1]{prestel2001positive}).

\begin{theorem}
\label{th:sosR}
Let $f \in \R[X]$ be a non-negative univariate polynomial,
i.e.~$f(x) \geq 0$, for all $x \in \R$. Then, $f$ can be written as
the sum of two polynomial squares in $\R[X]$.
\end{theorem}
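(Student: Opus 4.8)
The plan is to use the fundamental theorem of algebra to factor $f$ over $\C$ and then group conjugate roots appropriately. First I would observe that since $f \geq 0$ on all of $\R$, its leading coefficient is positive and its degree is even, say $\deg f = 2m$; moreover every real root of $f$ must have even multiplicity (otherwise $f$ would change sign there). Writing the factorization of $f$ over $\C$, the real roots contribute factors of the form $(X-a)^{2k}$, which are perfect squares, and the non-real roots come in conjugate pairs $\lambda, \bar\lambda$, contributing factors $(X-\lambda)(X-\bar\lambda) = |X-\lambda|^2$ when evaluated at real $X$.

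The key algebraic step is then to collect all linear factors over $\C$ into a single product: one can write $f(X) = c\,\prod_j (X - \lambda_j)$ where $c > 0$ and the multiset $\{\lambda_j\}$ is invariant under complex conjugation (real roots appearing with even multiplicity). Group the $\lambda_j$ into conjugate pairs and pull out $\sqrt{c}$ to form a single polynomial $p(X) = \sqrt{c}\prod_{j \in S}(X - \lambda_j) \in \C[X]$, where $S$ picks one representative from each conjugate pair (and half of the copies of each real root). Then $f(X) = p(X)\overline{p(X)}$ for real $X$, since conjugation sends $X - \lambda_j$ to $X - \bar\lambda_j$ and permutes the remaining factors. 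Writing $p = g + \mathrm{i}h$ with $g, h \in \R[X]$, we get $f = p\bar p = g^2 + h^2$, the desired sum of two squares.

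The main obstacle — really the only subtlety — is the bookkeeping of multiplicities to ensure the chosen multiset $S$ of roots actually yields a \emph{polynomial} $p \in \C[X]$ rather than something with fractional exponents: this is exactly why one needs the evenness of the multiplicity of each real root, which in turn follows from non-negativity. I would also need to be slightly careful that $\deg f$ is even and the leading coefficient positive so that $\sqrt{c}$ makes sense as a real number; both are immediate consequences of $f \geq 0$ on $\R$ (considering the behaviour as $X \to \pm\infty$). Once these points are settled, the identity $f = g^2 + h^2$ is a one-line consequence of $f = p\bar p$.
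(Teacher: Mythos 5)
Your proof is correct and follows essentially the same route as the paper's: factor $f$ over $\C$, use non-negativity to force even multiplicities at real roots, split the factors into conjugate halves $p$ and $\bar p$, and write $f = p\bar p = g^2 + h^2$ with $p = g + \mathrm{i}h$. The only cosmetic difference is that the paper normalizes $f$ to be monic and keeps the real-root contribution as a separate square $g^2$ multiplying $(q^2+r^2)$, whereas you absorb everything into a single $p$; the substance is identical.
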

\begin{proof}
  Without loss of generality, one can assume that $f$ is monic,
  i.e.~the leading coefficient (nonzero coefficient of highest degree)
  of $f$ is $1$. Then we decompose $f$ as follows in $\C[x]$:
  \[ 
  f = \prod_j (X - a_j)^{r_j} \prod_k ((X - (b_k + i c_k)) (X - (b_k - i c_k)))^{s_k} \,,
  \]
  with $a_j, b_k, c_k \in \R$, $r_j, s_k \in \N^{>0}$, $a_j$ standing
  for the real roots of $f$ and $(b_k \pm i c_k)$ standing for the
  complex conjugate roots of $f$.  Since $f$ is non-negative, all real
  roots must have even multiplicity $r_j$, yielding the existence of
  polynomials $g, q, r \in \R[X]$ satisfying the following:
  \[
  g^2 = \prod_j (X - a_j)^{r_j} \,, \quad q + i r = \prod_k (X -
  (b_k + i c_k))^{s_k} \, q - i r = \prod_k (X -
  (b_k - i c_k))^{s_k}.
  \]
  Then, one has
  $f = g^2 (q + i r) (q - i r) = g^2 (q^2 + r^2) = (g q)^2 + (g r)^2$,
  which proves the claim.
\end{proof}
%
  
  Let $K$ be a field and $g \in K[X]$. One says that $g$ is a
  {\em square-free} polynomial, when there is no prime element
  $p \in K[X]$ such that $p^2$ divides $g$.
Let now $f\in K[X]-\{0\}$.  A decomposition of $f$ of the form
  $f = a g_1^1 g_2^2 \dots g_n^n$ with $a \in K$ and normalized
  pairwise coprime square-free polynomials $g_1, g_2,\dots, g_n$ is
  called a {\em square-free decomposition} of $f$ in $K[X]$. 
\if{
%
%
The existence of square-free decompositions in $K[X]$ is ensured from
the existence of prime factor decompositions in $K[X]$. 

\mohab{M\^eme remarque: c'est trop standard et pas assez utilis\'e dans la suite.}
\begin{lemma}
\label{th:sfexistence}
Let $K$ be a field and $f \in K[X]\mohab{-\{0\}}$.
\begin{enumerate}[(i)]
\item There exists a square-free decomposition of $f$ in $K[X]$.
\item With $f = a g_1^1 g_2^2 \dots g_n^n$ being such a decomposition,
  for all $i \in \{1, \dots, n \}$, $g_i$ is the product of all
  normalized prime factors $p$ of $f$ in $K[X]$ satisfying
  $\max \{ \alpha \in \N \mid p^\alpha \text{ divides } f \} = 1$.
\end{enumerate}
\end{lemma}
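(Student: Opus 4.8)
The plan is to deduce the statement from the fact that $K[X]$ is a unique factorization domain (even a principal ideal domain, since $K$ is a field), which the surrounding text already treats as known via the existence of prime factor decompositions. Concretely, I would start from the prime factorization $f = a\prod_p p^{v_p}$ of the nonzero polynomial $f$, where $a\in K\setminus\{0\}$, the product runs over the normalized prime elements $p$ of $K[X]$, the exponent $v_p := \max\{\alpha\in\N \mid p^\alpha \text{ divides } f\}$ is a non-negative integer, and $v_p = 0$ for all but finitely many $p$. Set $n := \max_p v_p$ (with $n = 0$ when $f$ is a nonzero constant).

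For part (i), I would simply group the primes by multiplicity: put $g_i := \prod_{p\,:\,v_p = i} p$ for $i = 1,\dots,n$, an empty product being $1$. Then I would check the three required properties — each $g_i$ is a product of distinct normalized primes, hence normalized and square-free; distinct $g_i$ and $g_j$ involve disjoint sets of primes, hence are pairwise coprime; and, reorganizing the factorization of $f$ according to the value of $v_p$,
\[ \prod_{i=1}^{n} g_i^{\,i} = \prod_{i=1}^{n}\prod_{p\,:\,v_p=i} p^{\,i} = \prod_{p} p^{\,v_p} = a^{-1} f , \]
so that $f = a\,g_1^1 g_2^2 \cdots g_n^n$ is a square-free decomposition. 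Each of these checks is immediate from the grouping, so there is no real calculation.

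For part (ii), given an arbitrary square-free decomposition $f = a\,g_1^1 \cdots g_n^n$, I would look at a fixed normalized prime $p$: pairwise coprimality of the $g_\ell$ forces $p$ to divide at most one of them, and square-freeness forces it to do so with multiplicity exactly $1$; then uniqueness of factorization gives $v_p = \sum_\ell \ell\,v_p(g_\ell)$, which is the unique index $j$ with $p \mid g_j$ if such a $j$ exists, and $0$ otherwise. Hence $p \mid g_i \iff v_p = i$, and then only to the first power, so reading off the factorization of $g_i$ identifies it with $\prod_{p\,:\,v_p=i} p$ (which also pins down $a$ and shows $n = \max_p v_p$ up to trivial factors).

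I do not expect a genuine obstacle here: the mathematical content is entirely carried by unique factorization in $K[X]$, and the rest is bookkeeping. The only points that deserve a little care are fixing the normalization convention (say, monic generators) so that the $g_i$ and $a$ are well defined, handling the degenerate cases where some $g_i$ — or, when $f$ is constant, the whole list — is trivial, and noting that the condition in part (ii) should read ``$\max\{\alpha\in\N\mid p^\alpha \text{ divides } f\} = i$'' rather than ``$= 1$''.
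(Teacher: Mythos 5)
Your proposal is correct, and it follows exactly the route the paper has in mind: the lemma appears there without a written proof, accompanied only by the remark that the existence of square-free decompositions follows from the existence of prime factor decompositions in $K[X]$, and your argument (group the normalized primes of $f$ by their multiplicity $v_p$ to get existence, then use pairwise coprimality and square-freeness to show any decomposition must arise this way) is precisely the standard way to fill in that remark. You are also right that the condition in part (ii) is a typo and should read $\max\{\alpha\in\N\mid p^\alpha \text{ divides } f\}=i$ rather than $=1$.
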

For fields of characterisic 0, existence of square-free decompositions
comes the existence of prime factor decompositions.
\mohab{\sout{(see~\cite{Grobner93} for fields of nonzero
    characteristics).}}

\mohab{M\^eme remarque: on ne se sert pas de ce lemme dans la suite.}
\begin{lemma}
\label{th:primechar0}
Let $K$ be a field of characteristic 0. Let $f \in K[X]\mohab{-\{0\}}$
and $p$ a prime factor of $f$. Then, one has:
\[
\max \{ \alpha \in \N \mid p^\alpha \text{ divides } f \} - 1
= 
\max \{ \alpha \in \N \mid p^\alpha \text{ divides } f' \} \,.
\]
\end{lemma}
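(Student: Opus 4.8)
The plan is to compute $f'$ explicitly from the chosen prime factor and track divisibility by powers of $p$. First I would set $m := \max\{\alpha \in \N \mid p^\alpha \text{ divides } f\}$; since $p$ is a prime factor of $f$ we have $m \geq 1$, so we may write $f = p^m g$ with $g \in K[X]$ and $p \nmid g$. Differentiating and using the Leibniz rule gives $f' = m\,p^{m-1}p'\,g + p^m g' = p^{m-1}(m\,p'g + p\,g')$. Hence $p^{m-1}$ divides $f'$, and the lemma reduces to proving that $p^m$ does \emph{not} divide $f'$, i.e.\ that $p \nmid (m\,p'g + p\,g')$. Since $p$ divides $p\,g'$, this is in turn equivalent to $p \nmid m\,p'\,g$.

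The next step is where the characteristic-zero hypothesis enters, and it is used twice. On the one hand, $m \geq 1$ implies that the image of $m$ in $K$ is nonzero, so $m$ is a unit of $K[X]$ and is irrelevant to divisibility by $p$. On the other hand, a prime element of $K[X]$ is non-constant, so writing $p = a_d X^d + \cdots$ with $d = \deg p \geq 1$ and $a_d \neq 0$, we get $p' = d\,a_d X^{d-1} + \cdots$ with $d\,a_d \neq 0$ because $\mathrm{char}(K) = 0$; thus $p'$ is a nonzero polynomial with $\deg p' = d - 1 < \deg p$, so $p$ cannot divide $p'$. Since $p$ is prime and divides none of $m$, $p'$, $g$, it does not divide the product $m\,p'\,g$, which is exactly what was needed.

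Putting the pieces together, $p^{m-1}$ divides $f'$ while $p^m$ does not; in particular $m\,p'g + p\,g' \neq 0$, so $f' \neq 0$ and the right-hand side of the stated identity is well defined and equals $m - 1$, which is the claim. I do not expect a genuine obstacle here, since the statement is elementary; the only point requiring care is to make every appeal to $\mathrm{char}(K) = 0$ explicit, as it is precisely what guarantees both that the multiplicity $m$ survives as a unit and that $\deg p' = \deg p - 1$ — both of which fail in positive characteristic (e.g.\ for $p = X^q$ over a field of characteristic $q$), so the lemma is genuinely false without that hypothesis.
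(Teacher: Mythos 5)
Your proof is correct and follows essentially the same route as the paper's: write $f = p^m g$ with $p \nmid g$, differentiate via Leibniz to get $f' = p^{m-1}(m\,p'g + p\,g')$, and use characteristic zero twice (so that $m$ is a unit and so that $p'$ is a nonzero polynomial of degree $\deg p - 1$, hence not divisible by $p$). The paper phrases the last step as a proof by contradiction rather than directly, but the substance is identical.
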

\begin{proof}
Write $f = p^\alpha g$ for some $g \in K[X]$ such that $p$ does not divide $g$. In addition, one has:
\[
f' = p^\alpha g + \alpha p^{\alpha - 1} p' g = p^{\alpha - 1} (p g' + \alpha p' g) \,.
\]
To prove our claim, it is enough to show that $p$ does not divide
$p g' + \alpha p' g$. Assume that $p$ divides $p g' + \alpha p'
g$.
Then, $p$ also divides $\alpha p' g$. Since $K$ has characteristic 0,
then $\alpha \neq 0$ in $K$, implying that $p$ divides $p' g$. But $p$
is a prime element which does not divide $g$, hence $p$ must divide
$p'$. Finally, $p'$ has to be the zero polynomial, which contradicts
the fact that $K$ is a field of characteristic 0.
\end{proof}
\mohab{\sout{Let $K$ be a field of characteristic 0. We briefly explain how to
obtain the square-free decomposition $g_1^1 g_2^2 \dots g_n^n$, with
normalized pairwise coprime square-free polynomials
$g_1, g_2,\dots, g_n$. Thanks to the above lemma and
Lemma~\ref{th:sfexistence}~(ii), the greatest common divisor of
$g_1^1 g_2^2 \dots g_n^n$ and $(g_1^1 g_2^2 \dots g_n^n)'$ is
$g_2^1 \dots g_n^{n-1}$. Dividing $g_1^1 g_2^2 \dots g_n^n$ by
$g_2^1 \dots g_n^{n-1}$ yields the product $g_1 g_2 \dots
g_n$.
Applying the same procedure to $g_2^1 \dots g_n^{n-1}$ allows to
obtain the product $g_2 \dots g_n$. Finally, one obtains the product
list $g_1 \dots g_n$, $g_2 \dots g_n$, $g_{n-1} g_n$, $g_n$ and
successive divisions yield the polynomials $g_1, \dots, g_n$.}}
}\fi

{
We recall the following useful classical bounds.
\begin{lemma}{~\cite[Corollary~10.12]{BPR06}}
\label{th:bitdiv}
If $p \in \Z[X]$ and $q \in \Z[X]$ divides $p$ in $\Z[X]$, then one has $\tau(q) \leq \deg q + \tau(p) + \log_2(\deg p + 1)$.
\end{lemma}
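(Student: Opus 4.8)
We may assume $p\neq 0$ (otherwise the statement is vacuous or false), so that $q\neq 0$ as well. The plan is to deduce the bound from the classical Landau--Mignotte inequalities, organized around the \emph{Mahler measure}. For a nonzero $g=c\prod_i(X-\gamma_i)\in\C[X]$ with complex roots $\gamma_i$ and leading coefficient $c$, set $M(g):=|c|\prod_i\max(1,|\gamma_i|)$, and write $\|g\|_2$ for the Euclidean norm of its coefficient vector. I would rely on two facts. First, \emph{Mignotte's coefficient bound}: each coefficient $c_j$ of $g$ equals, up to sign, $c$ times an elementary symmetric function of the $\gamma_i$, hence $|c_j|\le\binom{\deg g}{j}|c|\prod_i\max(1,|\gamma_i|)=\binom{\deg g}{j}M(g)$; taking the maximum over $j$ gives $\|g\|_\infty\le\binom{\deg g}{\lfloor\deg g/2\rfloor}M(g)\le 2^{\deg g}M(g)$. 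Second, \emph{Landau's inequality}: $M(g)\le\|g\|_2$.

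Next I would compare $M(q)$ with $M(p)$. Writing $p=qr$ with $r\in\Z[X]\setminus\{0\}$, the roots of $p$ are those of $q$ together with those of $r$ and leading coefficients multiply, so the Mahler measure is multiplicative: $M(p)=M(q)\,M(r)$. Since $r$ has integer coefficients and is nonzero, $M(r)\ge|\mathrm{lc}(r)|\ge 1$, whence $M(q)\le M(p)$. Combining this with the two facts above and the elementary estimate $\|p\|_2\le\sqrt{\deg p+1}\,\|p\|_\infty$, one gets
\[
\|q\|_\infty\;\le\;2^{\deg q}M(q)\;\le\;2^{\deg q}M(p)\;\le\;2^{\deg q}\|p\|_2\;\le\;2^{\deg q}\sqrt{\deg p+1}\,\|p\|_\infty .
\]
Taking $\log_2$ and adding $1$ yields $\tau(q)\le\deg q+\tfrac12\log_2(\deg p+1)+\tau(p)$, which is slightly stronger than the claimed bound $\tau(q)\le\deg q+\tau(p)+\log_2(\deg p+1)$.

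The only step that is not a routine manipulation — and the one I expect to be the main obstacle — is Landau's inequality $M(g)\le\|g\|_2$. I would prove it by a root-flipping argument: if $|\gamma|>1$ is a root and $g=(X-\gamma)h$, a direct term-by-term expansion shows that $\|(X-\gamma)h\|_2=\|(\overline{\gamma}X-1)h\|_2$, and $(\overline{\gamma}X-1)h$ has $1/\overline{\gamma}$ (of modulus $<1$) in place of $\gamma$, all other roots unchanged, and leading coefficient multiplied by $\overline{\gamma}$. Iterating over the (finitely many, counted with multiplicity) roots of modulus $>1$ produces a polynomial $\widehat g$ with $\|\widehat g\|_2=\|g\|_2$, all roots in the closed unit disc, and $|\mathrm{lc}(\widehat g)|=|\mathrm{lc}(g)|\prod_{|\gamma_i|>1}|\gamma_i|=M(g)$; since a leading coefficient is one entry of the coefficient vector, $M(g)=|\mathrm{lc}(\widehat g)|\le\|\widehat g\|_2=\|g\|_2$. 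Alternatively, since the statement is exactly \cite[Corollary~10.12]{BPR06}, one may simply invoke its proof verbatim.
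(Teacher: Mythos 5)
Your argument is correct: the paper gives no proof of this lemma, merely citing \cite[Corollary~10.12]{BPR06}, and your Mahler-measure derivation (the binomial coefficient bound $\|q\|_\infty\le 2^{\deg q}M(q)$, multiplicativity of $M$ together with $M(r)\ge|\mathrm{lc}(r)|\ge1$ for the integer cofactor, Landau's inequality $M(p)\le\|p\|_2$ via root flipping, and $\|p\|_2\le\sqrt{\deg p+1}\,\|p\|_\infty$) is precisely the standard Landau--Mignotte proof of that corollary. Each step checks out, and you even land on the slightly sharper constant $\tfrac12\log_2(\deg p+1)$ in place of $\log_2(\deg p+1)$.
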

\if{
\begin{lemma}{~\cite[Corollary 11.11]{Gathen99}}
\label{th:costgcd}
Let $f, g \in \Z[X]$ with degree at most $n$ with coefficient bitsize upper bounded by $\tau$. Then the gcd of $f$ and $g$ can be computed using an expected number of $ \bigotilde{(n^2 + n \tau )}$ boolean operations.
\end{lemma}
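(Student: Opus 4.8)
The plan is to establish this as the running time of the classical small-primes modular gcd algorithm, bounding each of its four phases by $\bigotilde{(n^2 + n\tau)}$ bit operations. First I would reduce to primitive inputs: dividing $f$ and $g$ by their integer contents costs $\bigotilde{(n\tau)}$ bit operations (two gcds of $n+1$ integers of bitsize at most $\tau$) and changes neither $\deg h$ nor, up to a constant factor, $\tau(h)$, where $h := \gcd(f,g)$. Since $h$ divides $f$ in $\Z[X]$ and $\deg h \le n$, Lemma~\ref{th:bitdiv} gives $\tau(h) = \bigo{(n+\tau)}$; hence it is enough to recover $h$ from its images modulo a family of primes whose product has bitsize $\Theta(n+\tau)$.

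Next I would set $R := \mathrm{lc}(f)\,\mathrm{lc}(g)\,\mathrm{Res}(f/h, g/h)$, a fixed nonzero integer whose bitsize is $\bigotilde{(n^2 + n\tau)}$ by Hadamard's bound on the Sylvester determinant together with Lemma~\ref{th:bitdiv}, and fix a prime bitsize $\ell := \Theta(\log(n+\tau))$ large enough that there are at least $2\,\tau(R)$ primes of bitsize at most $\ell$. Drawing primes $p$ uniformly at random from this set, for each one I would compute $h_p := \gcd(\bar f, \bar g)$ in $\mathbb{F}_p[X]$ by the fast Euclidean algorithm, at a cost of $\bigotilde{(n)}$ operations in $\mathbb{F}_p$, i.e.\ $\bigotilde{(n\ell)}$ bit operations. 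Call $p$ \emph{lucky} when $p \nmid R$; then $\deg h_p = \deg h$ and $h_p$ is an associate of $\bar h$, whereas any prime that does not divide $\mathrm{lc}(f)\mathrm{lc}(g)$ but is unlucky gives $\deg h_p > \deg h$, so, after discarding the $\bigo{(\tau)}$ primes dividing $\mathrm{lc}(f)\mathrm{lc}(g)$, the images of minimal degree among those drawn are exactly the lucky ones, and a random $p$ is lucky with probability at least $1/2$. Normalizing each retained $h_p$ so that its leading coefficient equals $\mathrm{lc}(f) \bmod p$, I would keep drawing until the product of the retained (minimal-degree) primes exceeds $2^{c(n+\tau)}$ for a suitable constant $c$; this needs $\bigotilde{((n+\tau)/\ell)}$ primes, so, absorbing the expected constant resampling overhead, the modular phase costs $\bigotilde{(n\ell)} \cdot \bigotilde{((n+\tau)/\ell)} = \bigotilde{(n^2+n\tau)}$ bit operations, the factor $\ell$ cancelling.

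Then I would reconstruct an integer polynomial $\tilde h$ from the retained normalized images by Chinese remaindering, coefficient by coefficient: each of the $\le n+1$ coefficients is recovered from $\bigotilde{((n+\tau)/\ell)}$ residues in $\bigotilde{(n+\tau)}$ bit operations by the fast CRT, for a total of $\bigotilde{(n^2+n\tau)}$; dividing $\tilde h$ by its content gives the candidate gcd. Finally I would certify it by trial division with fast polynomial arithmetic: testing $\tilde h \mid f$ and $\tilde h \mid g$ in $\Z[X]$ amounts to $\bigotilde{(n)}$ operations on integers of bitsize $\bigotilde{(n+\tau)}$, hence $\bigotilde{(n^2+n\tau)}$ bit operations; if both divisions are exact then $\tilde h$ is an associate of $h$ (its degree equals $\deg h$ by construction, and it divides both $f$ and $g$), and otherwise --- an event of probability exponentially small in $n+\tau$ by the choice of the prime set --- I would restart, which multiplies the expected cost only by $1+o(1)$. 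Summing the four phases gives the claimed $\bigotilde{(n^2+n\tau)}$ expected bit-operation count. The main obstacle is the bookkeeping around unlucky primes and normalization: one must bound $\tau(R)$ so that a polynomially sized prime set makes a random prime lucky with constant probability, and one must choose the leading-coefficient normalization and the final content removal so that the modular images and the reconstructed polynomial never exceed bitsize $\bigotilde{(n+\tau)}$ --- this is precisely \cite[Corollary~11.11]{Gathen99}, of which the above is a sketch.
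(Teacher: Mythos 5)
You should first know that the paper does not prove this statement at all: it is quoted directly from \cite[Corollary~11.11]{Gathen99}, and in the source this lemma is in fact commented out and unused --- the complexity analysis of Algorithm~$\univsosone$ only invokes the square-free decomposition cost (Lemma~\ref{th:costsqrfree}), likewise cited from \cite{Gathen99}. So there is no in-paper argument to compare against; what you have written is a reconstruction of the textbook proof behind the citation, and it does follow the intended route: reduction to primitive inputs, the bound $\tau(\gcd(f,g))=\bigo{(n+\tau)}$ via Lemma~\ref{th:bitdiv}, control of unlucky primes through the resultant bound $\tau(R)=\bigotilde{(n^2+n\tau)}$, fast Euclidean gcd in $\mathbb{F}_p[X]$ per prime, Chinese remaindering, and Las Vegas certification by trial division are exactly the ingredients of the small-primes modular gcd analysis.

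One accounting step is missing and would, as written, break the bound: you never charge for reducing $f$ and $g$ modulo the chosen primes. Done prime by prime, this costs $\bigotilde{(n\tau)}$ per prime, hence $\bigotilde{\bigl(n\tau(n+\tau)/\ell\bigr)}$ in total, which is not within $\bigotilde{(n^2+n\tau)}$ (take $\tau\approx n$ to see $n^3$ appear). To stay within the claimed bound you must use fast simultaneous modular reduction (a remainder tree) in the downward direction, not only fast CRT upward: each coefficient of bitsize $\tau$ is reduced modulo all $\bigotilde{((n+\tau)/\ell)}$ primes at once in $\bigotilde{(n+\tau)}$ bit operations, giving $\bigotilde{(n^2+n\tau)}$ for this phase, which is precisely how \cite{Gathen99} organizes the computation. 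With that fix, your sketch matches the cited result.
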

}\fi
The algorithm of Yun~\cite{Yun76} (also described in~\cite[Algorithm~14.21]{Gathen99}) allows to compute a square-free decomposition of polynomials with coefficients in a field of characteristic 0. 
%
\begin{lemma}{~\cite[\S~11.2]{Gathen99}}
\label{th:costsqrfree}
Let $f \in \Z[X]$ with degree at most $n$ with coefficient bitsize upper bounded by $\tau$. Then the square-free decomposition of $f$ using Yun's Algorithm~\cite{Yun76} can be computed using an expected number of $\bigotilde{(n^2 \tau )}$ boolean operations.
\end{lemma}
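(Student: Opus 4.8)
The statement to be proved is the standard complexity bound: for $f \in \Z[X]$ of degree at most $n$ with coefficient bitsize at most $\tau$, Yun's algorithm computes the square-free decomposition using $\bigotilde{(n^2\tau)}$ boolean operations. Since this is quoted verbatim from \cite[\S~11.2]{Gathen99} (or \cite[Algorithm~14.21]{Gathen99}), the honest ``proof'' is mostly a matter of recalling the algorithm and tracking the dominant cost; I would present it as a sketch, since a self-contained derivation essentially reproduces the textbook analysis.

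First I would recall the structure of Yun's algorithm. Writing $f = a\,g_1 g_2^2 \cdots g_n^n$ with the $g_i$ normalized, pairwise coprime and square-free, one sets $u = \gcd(f,f')$, $v_1 = f/u$, $w_1 = f'/u$, and then iterates: at step $k$ one computes $h_k = \gcd(v_k, w_k - v_k')$, which equals $g_k$, and updates $v_{k+1} = v_k/h_k$, $w_{k+1} = (w_k - v_k')/h_k$. The loop runs for at most $n$ steps, and the crucial point is that $\deg v_k = \sum_{i \ge k}\deg g_i$, so the degrees $\deg v_k$ are strictly decreasing after the square-free part is peeled off; more importantly $\sum_k \deg v_k \le \deg f \le n$ and likewise the total degree handled across all iterations is $\bigo{(n)}$. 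So the work is dominated by \emph{one} gcd computation and a bounded number of exact divisions and derivative computations on polynomials whose degrees sum to $\bigo{(n)}$ and whose coefficients, by Lemma~\ref{th:bitdiv} applied to the divisors of $f$, have bitsize $\bigo{(n+\tau+\log n)} = \bigotilde{(n + \tau)}$.

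Next I would invoke the cost of the primitive operations. Each $\gcd$ of two integer polynomials of degree $\le n$ with coefficients of bitsize $\bigotilde{(n+\tau)}$ costs $\bigotilde{(n^2(n+\tau))}$ boolean operations by fast (half-gcd based) Euclidean algorithms over $\Z$ — but here is the subtlety that gives the sharper bound: the sum of the \emph{inputs' degrees} over all the gcd's in Yun's loop telescopes, and with the coefficient bitsizes kept under control, the aggregate cost of all gcd's, exact divisions and derivatives is $\bigotilde{(n^2\tau)}$ rather than $\bigotilde{(n^3)}$ — this is precisely the refined amortized analysis of \cite[Thm.~14.23 / \S~11.2]{Gathen99}. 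The key input is that the \emph{final} square-free factors are divisors of $f$, so their coefficient bitsizes are controlled by Lemma~\ref{th:bitdiv}, and the intermediate quantities stay within the same bitsize bound up to logarithmic factors.

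The main obstacle — and the reason I would lean on the citation rather than reprove it — is exactly this amortization: a naive bound multiplies ``$n$ iterations'' by ``$\bigotilde{(n^2\tau)}$ per gcd'' and yields only $\bigotilde{(n^3\tau)}$. Getting $\bigotilde{(n^2\tau)}$ requires observing that the degrees of the polynomials fed to the successive gcd's sum to $\bigo{(n)}$, and that modular/fast-arithmetic techniques make each gcd cost quasi-linear in (degree $\times$ bitsize); combining these and summing is where all the care lies. Since this is a textbook result, I would state it as: ``This follows from the analysis of Yun's algorithm in \cite[\S~11.2 and Algorithm~14.21]{Gathen99}, using Lemma~\ref{th:bitdiv} to bound the bitsize of the computed square-free factors,'' and only sketch the telescoping argument above rather than reproduce the full analysis.
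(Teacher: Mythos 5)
The paper gives no proof of this lemma at all---it is stated purely as a citation to \cite[\S~11.2]{Gathen99}---so your approach of sketching Yun's algorithm, noting the telescoping of degrees across iterations and the bitsize control on divisors via Lemma~\ref{th:bitdiv}, and then deferring to the textbook for the amortized analysis is entirely consistent with (and more informative than) what the paper does. The sketch is sound; the only minor quibble is that your per-gcd cost estimates are stated slightly inconsistently ($\bigotilde{(n^2(n+\tau))}$ in one place, $\bigotilde{(n^2\tau)}$ in another), but since you correctly identify that the amortization over the loop is what yields the final $\bigotilde{(n^2\tau)}$ bound, this does not affect the conclusion.
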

}
%
%
\if{
The computational cost of this procedure to obtain a square-free
decomposition is the same for any field extension of $K$.  
\mohab{\bf
  todo: donner le co\^ut de ce calcul car on en a besoin pour analyser
  les complexit\'es; pour le coup, quand ce sera fait on utilisera le
  lemme qui donne les complexit\'es (runtime + taille de la sortie).}
}\fi
{
\begin{lemma}{~\cite[\S~6.3.1]{Mignotte1992}}
\label{th:sfextend}
Let $K$ be a field of characteristic 0 and $L$ a field extension of
$K$. The square-free decomposition in $L[X]$ of any polynomial
$f \in K[X]-\{0\}$ is the same as the square-free
decomposition of $f$ in $K[X]$. Any polynomial $f \in K[X]-\{0\}$ which is a square-free
polynomial in $K[X]$ is also a square-free polynomial in $L[X]$.
\end{lemma}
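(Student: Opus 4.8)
The plan is to reduce both assertions to two standard stability facts together with uniqueness of the square-free decomposition. The first fact is that for $p,q\in K[X]$, not both zero, the monic greatest common divisor of $p$ and $q$ is the same whether computed in $K[X]$ or in $L[X]$; in particular $p$ and $q$ are coprime in $K[X]$ iff they are coprime in $L[X]$. I would prove this either by remarking that the Euclidean remainder sequence of $p$ and $q$ only involves field operations in $K$ and is therefore unchanged over $L$, or, more structurally, by letting $d\in K[X]$ be the monic $\gcd$ in $K[X]$ and writing a Bézout identity $d=up+vq$ with $u,v\in K[X]\subseteq L[X]$: then over $L[X]$ one still has $d\mid p$, $d\mid q$, and any common divisor of $p,q$ in $L[X]$ divides $d$, so $d$ is again the monic $\gcd$ in $L[X]$. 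The second fact, which is the only place where characteristic $0$ is used, is the classical criterion that a nonzero $g$ over a field of characteristic $0$ is square-free (in the sense of this paper: no $p^{2}$ with $p$ prime divides $g$) iff $\gcd(g,g')=1$. For the nontrivial implication I would argue: if $\gcd(g,g')\neq1$, pick a prime $p\mid\gcd(g,g')$ and let $k\geq1$ be the multiplicity of $p$ in $g$, so $g=p^{k}m$ with $p\nmid m$ and $g'=p^{k-1}(kp'm+pm')$; if $k=1$, then $p\mid g'$ forces $p\mid kp'm$, hence $p\mid p'$ (using $k\neq0$ and $p\nmid m$), which is impossible since $p'\neq0$ and $\deg p'<\deg p$ in characteristic $0$; therefore $k\geq2$, i.e. $p^{2}\mid g$. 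The converse is the direct computation $g=p^{2}h\Rightarrow g'=p(2p'h+ph')$.

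Granting these facts, the second assertion of the lemma is immediate: $f'$ is literally the same polynomial in $K[X]$ and in $L[X]$, and $L$ also has characteristic $0$, so $f$ is square-free in $K[X]$ $\iff\gcd_{K[X]}(f,f')=1\iff\gcd_{L[X]}(f,f')=1\iff f$ is square-free in $L[X]$. For the first assertion I would start from the square-free decomposition $f=a\,g_{1}g_{2}^{2}\cdots g_{n}^{n}$ of $f$ in $K[X]$, with $a\in K$ the leading coefficient and $g_{1},\dots,g_{n}$ normalized, pairwise coprime and square-free in $K[X]$. Reading this identity inside $L[X]$: $a\in L$, each $g_{i}$ stays square-free in $L[X]$ by the previous paragraph, and the $g_{i}$ stay pairwise coprime in $L[X]$ by the $\gcd$-stability fact; hence $f=a\,g_{1}g_{2}^{2}\cdots g_{n}^{n}$ is a square-free decomposition of $f$ in $L[X]$. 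I would then invoke uniqueness of the square-free decomposition over any field — obtained by passing to an algebraic closure $\Omega$ of $L$ and observing that $g_{i}$ is forced to be the product of $X-\alpha$ over the roots $\alpha$ of $f$ in $\Omega$ of multiplicity exactly $i$ — to conclude that the decomposition just exhibited is the square-free decomposition of $f$ in $L[X]$, which is exactly the claim.

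The substantive content is confined to the two stability facts; everything else is bookkeeping, and no existence statement over $L[X]$ has to be imported separately since the decomposition is exhibited explicitly. The one point that needs care is the reconciliation of the ring-theoretic notion of ``square-free'' used here with the computational criterion $\gcd(g,g')=1$: this equivalence genuinely relies on characteristic $0$ (it fails, e.g., for $X^{p}-t$ over $\mathbb{F}_{p}(t)$), so I would make sure to invoke that hypothesis precisely at that step and nowhere else — $\gcd$-stability under field extension holds in all characteristics and needs no such assumption.
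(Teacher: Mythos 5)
Your proof is correct. Note that the paper itself offers no argument for this lemma: it is stated as a quoted classical fact with a reference to Mignotte's book, so there is no in-paper proof to compare against. Your route is the standard one and all the pieces fit together: $\gcd$-stability under field extension (valid in any characteristic, via Bézout or the Euclidean remainder sequence), the characteristic-zero criterion that $g$ is square-free iff $\gcd(g,g')=1$ (your multiplicity computation $g'=p^{k-1}(kp'm+pm')$ is the right one, and you correctly isolate where characteristic $0$ enters, namely in concluding $p\nmid p'$ from $p'\neq0$ and $\deg p'<\deg p$), and uniqueness of the square-free decomposition via the root-multiplicity characterization over an algebraic closure $\Omega$ of $L$. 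One point worth making explicit in a write-up: the uniqueness step needs the fact that a square-free decomposition in $L[X]$ remains one in $\Omega[X]$, which is again your two stability facts applied to the extension $L\subseteq\Omega$ — so the logic is not circular, but the dependence should be stated. You also correctly observe that no separate existence statement over $L[X]$ is needed, since the decomposition inherited from $K[X]$ is exhibited explicitly.
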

}
  \if{
\begin{lemma}
\label{th:sfextend2}
Let $K$ be a field of characteristic 0 and $L$ a field extension of
$K$. Any polynomial $f \in K[X]\mohab{-\{0\}}$ which is a square-free
polynomial in $K[X]$ is also a square-free polynomial in $L[X]$.
\end{lemma}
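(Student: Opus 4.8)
The plan is to reduce both assertions to the classical characterization of square-freeness in characteristic $0$ via the gcd with the derivative, together with the fact that such a gcd is unaffected by field extension.

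First I would recall, and reprove in two lines, that for any field $F$ of characteristic $0$ and any $g \in F[X]\setminus\{0\}$, $g$ is square-free in $F[X]$ if and only if $\gcd(g,g') \in F\setminus\{0\}$. The only nonobvious implication uses characteristic $0$: if some prime $p \in F[X]$ divided both $g$ and $g'$, write $g = p^k h$ with $p \nmid h$ and $k \geq 1$; then $g' = p^{k-1}(k p' h + p h')$, so if $k = 1$ then $p \mid p'h$, whence $p \mid p'$ (as $p$ is prime with $p \nmid h$), which is impossible because $p' \neq 0$ and $\deg p' < \deg p$; therefore $k \geq 2$ and $p^2 \mid g$.

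Second I would use that, for $K \subseteq L$ and $f_1,f_2 \in K[X]$, the monic gcd of $f_1$ and $f_2$ is the same whether computed in $K[X]$ or in $L[X]$: the Euclidean algorithm performs only operations in the coefficient field, so its run is identical over $K$ and over $L$ (equivalently, $f_1K[X]+f_2K[X]$ generates $f_1L[X]+f_2L[X]$ after base change). Since the derivative of $f$ does not depend on the ambient field, applying this to $f_1 = f$, $f_2 = f'$ immediately yields the second assertion: if $\gcd_{K[X]}(f,f') \in K\setminus\{0\}$ then $\gcd_{L[X]}(f,f') \in L\setminus\{0\}$, i.e.~$f$ square-free over $K$ implies $f$ square-free over $L$.

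Third, for the statement about the full decomposition, I would start from a square-free decomposition $f = a\,g_1^1 g_2^2\cdots g_n^n$ in $K[X]$. By the previous step each $g_i$ is square-free in $L[X]$; the $g_i$ remain monic; and they stay pairwise coprime in $L[X]$ because a Bézout relation $1 = u g_i + v g_j$ with $u,v\in K[X]$ is a fortiori one in $L[X]$. Hence the very same factorization is a square-free decomposition of $f$ in $L[X]$, and by uniqueness of square-free decompositions (with the monic normalization) it is the square-free decomposition of $f$ in $L[X]$. An equivalent route is to observe that Yun's algorithm~\cite{Yun76} only differentiates, computes gcds and performs exact divisions in the coefficient field, so run on $f$ it returns literally the same output over $K$ as over $L$. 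The only steps needing genuine care are the characteristic-$0$ criterion for square-freeness and the appeal to uniqueness of the square-free decomposition; both are classical, and everything else is bookkeeping about which operations survive base change.
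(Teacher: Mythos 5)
Your proof is correct. Note that the paper does not actually prove this lemma: it is stated (as part of Lemma~\ref{th:sfextend}) with a citation to Mignotte's book, so there is no in-paper argument to compare against. Your route --- the characteristic-$0$ criterion that $f$ is square-free iff $\gcd(f,f')$ is a nonzero constant, combined with the invariance of the monic gcd under base change from $K[X]$ to $L[X]$ --- is the standard proof of the cited fact, and all the steps check out: the computation $g'=p^{k-1}(kp'h+ph')$ with $p'\neq 0$ in characteristic $0$ gives the nontrivial implication, the easy implication ($p^2\mid g\Rightarrow p\mid\gcd(g,g')$) is what you need on the $L$ side, and the Euclidean-algorithm argument for gcd invariance is sound. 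The third paragraph, on transporting the full square-free \emph{decomposition} to $L[X]$, goes beyond the statement you were asked to prove (it corresponds to the first assertion of Lemma~\ref{th:sfextend}) but is also correct, via preservation of B\'ezout relations and uniqueness of the normalized decomposition.
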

}\fi

The following lemma allows to obtain upper bounds on the magnitudes of
the roots of a univariate polynomial.
%
 %
\begin{lemma}{\textbf{(Cauchy Bound~\cite{Cauchy1830})}}
\label{th:cauchy}
Let $K$ be an ordered field. Let $a_0, \dots, a_n \in K$ with
$a_n \neq 0$. Let $x \in K$ such that $\sum_{i=0}^n a_i x^i =
0$. Then, one has:
\[
|x| \leq \max \left\{ 1, \dfrac{|a_0|}{|a_n|} + \dots + \dfrac{|a_{n-1}|}{|a_n|} \right\} \,.
\]
\end{lemma}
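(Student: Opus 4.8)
The plan is to prove the Cauchy bound by a direct estimate, isolating the leading term of the polynomial equation and bounding the remaining terms by a geometric-type argument. First I would dispose of the trivial case: if $|x| \leq 1$, then $|x| \leq \max\{1, \dots\}$ holds immediately since the maximum is at least $1$, and we are done. So I would assume $|x| > 1$ from now on, which is the only case requiring work.

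Next, starting from $\sum_{i=0}^n a_i x^i = 0$ and $a_n \neq 0$, I would isolate the top term to write $a_n x^n = -\sum_{i=0}^{n-1} a_i x^i$, take absolute values (using that $K$ is an ordered field, so absolute values and the triangle inequality are available), and obtain $|a_n|\,|x|^n \leq \sum_{i=0}^{n-1} |a_i|\,|x|^i$. Since $|x| > 1$, each power $|x|^i$ for $i \leq n-1$ is at most $|x|^{n-1}$, so the right-hand side is bounded by $\left(\sum_{i=0}^{n-1} |a_i|\right)|x|^{n-1}$. Dividing both sides by $|a_n|\,|x|^{n-1}$, which is strictly positive, yields $|x| \leq \sum_{i=0}^{n-1} \frac{|a_i|}{|a_n|}$, and hence $|x| \leq \max\left\{1, \sum_{i=0}^{n-1} \frac{|a_i|}{|a_n|}\right\}$, as claimed.

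The argument is entirely elementary and there is no real obstacle; the only minor subtlety is making sure everything is phrased so that it works over an arbitrary ordered field rather than just $\R$ — in particular one should only use the order axioms and the definition of $|\cdot|$, avoiding any appeal to completeness, limits, or the existence of $n$-th roots. The case split on whether $|x| \leq 1$ or $|x| > 1$ is precisely what makes the bound of the form $\max\{1, \cdot\}$ natural, and it is also what lets the crude estimate $|x|^i \leq |x|^{n-1}$ go through.
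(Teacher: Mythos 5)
Your proof is correct and follows essentially the same route as the paper's: isolate the leading term, apply the triangle inequality, bound $|x|^i$ by $|x|^{n-1}$ when $|x|\geq 1$, and divide by $|a_n|\,|x|^{n-1}$. The case split on $|x|\leq 1$ versus $|x|>1$ and the care about using only ordered-field axioms are both fine and match the intended argument.
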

For polynomial with integer coefficients, one has the following:
\begin{lemma}{~\cite{Mignotte1992}}
\label{th:cauchyint}
Let $f \in \Z[X]$ of degree $n$, with coefficient bitsize upper bounded by $\tau$. If $f(x) = 0$ and $x \neq 0$, then $\frac{1}{2^\tau + 1} \leq |x| \leq 2^\tau + 1$.
\end{lemma}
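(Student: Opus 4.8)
The plan is to obtain both bounds from a single elementary Cauchy-type estimate: the upper bound by applying it to $f$ itself, and the lower bound by applying it to the reciprocal polynomial of $f$.

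For the upper bound, I would write $f = \sum_{i=0}^n b_i X^i$ with $b_i \in \Z$ and $b_n \neq 0$; then $|b_n| \ge 1$, and since $\tau(\|f\|_\infty) \le \tau$ we have $|b_i| \le \|f\|_\infty \le 2^{\tau-1}$ for all $i$. Assuming $|x| > 1$ (otherwise the bound is immediate), I would start from $b_n x^n = -\sum_{i=0}^{n-1} b_i x^i$, take absolute values, and bound the right-hand side by $\|f\|_\infty \sum_{i=0}^{n-1} |x|^i$. Using the geometric series identity this gives
\[
|b_n|\,|x|^n \;\le\; \|f\|_\infty\,\frac{|x|^n-1}{|x|-1}\;<\;\|f\|_\infty\,\frac{|x|^n}{|x|-1},
\]
so that $|x|-1 < \|f\|_\infty/|b_n| \le 2^{\tau-1}$, hence $|x| < 2^{\tau-1}+1 \le 2^{\tau}+1$. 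Equivalently, one may invoke the Cauchy bound of Lemma~\ref{th:cauchy}, in the sharp form $|x| < 1 + \max_{i<n}|b_i/b_n|$.

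For the lower bound, I would pass to the reciprocal polynomial $g := X^n f(1/X) = \sum_{j=0}^n b_{n-j} X^j \in \Z[X]$. It is nonzero because $f\neq 0$, has $\|g\|_\infty = \|f\|_\infty$ and hence coefficient bitsize at most $\tau$, and its degree $n - v$ (with $v$ the lowest index of a nonzero coefficient of $f$) may be strictly smaller than $n$ — but the estimate above never used that the degree is exactly $n$. Since $x \ne 0$ and $f(x)=0$, we get $g(1/x) = x^{-n}f(x) = 0$ with $1/x \ne 0$, so the upper bound applied to $g$ yields $|1/x| < 2^\tau + 1$, i.e. $|x| > 1/(2^\tau+1)$, which is the claim.

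I do not expect a genuine obstacle here: the only points requiring care are keeping the bitsize convention straight (so that the slack in $2^\tau+1$ comfortably absorbs the additive constants coming from the Cauchy estimate) and checking that the reciprocal-polynomial argument survives the case $b_0 = 0$, where the degree merely drops and the upper-bound derivation still applies verbatim.
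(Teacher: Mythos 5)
Your proof is correct: the Cauchy-type estimate gives $|x| < \|f\|_\infty/|b_n| + 1 \le 2^{\tau-1}+1$ for the upper bound, and applying it to the reciprocal polynomial $X^n f(1/X)$ (whose leading coefficient is the lowest nonzero coefficient of $f$, a nonzero integer, so the drop in degree is harmless) gives the lower bound. The paper does not prove this lemma — it cites it from Mignotte — and your argument is exactly the standard one behind that reference, consistent with the paper's bitsize convention $\tau(b)=\log_2|b|+1$, so there is nothing to object to.
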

{ The real (resp.~complex) roots of a polynomial can be
  approximated using root isolation techniques. To compute the real
  roots one can use algorithms based on Uspensky's method relying on
  Descartes's rule of sign, see e.g.~\cite[Chap. 10]{BPR06} for a
  general description of real root isolation algorithms.
\begin{lemma}{~\cite[Theorem~5]{Mehlhorn15}}
\label{th:rootisolation}
Let $f \in \Z[X]$ with degree at most $n$ with coefficient bitsize
upper bounded by $\tau$. Isolating intervals (resp.~disks) of radius
less than $2^{-\kappa}$ for all real (resp.~complex) roots of $f$ can
be computed in $\bigotilde{(n^3 + n^2 \tau + n \kappa)}$ boolean
operations.
\end{lemma}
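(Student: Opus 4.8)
The statement is quoted from \cite[Theorem~5]{Mehlhorn15}, so within this paper no argument is needed; here is the route by which I would establish it. For the real roots the plan is to run the Descartes subdivision method accelerated by Newton iteration --- the \texttt{ANewDsc} algorithm. One may first replace $f$ by its squarefree part via Yun's algorithm (Lemma~\ref{th:costsqrfree}), at no asymptotic cost, and confine all real roots to an interval of radius $2^\tau+1$ using the Cauchy bound (Lemma~\ref{th:cauchyint}). Then one repeatedly bisects: for a candidate subinterval, apply the associated M\"obius transformation together with Descartes' rule of signs to certify whether it contains $0$, $1$, or $\ge 2$ roots, inserting Newton steps to jump over long chains of bisections that merely approach a single cluster, and stopping each branch once its width drops below $2^{-\kappa}$.

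The analysis has two ingredients. First, I would bound the number of nodes of the subdivision tree: the Davenport--Mahler(--Mignotte) inequality lower-bounds the product of the pairwise distances between the roots of $f$, hence bounds $\sum_i \log_2 \frac{1}{\mathrm{sep}(\alpha_i)}$ by $\bigo{(n(\log n + \tau))}$, where $\mathrm{sep}(\alpha_i)$ denotes the distance from $\alpha_i$ to the nearest other root; together with the stopping precision this yields a tree of size $\bigotilde{(n(\tau+\kappa))}$. Second, I would bound and amortize the cost per node: each Descartes test is essentially a Taylor shift of a degree-$n$ polynomial, carried out in approximate fixed-point arithmetic whose precision is chosen adaptively and becomes large (up to $\bigotilde{(n(\tau+\kappa))}$ bits) only near clustered roots; using fast polynomial arithmetic and summing the precision demands over the tree --- again controlled by the Davenport--Mahler quantity --- yields the claimed $\bigotilde{(n^3 + n^2\tau + n\kappa)}$.

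For the complex roots I would proceed analogously, replacing interval bisection by quadtree subdivision of a square in $\C$ containing all roots (of side $\bigo{(2^\tau)}$), and replacing the sign-variation count by the Pellet test --- a numerical version of Rouch\'e's theorem comparing $|a_k|r^k$ with $\sum_{i\neq k}|a_i|r^i$ on discs --- as the root-counting primitive; the \texttt{CIsolate}-type algorithm of Becker--Sagraloff--Sharma--Yap then achieves the same bound by the same two-step argument. The hard part throughout is the amortized bookkeeping: naively multiplying the worst-case per-node precision by the worst-case number of nodes overshoots by roughly a factor of $n$, and the substance of the cited work is precisely to show that high precision is needed only on a small, expensive part of the subdivision, so that the totals telescope to $\bigotilde{(n^3 + n^2\tau + n\kappa)}$.
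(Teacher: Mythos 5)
The paper gives no proof of this lemma: it is imported verbatim as \cite[Theorem~5]{Mehlhorn15}, so there is nothing internal to compare against. Your sketch is a legitimate route to the stated bound, but it is not the route of the cited reference. Mehlhorn, Sagraloff and Wang do not use subdivision at all; as their title indicates, they \emph{reduce} root isolation to approximate polynomial factorization: one calls Pan's near-optimal divide-and-conquer factorization to obtain approximations $\tilde z_1,\dots,\tilde z_n$ with $\lVert f - a\prod_i(X-\tilde z_i)\rVert$ small, and the substance of their paper is the certification step turning such an approximate factorization into certified isolating disks of the required radius, with the precision demanded of Pan's algorithm controlled by the root separation (again via Davenport--Mahler-type bounds). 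What you describe instead is the subdivision family: \texttt{ANewDsc} (Descartes plus Newton acceleration) for the real case and the Pellet-test quadtree algorithm of Becker--Sagraloff--Sharma--Yap for the complex case. These do achieve the same $\bigotilde{(n^3+n^2\tau+n\kappa)}$ bound, and your two-step analysis (tree size via Davenport--Mahler, amortized adaptive precision per node) is the correct skeleton for them --- though note that for the Newton-accelerated tree the node count is $\bigotilde{(n)}$ rather than $\bigotilde{(n(\tau+\kappa))}$; it is the \emph{sum of working precisions} over the tree, not the node count, that carries the $\tau$ and $\kappa$ dependence. The trade-off between the two routes: the factorization-based proof delegates the hard combinatorics to Pan's algorithm and is cleaner to state as a black-box reduction (which is presumably why the paper cites it), whereas the subdivision route is self-contained, more local, and is what practical implementations actually follow.
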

%
Vieta's formulas provide relations between the coefficients of a
polynomial and signed sums and products of the complex roots of this
polynomial:
\begin{lemma}{\textbf{(Vieta's formulas~\cite{girard1629})}}
\label{eq:vieta}
Let $K$ be an ordered field. Given a polynomial $f = \sum_{i=0^n} a_i X^i \in K[X]$ with $a_n \neq 0$ with (not necessarily distinct) complex roots $z_1, \dots, z_n$, one has for all $j=1,\dots,n$:
\[
\sum_{1 \leq i_1 < \dots < i_j \leq n} z_{i_1} \dots z_{i_j} = (-1)^j \frac{a_{n-j}}{a_n} \,.
\]
\end{lemma}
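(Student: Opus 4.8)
The plan is to deduce the formulas from the factorization of $f$ into linear factors followed by a comparison of coefficients; the ordered-field hypothesis is not needed here, it suffices that $K$ be a field, the roots $z_1,\dots,z_n$ being understood to lie in an algebraically closed extension of $K$ (concretely in $\C$ when $K\subseteq\R$). First I would write $f = a_n\prod_{k=1}^{n}(X-z_k)$: since the $z_k$ are the roots of $f$ listed with multiplicities and $\deg f = n$, this factorization holds in the chosen algebraically closed field (it is the fundamental theorem of algebra in the case $K\subseteq\R$). Dividing by $a_n\neq 0$ gives the monic identity $\sum_{i=0}^{n}\frac{a_i}{a_n}X^{i}=\prod_{k=1}^{n}(X-z_k)$.

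Next I would expand the right-hand side, proving by induction on $n$ the elementary-symmetric-function identity
\[
\prod_{k=1}^{n}(X-z_k)\;=\;\sum_{j=0}^{n}(-1)^{j}\Bigl(\,\sum_{1\leq i_1<\dots<i_j\leq n} z_{i_1}\cdots z_{i_j}\Bigr)X^{n-j}.
\]
The case $n=1$ is clear; for the induction step one multiplies the degree-$(n-1)$ identity by $(X-z_n)$ and sorts the resulting terms according to whether a given $j$-element index set $\{i_1<\dots<i_j\}\subseteq\{1,\dots,n\}$ omits $n$ (these come from the $X\cdot(\cdots)$ part) or contains $n$ (these come from the $-z_n\cdot(\cdots)$ part), which is exactly the additive recursion satisfied by the elementary symmetric polynomials.

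Finally I would read off, for each $j\in\{1,\dots,n\}$, the coefficient of $X^{n-j}$ on both sides of $\sum_{i=0}^{n}\frac{a_i}{a_n}X^{i}=\prod_{k=1}^{n}(X-z_k)$, using the displayed expansion on the right. This gives $\frac{a_{n-j}}{a_n}=(-1)^{j}\sum_{1\leq i_1<\dots<i_j\leq n} z_{i_1}\cdots z_{i_j}$, i.e.\ the asserted identity.

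I do not expect a genuine obstacle here: the only step needing a word of justification is the existence of the linear factorization $f=a_n\prod_k(X-z_k)$, and everything else is the routine induction above. The single bookkeeping point to get right is the sign: writing the factors as $(X-z_k)$ rather than $(z_k-X)$ is what produces the factor $(-1)^{j}$ on the $j$-th symmetric sum while leaving $a_n$ sign-free, which is what matches the normalization $a_{n-j}/a_n$ on the right-hand side of the statement.
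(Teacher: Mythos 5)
Your argument is correct and complete: the linear factorization over an algebraically closed extension, the inductive expansion via the recursion for elementary symmetric polynomials, and the coefficient comparison together give exactly the stated identity, with the sign bookkeeping handled properly. The paper itself offers no proof of this lemma --- it is cited as a classical fact due to Girard/Vieta --- so there is nothing to compare against; your derivation is the standard one, and your side remark that the ordered-field hypothesis is superfluous (only $a_n\neq 0$ and a splitting field are needed) is also accurate.
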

}
\if{
\begin{proof}
  Assume that $|x| \geq 1$. Then, one has
  $|a_n x^n| = | \sum_{i=0}^{n-1} a_i x^i | \leq \sum_{i=0}^{n-1}
  |a_i| \, |x|^i \leq \sum_{i=0}^{n-1} |a_i| \, |x|^{n-1} $.
  Dividing by $|a_n| \, |x|^{n-1}$ concludes the proof.
\end{proof}
}\fi
%


\section{Nichtnegativstellens\"atze with quadratic approximations}
\label{sec:method1}
\subsection{A proof of the existence of SOS decompositions}
\label{sec:proof1} 
\begin{lemma}
\label{th:parab}
Let $K$ be an ordered field. Let $g = a X^2 + b X + c \in K[X]$ with
$a,b,c \in K$ {and} $a \neq 0$. {Then, $g$ can be rewritten as}
$g = a\left (X + \frac{b}{2 a}\right )^2 + \left (c -
  \frac{b^2}{4a}\right )$.
{Moreover, }when $g$ is non-negative over $K$, one has $a > 0$ and
$c - \frac{b^2}{4a} \geq 0$.
\end{lemma}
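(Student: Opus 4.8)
The plan is to treat the two assertions separately, the first being a pure algebraic identity and the second an easy consequence of it together with the order structure of $K$.

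First I would verify the completing-the-square identity by direct expansion: since $a \neq 0$, the elements $\frac{b}{2a}$ and $\frac{b^2}{4a}$ lie in $K$, and expanding $a\left(X + \frac{b}{2a}\right)^2 + \left(c - \frac{b^2}{4a}\right)$ gives back $aX^2 + bX + c = g$. This step is routine and needs nothing beyond elementary field arithmetic.

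Next, assuming $g$ is non-negative on $K$, I would obtain the inequality $c - \frac{b^2}{4a} \geq 0$ simply by evaluating the identity at the point $-\frac{b}{2a} \in K$: the square term vanishes, so $g\!\left(-\frac{b}{2a}\right) = c - \frac{b^2}{4a}$, which is $\geq 0$ by hypothesis. Note that this already pins down the sign of the constant term $c - \frac{b^2}{4a}$ without using anything about the sign of $a$.

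Finally, to show $a > 0$, I would argue by contradiction. Suppose $a < 0$ and set $d := c - \frac{b^2}{4a} \geq 0$. The subtlety here is that $K$ is an arbitrary ordered field, so I cannot appeal to ``$g(x) \to -\infty$ as $|x| \to \infty$''; instead I would exhibit an explicit point where $g$ is negative. Put $e := -\frac{d}{a}$, which satisfies $e \geq 0$ since $d \geq 0$ and $a < 0$; then $t := e + 1 \in K$ satisfies $t^2 - e = e^2 + e + 1 > 0$, hence $t^2 > -\frac{d}{a}$, and multiplying by $a < 0$ reverses the inequality to give $a t^2 + d < 0$. Plugging $x := t - \frac{b}{2a} \in K$ into the completed-square form yields $g(x) = a t^2 + d < 0$, contradicting the non-negativity of $g$ over $K$. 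Hence $a > 0$, which completes the argument. The only genuinely non-trivial point — and the closest thing to an obstacle — is precisely this last step, where an archimedean or limiting argument must be replaced by the algebraic witness $t = -\frac{d}{a} + 1$, valid in every ordered field.
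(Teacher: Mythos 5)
Your proof is correct, and its first two parts (the completing-the-square identity and the evaluation at $-\frac{b}{2a}$ to get $c-\frac{b^2}{4a}\ge 0$) coincide exactly with the paper's. The only divergence is in how the contradiction from $a<0$ is reached. The paper observes that $a<0$ together with $d:=c-\frac{b^2}{4a}\ge 0$ forces $\left(x+\frac{b}{2a}\right)^2\le -\frac{1}{a}\,d$ for all $x\in K$, i.e.\ every square in $K$ is bounded by a single constant $C$, and then kills this with the substitutions $x=2$ (giving $4\le C$) and $x=C$ (giving $C^2\le C$, hence $C\le 1$). You instead produce an explicit point where $g$ is negative, namely $x=t-\frac{b}{2a}$ with $t=-\frac{d}{a}+1$, using $t^2-\bigl(-\frac{d}{a}\bigr)=e^2+e+1>0$. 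Both arguments are purely order-theoretic and valid over an arbitrary ordered field; yours has the small advantage of exhibiting a concrete witness of negativity (which is also slightly more constructive in flavour), while the paper's has the amusing byproduct that no ordered field has uniformly bounded squares. There is no gap in your argument.
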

\begin{proof}
  The decomposition of $g$ is straightforward. Assume that $g$ is
  non-negative over $K$. Remark that
  $c - \frac{b^2}{4a} = g\left (-\frac{b}{2a}\right )$ ; hence since
  we assume that $g$ is non-negative over $K$ we deduce that
  $c-\frac{b^2}{4a}\geq 0$. 

  It remains to prove that $a>0$ which we do by contradiction,
  assuming that $a < 0$. Then, this implies that for all $x \in K$,
  one has
  $\left (x + \frac{b}{2 a}\right )^2 \leq -\frac{1}{a}\left (c -
    \frac{b^2}{4a}\right )$.
  Thus, there exists $C \in K$ such that $x^2 \leq C$, for each
  $x \in K$.  This implies in particular for $x = 2$ that $4 \leq C$
  and for $x = C$ that $C^2 \leq C$, thus $C \leq 1$. Finally, one
  obtains $4 \leq C \leq 1$, yielding a contradiction.
\end{proof}
%
Let $f\in K[X]$ be a square-free polynomial which is non-negative over
$\R$. Then, $f$ is positive over $\R$, otherwise $f$ would have at
least one real root, implying that $f$ would be neither a square-free
polynomial in $\R[X]$ nor a square-free polynomial in $K[X]$,
according to Lemma~\ref{th:sfextend}.
We want to find a polynomial $g \in K[X]$ which fulfills the following
conditions:
\begin{enumerate}[(i)]
\item $\deg g \leq 2$,
\item $g$ is non-negative over $\R$,
\item $f - g$ is non-negative over $\R$,
\item $f - g$ has a root $t \in K$.
\end{enumerate}
Assume that Property (i) holds. Then the existence of a sum of squares
decomposition for $g$ is ensured from Property (ii). Property (iii)
implies that $h = f - g$ has only non-negative values over $\R$.  The
aim of Property (iv) is to ensure the existence of a root $t \in K$ of
$h$, which is stronger than the existence of a real root. Note that
the case where the degree of $h = f -g $ is less than the degree of
$f$ occurs only when $\deg f = 2$. In this latter case, we can
rely on Lemma~\ref{th:parab}.

Now, we investigate the properties of a polynomial $g \in K[X]$, which
fulfills conditions (i)-(iii) and (iv) with $t \in K$.  Using Property
(i) and {Taylor Decomposition}, we obtain
$g = g(t) + g'(t) (X - t) + c (X - t)^2$ for some $c \in K$. By
Property (iv), one has $g(t) = f(t)$. In addition, Property (iii)
yields $f(x) - g(x) \geq 0 = f(t) - g(t)$, for all $x \in K$, which
implies that $(f - g)'(t) = 0$ and $g'(t) = f'(t)$. By Property (ii),
the quadratic polynomial $g (X + t) = f(t) + f'(t) X + c X^2$ has at
most one root. This implies that the discriminant of $g$, namely
$f'(t)^2 - 4 c f(t)$ cannot be positive, thus one has
$c \geq \frac{f'(t)^2}{4 f(t)}$ (since $f(t) > 0$).

Finally, given a polynomial $g$ satisfying (i)-(iii) and (iv) for some
$t \in K$, one necessarily has $g = f_{t, c}$ with
$\frac{f'(t)^2}{4 f(t)} \leq c \in K$ and
$f_{t, c} = f(t) + f'(t) (X - t) + c (X - t)^2$.

In this case, one also has that the polynomial $g = f_{t, c'}$, with
$c' = \frac{f'(t)^2}{4 f(t)}$, fulfills (i)-(iii) and (iv). Indeed,
(i) and (iv) trivially hold. Let us prove that (ii) holds: when
$\deg f_{t, c'} = 0$, then $g = f(t) \geq 0$ and when
$\deg f_{t, c'} = 2$, then $g$ has a single root $\frac{-f'(t)}{2 c}$
and the minimum of $g$ is $g\left(\frac{-f'(t)}{2 c}\right) = 0$. The
inequalities $f_{t, c'} \leq f_{t, c} \leq f$ over $\R$ yield (iii).

Therefore, given $f \in K[X]$ with $f$ positive over $\R$, we are
looking for $t \in K$ such that the inequality $f \geq f_t$ holds over
$\R$, with
\[ 
f_t := f(t) + f'(t) (X - t) + \dfrac{f'(t)^2}{4 f(t)} (X - t)^2 \in K[X] \,.
\]
The main problem is to ensure that $t$ lies in $K$. If we choose $t$
to be a global minimizer of $f$, then $f_t$ would be the constant
polynomial $\min \{f(x) \mid x \in \R \}$. The idea is then to find
$t$ in the neighborhood of a global minimizer of $f$. The following
lemma shows that the inequality $f_t \leq f$ can be always satisfied
for $t$ in some neighborhood of a local minimizer of $f$.
\begin{lemma}
\label{th:ineqlocal}
Let $f \in \R[X]$. Let $a$ be a local minimizer of $f$ and suppose
that $f(a)>0$.  For all $t \in \R$ with $f(t)\neq 0$, let us
define the polynomial $f_t$:
\[
f_t := f(t) + f'(t) (X - t) + \dfrac{f'(t)^2}{4 f(t)} (X - t)^2 \in \R[X] \,.
\]
Then, there exists a neighborhood $U \subset \R$ of $a$ such
that the inequality $f_t(x) \leq f(x)$ holds for all
$(t, x) \in U \times U$.
\end{lemma}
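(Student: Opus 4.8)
The plan is to reduce the claimed polynomial inequality to a convexity statement about $\sqrt{f}$. Since $f(a)>0$ and $f$ is continuous, $f$ is positive on some open interval $I$ containing $a$; hence $h:=\sqrt{f}$ is a well-defined $C^{\infty}$ function on $I$, with $h(a)=\sqrt{f(a)}>0$, $h^2=f$ and $2hh'=f'$ on $I$. Starting from $f(x)-f_t(x)=\frac{1}{f(t)}\bigl[f(x)f(t)-\bigl(f(t)+\tfrac12 f'(t)(x-t)\bigr)^2\bigr]$ for $t\in I$ and substituting $f(t)=h(t)^2$, $\tfrac12 f'(t)=h(t)h'(t)$, the quantity inside the bracket becomes $h(t)^2\bigl(h(x)^2-(h(t)+h'(t)(x-t))^2\bigr)$; after dividing by $f(t)=h(t)^2$ and factoring the difference of squares one obtains the key identity, valid for all $t,x\in I$:
\[
f(x)-f_t(x)\;=\;\bigl(h(x)-h(t)-h'(t)(x-t)\bigr)\,\bigl(h(x)+h(t)+h'(t)(x-t)\bigr).
\]

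Given this identity, I would conclude as follows. The second factor is a continuous function of $(t,x)$ which equals $2h(a)>0$ at $(a,a)$, hence is positive on $V\times V$ for some open interval $V\subseteq I$ around $a$. For the first factor it suffices to show that $h$ is convex on some open interval $W\subseteq I$ around $a$: then, for $t,x\in W$, the tangent line to $h$ at $t$ lies below the graph of $h$, i.e.\ $h(x)\ge h(t)+h'(t)(x-t)$, so the first factor is non-negative. Taking $U:=V\cap W$, both factors have the correct sign for $(t,x)\in U\times U$, which yields $f_t(x)\le f(x)$ as desired.

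The heart of the matter — and the step I expect to be the main obstacle — is thus the convexity of $h$ near $a$, equivalently $h''\ge 0$ near $a$. From $h=\sqrt{f}$ one computes $h''=\bigl(2ff''-(f')^2\bigr)/(4f^{3/2})$, and since $f>0$ on $I$ this reduces to proving $2ff''-(f')^2\ge 0$ on a neighborhood of $a$. This is \emph{not} a mere continuity argument: at $a$ one only gets $2f(a)f''(a)-f'(a)^2=2f(a)f''(a)\ge 0$ (using $f'(a)=0$ and $f''(a)\ge 0$, the first- and second-order necessary conditions at a local minimizer), with equality precisely when $f''(a)=0$. To cover all cases, note first that if $f$ is constant then $f_t=f(t)=f$ and the statement is trivial; otherwise let $m\ge 2$ be the least integer with $f^{(m)}(a)\neq 0$. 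Because $a$ is a local minimizer, $m$ is even and $f^{(m)}(a)>0$. Expanding at $a$, the polynomial $f''$ vanishes at $a$ to order exactly $m-2$ with leading coefficient $f^{(m)}(a)/(m-2)!$, whereas $(f')^2$ vanishes to order $2m-2>m-2$; consequently $(X-a)^{m-2}$ divides $2ff''-(f')^2$ in $\R[X]$, and the quotient is a polynomial taking the positive value $2f(a)f^{(m)}(a)/(m-2)!$ at $a$, hence stays positive on a neighborhood of $a$. Since $m-2$ is even, $(X-a)^{m-2}\ge 0$ there, so $2ff''-(f')^2\ge 0$ on that neighborhood, which provides the required interval $W$ and completes the proof.
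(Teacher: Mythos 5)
Your proof is correct, and it takes a genuinely different route from the paper's. The paper also begins by factoring $f-f_t$ (pulling out $(X-t)^2$ via Taylor's formula), but then studies the bivariate polynomial $f(T)\bigl(\sum_{k\ge2}\frac{f^{(k)}(T)}{k!}(X-T)^{k-2}\bigr)-\frac14 f'(T)^2$, shows that $(a,a)$ is a local minimizer of it by extracting its lowest homogeneous part $H$ after translating $a$ to the origin, and proves positivity of $H$ away from the origin by reducing to the univariate inequality $\xi^\alpha-\alpha\xi+\alpha-1\ge0$. Your identity, equivalently the observation that $f_t(x)=\bigl(h(t)+h'(t)(x-t)\bigr)^2$ with $h=\sqrt f$, i.e.\ that $f_t$ is exactly the square of the first-order approximation of $\sqrt f$ at $t$, replaces all of that machinery with the tangent-line inequality for a locally convex function. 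The two arguments meet at the same crux: your discriminant $2ff''-(f')^2$ is, up to a factor of $4$, the restriction of the paper's bivariate polynomial to the diagonal $X=T$, and both proofs dispose of the degenerate case $f''(a)=0$ by the same device, namely that the vanishing order $m$ (the paper's $\alpha$) of $f-f(a)$ at $a$ is even with $f^{(m)}(a)>0$. You correctly identified that this is the real content of the lemma and that continuity at $(a,a)$ alone does not suffice when $f''(a)=0$. What your route buys is brevity and a conceptual explanation of where $f_t$ comes from; what the paper's route buys is that it stays entirely within polynomial algebra, avoiding the auxiliary function $\sqrt f$ --- though since the lemma is stated over $\R$ and only the subsequent Proposition needs $t\in K$, nothing is lost by your analytic detour.
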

\begin{proof}
  Set $n:=\deg f$.  It is easy to see that we can suppose without loss
  of generality that $a$ is the origin and that $f(0)=1$. Because of the Taylor formula
\[f=\sum_{k=0}^n\frac{f^{(k)}(t)}{k!}(X-t)^k,\]
we have 
\[
f - f_t=\sum_{k=2}^n\frac{f^{(k)}(t)}{k!}(X-t)^k-\dfrac{f'(t)^2}{4 f(t)}(X-t)^2=(X-t)^2\left(\sum_{k=2}^n\frac{f^{(k)}(t)}{k!}(X-t)^{k-2}-\frac{f'(t)^2}{4 f(t)}\right)
\]
for all $t\in\R$ with $f(t) \neq 0$. 
Let $h$ be the bivariate polynomial defined as follows:
\[h:=f(T)\left(\sum_{k=2}^n\frac{f^{(k)}(T)}{k!}(X-T)^{k-2}\right)-\frac14f'(T)^2\in\R[T,X].\]

Let us prove that $(a, a)$ is a local minimizer of $h$. 

Since  $f(0) = 1$, there exists $c \neq 0$, $\alpha \in \N$ and $g \in \R[X]$ such that $f - 1 = c X^\alpha + X^{\alpha+1} g$.
Therefore, $\lim_{x \to 0} \frac{f(x) - 1}{c x^\alpha} = 1$. 
Since $f - 1$ is non-negative over $\R$, one concludes that $c > 0$ and $\alpha$ is even.
Let us consider the lowest homogeneous part $H$ of $h$, that is the sum of all monomials of lowest degree involved in $h$. The lowest homogeneous part of $f'(T)^2$ is $c^2 (\alpha - 1)^2 T^{2 \alpha - 2}$ with degree $2 \alpha - 2$ while the lowest homogeneous part of $\sum_{k=2}^n\frac{f^{(k)}(T)}{k!}(X-T)^{k-2}$ is $c\sum_{k=2}^n\binom \alpha kT^{\alpha-k}(X-T)^{k-2}$ with degree $\alpha - 2$. 
Then
\[
H=c\sum_{k=2}^n\binom \alpha kT^{\alpha-k}(X-T)^{k-2} \,, 
\]
and thus
\[
(X-T)^2H=c((T+(X-T))^\alpha-T^n-nT^{\alpha-1}(X-T))=c(X^\alpha-\alpha
T^{\alpha-1}X+(\alpha-1)T^\alpha).
\]
%

%


%
Since $\lim_{\|(x,t)\| \to 0} \frac{h(x,t)}{H(x,t)} = 1$, it is enough to prove that $H$ is positive except at the origin in order to show that $(a,a) = (0,0)$ is a local minimizer of $h$.
Let us consider $(t,x)\in\R^2\setminus\{0\}$ and show that $H(t,x)>0$. If 
$t=x$, we have $H(t,x)=H(x,x)=\binom \alpha 2x^{\alpha-2}>0$. If $t\ne x$, then it is enough to show that
$(x-t)^2H(t,x)=c(x^\alpha-\alpha t^{\alpha-1}x+(\alpha-1)t^\alpha)>0$.
This is clear if $t=0$ since $c>0$ and $\alpha$ is even.  Now suppose that $t\ne0$ and define 
$\xi:=\frac xt\ne1$. Then one has 
$t^{-\alpha}(x-t)^2H(t,x)=c(\xi^\alpha-\alpha\xi+\alpha-1) > 0$.  
The positivity of $H$ follows from the fact that the univariate polynomial $r:=X^\alpha-\alpha X+\alpha-1$ is positive except at $1$ since $r'=\alpha X^{\alpha-1}-\alpha$. 
The positivity of $H$ implies that $(a, a)$ is a local minimizer of $h$. 

Let us define $q(X,T) := (X-T)^2 h$. 
Combining the fact that $(a, a)$ is a local minimizer of the two polynomials $h$, $(X-T)^2$ and the fact that $h(a, a) = f(a) f''(a) - \frac{1}{4} f'(a)^2 = 0$, we conclude that $(a, a)$ is also a local minimizer of $q$. 
Since $f(x) - f_t(x) = f(t) \, q(x,t)$, this yields the existence of a neighborhood $O \subset \R^2$ of $(a,a)$ such
that the inequality $f - f_t \geq 0$ holds for all
$(x, t) \in O$. Since there exists some neighborhood $U \subset \R$ of $a$, such that the rectangle $U \times U$ is included in $O$, this proves the initial claim.

\end{proof}
Lemma~\ref{th:ineqlocal} states the existence of a neighborhood $U$ of
a local minimizer of $f$ such that the inequality $f_t(x) \leq f(x)$
holds for all $(x, t) \in U \times U$ . Now, we show that with such a neighborhood $U$ of the smallest global minimizer of
$f$, the inequality $f_t(x) \leq f(x)$ holds for all
$t \in U$ and for all $x \in \R$.
\begin{proposition}
\label{th:ineqglobal}
Let $f \in \R[X]$ with $\deg f > 0$. Assume that $f$ is positive over
$\R$. Then, there exists a smallest global minimizer $a$ of $f$ and a
positive $\epsilon \in \R$ such that for all $t \in \R$ with
$a -\epsilon < t < a$, the quadratic polynomial $f_t$, defined by
\[
f_t := f(t) + f'(t) (X - T) + \dfrac{f'(t)^2}{4 f(t)} (X - T)^2 \in \R[X] \,,
\]
satisfies $f_t \leq f$ over $\R$.
\end{proposition}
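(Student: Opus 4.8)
Throughout, write $n=\deg f$, $\nu=\min_{x\in\R}f(x)$, and keep the notation $f_t$ of the statement.

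\textbf{Existence of $a$ and reduction.} Since $\deg f>0$ and $f>0$ on $\R$, the polynomial $f$ is non-constant, so $n$ is even with positive leading coefficient $\ell$; hence $f(x)\to+\infty$ as $|x|\to\infty$, so $f$ attains $\nu>0$, and the set $S$ of global minimizers is non-empty, closed, bounded, and contained in the finite zero set of $f'$, hence finite. Let $a:=\min S$ and write $S=\{a=b_1<b_2<\dots<b_r\}$. Near $a$ one has $f(a)=\nu>0$ and $f'(a)=0$, and, $a$ being a minimizer, $f(t)>\nu$ and $f'(t)<0$ for $t$ in a small punctured left neighbourhood of $a$. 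Since $f(a)>0$, the coefficients of $f_t$ depend continuously on $t$ near $a$; in particular $f_t\to f_a\equiv\nu$ uniformly on compact sets as $t\to a$. It thus suffices to produce $\epsilon>0$ with $f_t\le f$ on $\R$ for every $t\in(a-\epsilon,a)$, which we do by covering the $x$-line by finitely many regions and bounding $f-f_t$ on each.

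\textbf{The easy regions.} Near $a$: Lemma~\ref{th:ineqlocal}, applied at the local minimizer $a$, gives an open interval $U\ni a$ with $f_t(x)\le f(x)$ for all $(t,x)\in U\times U$; so $f_t\le f$ on $U$ whenever $t\in(a-\epsilon,a)\subseteq U$. Far from the origin: the coefficients of $f_t$ stay bounded for $t$ in a small compact interval around $a$, while $f-f_t$ keeps the leading term $\ell X^n$ of $f$; hence there is $M>0$ with $f-f_t>0$ on $\{|x|>M\}$ for all such $t$. A compact region away from $S$: enlarging $M$ so that $[-M,M]$ contains $U$ and the intervals $V_j$ chosen below, put $K:=[-M,M]\setminus(\text{interiors of }U,V_2,\dots,V_r)$; then $K$ is compact and disjoint from $S$, so $f\ge\nu+\delta$ on $K$ for some $\delta>0$, and by uniform convergence $f_t\to\nu$ on $[-M,M]$ we get $f_t<f$ on $K$ for $t$ close enough to $a$.

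\textbf{Near the other minimizers (the crux).} Fix $j\ge2$; here the restriction $t<a$ is essential. Writing $f'=(X-a)^m g$ with $g(a)\neq0$ (so $m$ is odd) and integrating, one gets $0\le f(t)-\nu=\int_t^a|f'(s)|\,ds\le C\,|t-a|\,|f'(t)|$ on a small interval $[a-\eta,a]$. Since $f_t(b_j)=f(t)+f'(t)(b_j-t)+\frac{f'(t)^2}{4f(t)}(b_j-t)^2$ and $-f'(t)(b_j-t)=|f'(t)|(b_j-t)\ge|f'(t)|(b_j-a)$ dominates the other two terms (which are $O(|t-a|\,|f'(t)|)$ and $O(f'(t)^2)$), we obtain
\[
\nu-f_t(b_j)=-(f(t)-\nu)-f'(t)(b_j-t)-\frac{f'(t)^2(b_j-t)^2}{4f(t)}\ \ge\ \frac{b_j-a}{2}\,|f'(t)|\ >\ 0
\]
for $t$ close enough to $a$ from the left. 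On the other hand, on a fixed small interval $V_j=[b_j-\rho_j,b_j+\rho_j]$ (with $\rho_j<(b_j-a)/8$ and $\rho_j$ small enough that $V_j\cap S=\{b_j\}$) the parabola $f_t$ is almost constant: $f_t(x)=f_t(b_j)+f_t'(b_j)(x-b_j)+\frac{f'(t)^2}{4f(t)}(x-b_j)^2$ with $|f_t'(b_j)|\le 2|f'(t)|$ and $\frac{f'(t)^2}{4f(t)}\le f'(t)^2/(2\nu)$ for $t$ near $a$. Combining the last two displays gives $f_t(x)\le \nu-\frac{b_j-a}{4}\,|f'(t)|+\frac{\rho_j^2}{2\nu}f'(t)^2\le\nu\le f(x)$ on $V_j$, for $t$ close enough to $a$ from the left.

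\textbf{Conclusion.} Taking $\epsilon>0$ smaller than the (finitely many) thresholds appearing above and no larger than the radius of $U$, the regions $U$, $\{|x|>M\}$, $K$, $V_2,\dots,V_r$ cover $\R$, so $f_t\le f$ on $\R$ for every $t\in(a-\epsilon,a)$. The main obstacle is the last step: one must simultaneously show that $f_t(b_j)$ drops strictly below $\nu$ as $t\to a^-$ — which is precisely what forces the sign condition $t<a$ and rests on controlling $f(t)-\nu$ by $|f'(t)|$ near $a$ — and control the small but nonzero variation of $f_t$ over a neighbourhood of $b_j$ whose size must not shrink with $t$.
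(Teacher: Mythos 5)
Your proof is correct, and while it shares the paper's skeleton (Lemma~\ref{th:ineqlocal} near $a$, plus a uniform Cauchy-type bound to dispose of $|x|$ large), it handles the genuinely delicate region --- the other global minimizers $b_j>a$, where $f(b_j)=\nu$ leaves no uniform gap to exploit --- by a different mechanism. The paper observes that $f_t$ is a nonnegative parabola vanishing at its vertex $N_t=t-2f(t)/f'(t)$, that $N_t\to+\infty$ as $t\to a^-$ (this is where $t<a$, i.e.\ $f'(t)<0$, enters), and hence that $f_t$ is monotonically decreasing on all of $(-\infty,C]$ for $t$ close to $a$; the single chain $f_t(x)\le f_t(a)\le f(a)\le f(x)$ then kills every $x\in[a,C)$, secondary minimizers included, and the interval $(-C,a-\epsilon_0]$ is handled the same way via $f_t(x)\le f_t(-C)<M$. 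You instead work locally at each $b_j$: the integral estimate $f(t)-\nu\le C\,|t-a|\,|f'(t)|$ shows the constant term of $f_t$ exceeds $\nu$ only by $o(|f'(t)|)$, while the linear term contributes $-|f'(t)|(b_j-a)$ at $b_j$, forcing $f_t(b_j)\le\nu-\frac{b_j-a}{2}|f'(t)|$, and the variation of $f_t$ over a fixed neighbourhood $V_j$ is controlled by $O(\rho_j|f'(t)|)+O(f'(t)^2)$. Both arguments are sound; the paper's monotonicity trick is shorter and avoids the integral estimate, whereas yours isolates more explicitly \emph{why} the one-sided condition $t<a$ is needed and is quantitative at the secondary minimizers. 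Two cosmetic points: your phrase ``$f-f_t$ keeps the leading term $\ell X^n$'' is literally false when $n=2$ (the leading coefficient becomes $\ell-\frac{f'(t)^2}{4f(t)}$), though it stays bounded below by $\ell/2$ for $t$ near $a$ so your far-field bound survives; and your argument implicitly uses that $f'$ has no zero in $(a-\epsilon,a)$ (otherwise $f_t$ would be the constant $f(t)>\nu$ and the claim would fail at $b_j$), which you do secure but should state when fixing $\epsilon$ --- the paper makes this explicit in its choice of $\epsilon_0$.
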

\begin{proof}
  The existence of $a$ is straightforward.  First, we handle the case
  when $\deg f = 2$. Using Taylor Decomposition of $f$ at $t$, one
  obtains $f = f(t) + f'(t) (X - T) + \frac{f''(t)}{2}(X-T)^2$. Since
  $f$ has no real root, the 
    discriminant of $f$ is negative, namely
  $f'(t)^2 - 4 f(t) \frac{f''(t)}{2} < 0$.  It implies that
  $\frac{f'(t)^2}{4 f(t)} < \frac{f''(t)^2}{2}$, ensuring that the
  inequality $f_t \leq f$ holds over $\R$.

  In the sequel, we assume that $\deg f > 2$. We can find a
  neighborhood $U$ as in Lemma~\ref{th:ineqlocal} and without loss of
  generality, let us suppose that
  $U = [a - \epsilon_0, a + \epsilon_0]$ for some positive
  $\epsilon_0$, so that $f'$ has no real root in
  $[a - \epsilon_0, a)$. Then, the inequality $f_t(x) \leq f(x)$ holds
  for all $x, t \in U$. Next, we write
  $f - f_t = \sum_{i=0}^n a_{i t} x^i$, with $a_{i t} \in \R$ and
  $n = \deg f > 2$ and define the following function:
\[
U \to \R : t \mapsto C_t := \max \left\{1, \dfrac{|a_{0 t}|}{|a_{n t}|}, \dots, \dfrac{|a_{(n-1) t}|}{|a_{n t}|}  \right\} \,.
\]
Note that the Cauchy bound (Lemma~\ref{th:cauchy}) implies that for
all $t \in U$, all real roots of $f - f_t$ lie in $[-C_t, C_t]$.  In
addition, the closed interval domain $U$ is compact, implying that the
range values of the function $U \to \R : t \mapsto C_t$ are
bounded. Let $C \in \R$ with $C \geq C_t$ for all $t \in U$. Then, for
all $t \in U$, all real roots of $f - f_t$ lie in the interval
$[-C, C]$ and we can assume without loss of generality that
$-C < a - \epsilon_0 < a < a + \epsilon_0 < C$. Let us define
$M := \min \{ f(x) \mid x \in [-C, a - \epsilon_0] \}$. By definition,
$a$ is the global minimizer of $f$, ensuring that $f(a) < M$.  For all
$t \in [a - \epsilon_0, a)$, the quadratic polynomial $f_t$ has one
real root $N_t := \frac{-2f(t)}{f'(t)} + t$. When
$t \in [a - \epsilon_0, a)$ converges to $a$, then $f'(t) < 0$
converges towards 0 and $-2 f(t)$ converges towards $-2 f(a) < 0$.
Thus, the corresponding limit of $N_t$ is $+ \infty$. In addition,
$f_t(-C)$ tends to $f_a(-C) = f(a) < M$. Therefore, there exists some
$\epsilon \in (0, \epsilon_0]$ such that for all
$t \in (a - \epsilon, a)$, one has $N_t \in [C, \infty)$ and
$f_t(-C) < M$. For all $t \in (a - \epsilon, a)$, we partition $\R$
into five intervals and prove that the inequality $f_t \leq f$ holds
on each interval:
\begin{itemize}
\item The inequality $f_t \leq f$ holds over $(- \infty, -C]$: it
  comes from the fact that $f_t(-C) < M \leq f(-C)$ and the fact
  $f - f_t$ has no real root in $(- \infty, -C]$.
\item The inequality $f_t \leq f$ holds over $(- C, a - \epsilon_0]$:
  $f_t$ is monotonically decreasing over $(-\infty, N_t]$. Since one
  has $-C < a - \epsilon_0 < C \leq N_t $, then $f_t$ is monotonically
  decreasing over $(- C, a - \epsilon_0]$. This implies that for all
  $x \in (- C, a - \epsilon_0]$, one has
  $f_t(x) \leq f_t(-C) < M \leq f(x)$.
\item The inequality $f_t \leq f$ holds over $[a - \epsilon_0, a)$: it
  follows from the fact that $[a - \epsilon_0, a) \subseteq U$.
\item The inequality $f_t \leq f$ holds over $[a, C)$: $f_t$ is
  monotonically decreasing over $(-\infty, N_t]$. Since one has
  $a < C \leq N_t$, then $f_t$ is monotonically decreasing over
  $[a, C)$. Since $a$ is a global minimizer of $f$ and $a \in U$, one
  has $f_t(x) \leq f_t(a) \leq f(a) \leq f(x)$ for all $x \in [a, C)$
  .
\item The inequality $f_t \leq f$ holds over $[C, \infty]$: the claim
  is implied by the fact that $f_t(N_t) = 0 < f(N_t)$,
  $N_t \in [C, \infty]$ together with the fact that $f -f_t$ has no
  real root in $[C, \infty]$.
\end{itemize}
\end{proof}
\if{
We can now state our \markus{Nichtnegativstellensatz}:
\markus{C'est un peu étrange de formuler cela. D'abord ça a du être connu depuis longtemps
sans doute. Comme j'explique dans mon Diplomarbeit, l'algorithme \texttt{univsos1} ne marche que 
pour les corps ordonnées archimédien, c.-à-d. pour les sous corps de $\R$. Par contre, en utilisant
le principe de Tarski, on peut appliquer la même preuve (pas l'algorithme) toujours pour des corps
ordonnées qui sont dense dans leur clôture réelle. Mais ce n'est pas intéressant: De toute façon il
y a une preuve complètement non-constructif pour toutes les corps ordonnées de l'existence de
la décomposition en somme de carrés pondérée qui se base sur la solution du problème 17ième
de Hilbert par Artin combiné avec le principe de Cassels pour éliminer les racines. Peut-être qu'il
y même d'autre preuves. Ici je dirais qu'on pourrait facilement se restreindre sur les sous corps
de $\R$ (avec l'ordre induit par $\R$). Je ne vois pas l'intérêt de parler des autres choses (en tout
cas si on se décide de parler d'autre corps ordonnées il faudrait d'abord faire une recherche de
la literature pour savoir ce qu'il faut citer. Mais si on se restreint aux sous corps de $\R$, c'est irritant
de formuler cette proposition parce que automatiquement la question se pose si ça ne reste pas
vrai en plus grand généralité. Quoi faire?}
}\fi
%

\begin{proposition}
\label{th:nnsatz}
Let $K$ be {a subfield of $\R$} and $f \in K[X]$ with
$\deg f = n \geq 1$. Then $f$ is non-negative on $\R$ if and only if
$f$ is a weighted sum of $n$ polynomial squares in $K[X]$,~i.e.~there
exist $a_1, \dots, a_n \in K^{\geq 0}$ and $g_1, \dots, g_n \in K[X]$
such that $f = \sum_{i=0}^n a_i g_i^2$.
\end{proposition}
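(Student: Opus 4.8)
The equivalence has one trivial direction: if $f=\sum_{i=1}^n a_ig_i^2$ with $a_i\in K^{\ge 0}$ and $g_i\in K[X]\subseteq\R[X]$, then $f(x)=\sum_i a_ig_i(x)^2\ge 0$ for every $x\in\R$. For the converse, the plan is a strong induction on $n=\deg f$ in which each step peels off a quadratic piece and lowers the degree by two. The two real ingredients are already available: the square-free decomposition machinery (Lemma~\ref{th:sfextend}), used to reduce to the case of a polynomial that is genuinely positive on $\R$, and Proposition~\ref{th:ineqglobal}, which produces, near the smallest global minimizer, a quadratic lower bound $f_t\le f$ with a rational point of contact.

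First I would extract from $f$ a positive square-free core. Compute a square-free decomposition $f=a\,g_1g_2^2\cdots g_r^r$ in $K[X]$, set $q:=\prod_i g_i^{\lfloor i/2\rfloor}$ and $\tilde s:=a\prod_{i\text{ odd}}g_i\in K[X]$, so that $f=q^2\tilde s$ with $\tilde s$ square-free and $\deg f=2\deg q+\deg\tilde s$. Since $q^2\ge 0$ and vanishes only on a finite set, $\tilde s\ge 0$ on $\R$ by continuity; and a non-negative square-free polynomial is in fact positive on $\R$ — the observation recalled right before Lemma~\ref{th:ineqlocal}, which uses Lemma~\ref{th:sfextend}. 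Writing $m:=\deg\tilde s$ (necessarily even, like $n$, since a non-negative polynomial has even degree), we have $m\le n$, and it suffices to show $\tilde s$ is a weighted sum of at most $m$ squares in $K[X]$: multiplying each squared term by $q$ then represents $f$ with at most $m\le n$ squares, and one pads with terms $0\cdot X^0$ to reach exactly $n$.

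The base cases are $m\le 2$: if $m=0$ then $\tilde s$ is a positive constant and $f=\tilde s\,q^2$ is a single weighted square; if $m=2$ then Lemma~\ref{th:parab} writes $\tilde s=\alpha(X+\tfrac{\beta}{2\alpha})^2+(\gamma-\tfrac{\beta^2}{4\alpha})$ with $\alpha>0$ and $\gamma-\tfrac{\beta^2}{4\alpha}\ge 0$, a weighted sum of two squares, and $2\le n$. For the inductive step $m\ge 4$, I would apply Proposition~\ref{th:ineqglobal} to the positive polynomial $\tilde s$ to get its smallest global minimizer $a_0$ and some $\epsilon>0$, and — since $\Q\subseteq K$ is dense in $\R$ — choose $t\in K$ with $a_0-\epsilon<t<a_0$. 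Then $\tilde s_t:=\tilde s(t)+\tilde s'(t)(X-t)+\tfrac{\tilde s'(t)^2}{4\tilde s(t)}(X-t)^2$ lies in $K[X]$ (as $\tilde s(t)>0$ and $t\in K$), has degree at most $2$, is non-negative on $\R$ (nonpositive discriminant and positive leading coefficient, or the positive constant $\tilde s(t)$), and satisfies $\tilde s_t\le\tilde s$ on $\R$; hence $\tilde s_t$ is a weighted sum of at most two squares in $K[X]$ by Lemma~\ref{th:parab} (the degree $\le 1$ case being a non-negative constant). Put $h:=\tilde s-\tilde s_t\in K[X]$: then $h\ge 0$ on $\R$ and $h(t)=0$, so $t$ is a minimizer of $h$, forcing $h'(t)=0$ and thus $(X-t)^2\mid h$ in $K[X]$. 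Writing $h=(X-t)^2h_1$ with $h_1\in K[X]$, we get $h_1\ge 0$ on $\R$ (continuity) and $\deg h_1=m-2$ (since $\deg\tilde s_t\le 2<m$), with $2\le m-2<n$; the induction hypothesis then expresses $h_1$ as a weighted sum of at most $m-2$ squares in $K[X]$, folding $(X-t)$ into each square does the same for $h$, and $\tilde s=\tilde s_t+h$ is a weighted sum of at most $2+(m-2)=m$ squares in $K[X]$, closing the induction.

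The substantive input is entirely Proposition~\ref{th:ineqglobal} (and behind it Lemma~\ref{th:ineqlocal}); once that is granted, the only points needing care are producing a square-free factor $\tilde s\in K[X]$ that is genuinely \emph{positive} on $\R$ — which the square-free decomposition together with Lemma~\ref{th:sfextend} delivers — and the bookkeeping of the number of squares, which is exactly what pins down the degree-by-two recursion and the bound $n$. I do not expect any genuine obstacle beyond this accounting; in particular, the density of $K$ in $\R$, the one place the hypothesis $K\subseteq\R$ is used beyond fixing an order, is immediate from $\Q\subseteq K$.
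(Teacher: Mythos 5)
Your proof is correct and follows essentially the same route as the paper: reduce to a positive square-free polynomial via square-free decomposition, apply Proposition~\ref{th:ineqglobal} (with density of $K$ in $\R$) to subtract a quadratic $f_t$ touching $f$ at a point $t\in K$, observe the difference is divisible by $(X-t)^2$, and recurse, with Lemma~\ref{th:parab} handling the quadratic pieces. The only difference is organizational — you extract the full square factor up front and divide by $(X-t)^2$ explicitly, whereas the paper alternates the two cases inside a single induction — but the mathematical content and the square-count bookkeeping are identical.
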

\begin{proof}
  The {\em if} part is straightforward. For the other direction,
  assume that $f$ is non-negative on $\R$ and $n$ is even. The proof is
  by induction over $n$. The base case $n=2$ follows from
  Lemma~\ref{th:parab}.  For the induction case, let us consider
  $n \geq 4$.

When $f$ is not a square-free polynomial, we show that $f$ is a
weighted sum of $n - 2$ polynomial squares. We can write $f = g h^2$,
for some polynomials $g, h \in K[X]$ with $\deg g \leq \deg f -
2$.
This gives $g(x) = \frac{f(x)}{h(x)^2} \geq 0$ for all $x \in \R$ such
that $h(x) \neq 0$. Since $h$ has a finite number of real roots, $g$
is non-negative on $\R$. Using the induction hypothesis, $g$ is a
weighted sum of $n - 2$ polynomial squares. Therefore, $f$ is also a
weighted sum of $n - 2$ polynomial squares.

When $f$ is a square-free polynomial, then $f$ has no real root, which
implies by Lemma~\ref{th:sfextend} that $f$ is neither a square-free
polynomial in $K[X]$ nor in $\R[X]$. Thus, $f$ is positive on
$\R$. Using Proposition~\ref{th:ineqglobal}, there exists some
$t \in K$ ($K$ is dense in $\R$) and a quadratic polynomial
$f_t \in K[X]$ such that the inequalities $0 \leq f_t(x) \leq f(x)$
holds for all $x \in \R$ and $f_t(t) = f(t)$. The polynomial $f - f_t$
has degree $n$, takes only non-negative values. In addition
$(f - f_t)(t) = 0$, thus $f - f_t$ is not a square-free
polynomial. Hence, we are in the above case, implying that $f - f_t$
is a weighted sum of $n - 2$ polynomial squares. From
Lemma~\ref{th:parab}, $f_t$ is a weighted sum of 2 polynomial squares,
implying that $f$ is a weighted sum of $n$ polynomial squares, as
requested.
\end{proof}
\subsection{Algorithm~\texttt{univsos1}}
\label{sec:univsos1}
The global minimizer $a$ is a real root of $ f' \in K[X]$.
{Therefore, by using root isolation
  techniques~\cite[Chap. 10]{BPR06}, one can isolate all the real
  roots of $f'$ in distinct intervals with bounds in $K$. Such
  techniques rely on applying successive bisections, so that one can
  arbitrarily reduce the width of every interval and sort them
  w.r.t.~their lower bounds.}
%
Eventually, we apply this procedure to find a sequence of elements in
$K$ converging from below to the smallest global minimizer of $f$ in
order to find a suitable $t$. We denote by $\texttt{parab}(f)$ the
corresponding procedure which returns the polynomial
$f_t := \frac{f'(t)^2}{f(t)}(X-t)^2 + f'(t)(X - t) + f(t)$ such that
$t \in K$ and $f \geq f_t$ over $\R$.

Algorithm~$\univsosone$, depicted in Figure~\ref{alg:univsos1}, takes
as input a polynomial $f \in K[X]$ of even degree
$n \geq 2$. The steps performed by this algorithm correspond to what
is described in the proof of Proposition~\ref{th:nnsatz} and relies on
two auxiliary procedures. The first one is the procedure
\texttt{parab} performing root isolation (see
Step~\lineref{line:parab}). The second one is denoted by
\texttt{sqrfree} and performs square-free decomposition: for a given
polynomial $f \in K[X]$, $\texttt{sqrfree}(f)$ returns two polynomials
$g$ and $h$ in $K[X]$ such that $f = g h^2$. When $f$ is square-free,
the procedure returns $g = f$ and $h = 1$ (in this case $\deg h = 0$).
As in the proof of Proposition~\ref{th:nnsatz}, this square-free
decomposition procedure is performed either on the input polynomial
$f$ (Step~\lineref{line:sqrfree1}) or on the non-negative polynomial
$(f - f_t)$ (Step~\lineref{line:sqrfree2}).
\begin{figure}[!t]
\begin{algorithmic}[1]                    
\Require non-negative polynomial $f \in K[X]$ of degree $n\geq 2$, with $K$ a subfield of $\R$
\Ensure pair of lists of polynomials $(\hlist, \qlist)$ with coefficients in $K$
\State $\hlist \gets [\,]$, $\qlist \gets [\,]$. \label{line:list}
\While {$\deg f > 2$}
\State $(g, h) := \texttt{sqrfree}(f)$ \label{line:sqrfree1} \Comment{$f = g h^2$}
\If {$\deg h > 0$}  $\hlist \gets  \hlist \cup \{ h \}$, $\qlist \gets  \qlist \cup \{ 0 \}$, $f \gets g$
\Else \State $f_t := \texttt{parab}(f)$ \label{line:parab} 
\State $(g, h) := \texttt{sqrfree}(f - f_t)$ \label{line:sqrfree2} 
\State $\hlist \gets  \hlist \cup \{ h \}$, $\qlist \gets  \qlist \cup \{ f_t \}$, $f \gets g$
\EndIf
\EndWhile
\State $\hlist \gets  \hlist \cup \{ 0 \}$, $\qlist \gets  \qlist \cup \{ f \}$
\State \Return $\hlist$, $\qlist$
\end{algorithmic}
\caption{$\univsosone$: algorithm to compute SOS decompositions of non-negative univariate polynomials.}
\label{alg:univsos1}
\end{figure}
The output of Algorithm~$\univsosone$ is a pair of lists of
polynomials in $K[X]$, allowing to retrieve an SOS decomposition of
$f$. By Proposition~\ref{th:nnsatz} the length of all output lists,
denoted by $r$, is bounded by $n/2$.  If we note $h_r, \dots, h_1$ the
polynomials belonging to $\hlist$ and $q_r, \dots, q_1$ the positive
quadratic polynomials belonging to $\qlist$, one obtains the following
Horner-like decomposition:
$f = h_r^2 \bigl( h_{r-1}^2 ( h_{r-2}^2 (\dots) + q_{r-2} ) + q_{r-1}
\bigr) + q_r$.
Thus, each positive quadratic polynomial $q_i$ being a weighted SOS
polynomial, this yields a valid weighted SOS decomposition for $f$.
\begin{example}
\label{ex:1}
Let us consider the polynomial $f := \frac{1}{16} X^6 + X^4 - \frac{1}{9} X^3 - \frac{11}{10} X^2 + \frac{2}{15} X + 2 \in \Q[X]$. 

We describe the different steps performed by Algorithm~$\univsos1$: 
\begin{itemize}
\item The polynomial $f$ is square-free and the algorithm starts by providing the value $t = -1$ as an approximation of the smallest minimizer of $f$. With $f(t) = \frac{1397}{720}$ and $f'(t) = \frac{-19}{8}$, one obtains $f_{-1} = \frac{720}{1397} (-\frac{19}{16} X + \frac{271}{360} )^2$. \item Next, after obtaining the square-free decomposition $f(X) - f_{-1} = (X+1)^2  g$, the same procedure is applied on $g$. One obtains the value $t=1$ as an approximation of the smallest minimizer of $g$ and $g_{1} = \frac{502920}{237293}(-\frac{1}{18} X + \frac{88411}{167640})^2$. 
\item Eventually, one obtains the square-free decomposition $g(X) - g_{1} = (X-1)^2 h$ with $h = \frac{1}{16}(X - \frac{19108973}{17085096})$.
\end{itemize}
Overall, Algorithm~$\univsos1$ provides the lists $\hlist = [1,X+1,1,X-1,0]$ and $\qlist = 
[
\frac{720}{1397} (-\frac{19}{16} X + \frac{271}{360} )^2,  
0,
\frac{502920}{237293}(-\frac{1}{18} X + \frac{88411}{167640})^2,
0,
\frac{1}{16}(X - \frac{19108973}{17085096})
]$, yielding the following weighted SOS decomposition: 
\[
f :=  \Biggl(  (X+1)^2  \biggl((X-1)^2 \Bigl(\frac{1}{16}(X - \frac{19108973}{17085096})^2
\Bigr) + \frac{502920}{237293}(-\frac{1}{18} X + \frac{88411}{167640})^2 \biggr)  \Biggr)  + \frac{720}{1397} (-\frac{19}{16} X + \frac{271}{360} )^2
 .
\]
\end{example}
In the sequel, we analyze the complexity of Algorithm~$\univsosone$ in the particular case $K = \Q$. We provide bounds on the bitsize of related SOS decompositions as well as bounds on the arithmetic cost required for computation and verification.
\subsection{Bit size of the output}
\label{sec:bitsize1}
%

%
\begin{lemma}
\label{th:bitroot}
Let $f \in \Z[X]$ be a positive polynomial over $\R$, with
$\deg f = n$ 
and $\tau$ be an upper bound on the bitsize of the
coefficients of $f$. 
When applying Algorithm~$\univsosone$ to $f$, the
sub-procedure \texttt{parab} outputs a polynomial $f_t$ such that
$\tau(t) = \bigo{(n^2 \tau)}$.
\end{lemma}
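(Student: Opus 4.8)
The plan is to unwind the definition of the subroutine \texttt{parab} and then estimate, through quantifier elimination and root bounds, how sharply $t$ must approximate the smallest global minimizer $a$ of $f$. Recall (see Subsection~\ref{sec:univsos1}) that \texttt{parab}$(f)$ isolates the real roots of $f' \in \Z[X]$ in intervals with rational endpoints, singles out the isolating interval of $a$ (by comparing the values of $f$, rational numbers of controlled bitsize, at the candidate roots), and then bisects that interval $[\ell,u] \ni a$, at each step testing whether $t := \ell$ — which satisfies $t \le a$ and $f(t) > 0$, since $f$ is positive — makes $f - f_t$ non-negative on $\R$; it returns the first $t$ passing the test. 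By Proposition~\ref{th:ineqglobal} the test succeeds once $a - \ell$ drops below the $\epsilon$ produced there (and if a refinement yields $\ell = a$, then $f_a = f(a)$ is constant and passes as well); hence the loop terminates, and the returned $t$ is produced after no more bisections than are needed to make $a - \ell < \epsilon$.

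The first step is to reduce the claim to a bound on $\log_2(1/\epsilon)$. Since $f'$ has degree $n-1$ and coefficients of bitsize $\bigo{(\tau + \log n)}$, the Cauchy bound (Lemma~\ref{th:cauchyint}) confines $a$ to an interval with rational endpoints of bitsize $\bigo{(\tau + \log n)}$ from which bisection begins, so after $k$ bisections the left endpoint is a dyadic number of bitsize $\bigo{(\tau + \log n + k)}$. Isolating $a$ among the roots of $f'$ costs $\bigo{(n\tau)}$ bisections (root separation bound for $f'$), and $\bigo{(\log_2(1/\epsilon))}$ further bisections bring the interval width below $\epsilon$; therefore $\tau(t) = \bigo{(n\tau + \log_2(1/\epsilon))}$, and it suffices to produce a valid $\epsilon$ with $\log_2(1/\epsilon) = \bigo{(n^2\tau)}$.

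To this end, write $f(x) - f_t(x) = (x-t)^2\, h(x,t)$ where, using the divided-difference identity
\[
\frac{f(x) - f(t) - f'(t)(x-t)}{(x-t)^2}
\;=\;
\sum_{i=2}^{n} b_i \sum_{l=1}^{i-1} t^{\,i-1-l}\bigl(x^{l-1} + x^{l-2}t + \dots + t^{l-1}\bigr),
\]
so that $4h(x,t) = 4f(t)\sum_{i=2}^{n} b_i \sum_{l=1}^{i-1} t^{i-1-l}(x^{l-1}+\dots+t^{l-1}) - f'(t)^2$ lies in $\Z[x,t]$, with $\deg_x(4h) = n-2$, $x$-leading coefficient $4 b_n f(t)$ (never zero, as $f>0$), $\deg_t(4h) = \bigo{(n)}$ and coefficient bitsize $\bigo{(\tau + \log n)}$. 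As $f(t)>0$, one has $f \ge f_t$ on $\R$ iff $h(\cdot,t) \ge 0$ on $\R$, so the admissible set $S := \{\, t \in \R : \forall x \in \R,\ h(x,t) \ge 0 \,\}$ is closed, and by Proposition~\ref{th:ineqglobal} it contains $(a - \epsilon_0, a)$ for some $\epsilon_0 > 0$. Eliminating the universally quantified $x$ (with no case split on the $x$-leading coefficient, which is sign-definite), the left endpoint $\alpha$ of the component of $S$ containing this interval must be a real root of $\tilde D(t) := \mathrm{Res}_x\!\bigl(4h, \partial_x (4h)\bigr) \in \Z[t]$; by the standard resultant size bounds (and Lemma~\ref{th:bitdiv} if one prefers the discriminant), $\deg \tilde D = \bigo{(n^2)}$ and $\tau(\tilde D) = \bigo{(n\tau + n\log n)}$. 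Now $a$ is a root of $f' \in \Z[X]$ and $\alpha < a$ is a root of $\tilde D$, so $a - \alpha \neq 0$ is a root of $R(Z) := \mathrm{Res}_Y\!\bigl(f'(Y),\, \tilde D(Y - Z)\bigr) \in \Z[Z]$, whose bitsize is $\bigo{(n^2\tau)}$; Lemma~\ref{th:cauchyint} applied to $R$ then gives $a - \alpha \ge 2^{-\bigo{(n^2\tau)}}$. Taking $\epsilon := a - \alpha$, so that $(a - \epsilon, a) \subseteq S$, yields $\log_2(1/\epsilon) = \bigo{(n^2\tau)}$, hence $\tau(t) = \bigo{(n^2\tau)}$.

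The hard part is the passage from the purely qualitative Proposition~\ref{th:ineqglobal} to the quantitative bound, and three points deserve attention. First, one must deflate the spurious factor $(x-t)^2$ before projecting: $f(x) - f_t(x)$ has $x$-discriminant identically zero, so the argument has to run on $h(x,t)$, and the divided-difference form above is what keeps the bitsize of $4h$ at $\bigo{(\tau + \log n)}$ — the Taylor form $\sum_k \tfrac{f^{(k)}(t)}{k!}(x-t)^{k-2}$ would instead contribute binomial coefficients of bitsize $\Theta(n)$. Second, one must check $\tilde D \not\equiv 0$ and that the boundary of $S$ is genuinely detected by $\tilde D$: here $f$ being square-free (Lemma~\ref{th:sfextend}) rules out $4h$ being a perfect square or $h(\cdot,t)$ having a multiple root for all $t$, while the nonvanishing of the $x$-leading coefficient of $4h$ rules out a loss of $x$-degree. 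Third, the factor $n^2$ in the final bitsize is created inside the iterated resultant $\mathrm{Res}_Y(f'(Y), \tilde D(Y-Z))$, whose degree and coefficient bitsize must be tracked with care after the shift $Y \mapsto Y-Z$, balancing the $\bigo{(n)}$-sized data coming from $f'$ against the $\bigo{(n^2)}$-degree data coming from $\tilde D$.
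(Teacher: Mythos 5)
Your proposal is correct in outline but takes a genuinely different route from the paper's. The paper's proof is a two-line black-box argument: it defines the admissible set $S=\{t\in\Q \mid \forall x\in\R,\ f(t)^2+f'(t)f(t)(x-t)+f'(t)^2(x-t)^2\le 4f(t)f(x)\}$, notes that the quantified polynomial has degree $2n$ and coefficient bitsize $\bigo{(\tau)}$, invokes the quantifier-elimination bitsize bounds of~\cite[\S 11.1.1]{BPR06} to conclude that $S$ is described by univariate polynomials in $t$ of bitsize $\bigo{(n^2\tau)}$, and finishes with the rational zero theorem. You instead carry out the elimination explicitly (deflating the factor $(X-t)^2$, taking the discriminant $\tilde D$ of the cofactor, and then separating $a$, a root of $f'$, from the boundary root $\alpha$ of $\tilde D$), and — importantly — you account for how the bisection loop in \texttt{parab} actually produces $t$. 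This addresses a step the paper leaves implicit: the $t$ returned by bisection is a dyadic point in the \emph{interior} of $S$, not literally "a rational root of one of these polynomials," so tying $\tau(t)$ to the number of bisections, hence to $\log_2(1/\epsilon)$, hence to the separation $a-\alpha$, is the honest way to close the argument. Your version is longer but more self-contained and more faithful to the algorithm. Two caveats. First, with your normalization one has $f(x)-f_t(x)=(x-t)^2h(x,t)/f(t)$ rather than $(x-t)^2h(x,t)$; this is harmless since you only use the sign equivalence. Second, the bitsize of $R(Z)=\mathrm{Res}_Y\bigl(f'(Y),\tilde D(Y-Z)\bigr)$ is $\bigo{(n^2\tau+n^3)}$ rather than $\bigo{(n^2\tau)}$: the shift $Y\mapsto Y-Z$ inflates the coefficients of $\tilde D$ by $\bigo{(n^2)}$ bits, and the resultant multiplies this by $\deg f'$. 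Replacing the shifted resultant by a standard root-separation bound for two coprime integer polynomials of degrees $n-1$ and $\bigo{(n^2)}$ and bitsizes $\bigo{(\tau+\log n)}$ and $\bigo{(n\tau+n\log n)}$ gives $\log_2(1/\epsilon)=\bigo{(n^2\tau+n^2\log n)}$, which matches the paper's bound up to logarithmic factors of the kind the paper itself absorbs; as written, your $\bigo{(n^2\tau)}$ claim for $\tau(R)$ is slightly optimistic but the approach is sound.
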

\begin{proof}
  \if{ Let us write the polynomial
    $f' = \sum_{i=0}^{n-1} \frac{b_i}{c_i} X^i$, such that each
    rational coefficient $\frac{b_i}{c_i}$ is irreducible and let
    $l \in \Z$ be the least common multiples of all the $c_i$.  First,
    let us estimate the bitsize of the coefficients of the polynomial
    $l f' \in \Z[X]$. One has
    $\tau(l) \leq \sum_{i=0}^{n-1} \tau(c_i) \leq n \tau (f')$,
    yielding
    $\tau (l f') \leq (n+1) \tau (f') \leq (n+1) \log_2 n + (n+1) \tau
    (f)$.  }\fi 

  Let us consider the set $S\subseteq \Q$ defined by:
\[ 
S := 
\{ 
t \in \Q \mid
\forall x \in \R \,, f(t)^2 + f'(t) f(t) (x - t) + f'(t)^2 (x - t)^2  \leq 4 f(t) f(x) \,
\}
\,.
\]
The polynomial involved in $S$ has degree $2 n$, with maximum bitsize
of the coefficients upper bounded by $2 \tau$.  {Thanks to the
  complexity analysis of the quantifier elimination procedure
  described in~\cite[\S11.1.1]{BPR06} the set $S$ can be described by
  polynomials with maximum bitsize coefficients upper bounded by
  $\bigo{(n^2 \tau)}$.}  Since $t$ is a rational root of one of these
polynomials, the rational zero
theorem~\cite{swokowski1989fundamentals} implies that
$\tau(t) = \bigo{(n^2 \tau)}$.
%
\end{proof}
\begin{lemma}
\label{th:bitft}
Let $f \in \Z[X]$ be a positive polynomial over $\R$, with
$\deg f = n$ and $\tau$ be an upper bound on the bitsize of the
coefficients of $f$.  When applying Algorithm~$\univsosone$ to $f$, the
sub-procedure \texttt{parab} outputs a polynomial $f_t$ such that
$\tau(f_t) = \bigo{(n^3 \tau)}$. Moreover, there exist polynomials $\hat{f}, \hat{f_t}, g \in \Z[X]$ such that $\hat{f} - \hat{f_t} = (X - t)^2 g$ and $\tau(g) = \bigo{(n^3 \tau)}$.
\end{lemma}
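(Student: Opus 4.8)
The plan is to track, expression by expression, how the (rational) coefficients of $f_t$ and of the quotient $(f-f_t)/(X-t)^2$ are assembled from the integer coefficients of $f$ and from $t$, using the bound $\tau(t)=\bigo{(n^2\tau)}$ from Lemma~\ref{th:bitroot}. Write $t=\frac pq$ in lowest terms with $q>0$, so $\tau(p),\tau(q)=\bigo{(n^2\tau)}$, and $f=\sum_{i=0}^n b_iX^i$ with each $b_i\in\Z$ of bitsize at most $\tau$. For $0\le k\le n$ one has the exact identity $\frac{f^{(k)}(t)}{k!}=q^{-(n-k)}E_k$ with $E_k:=\sum_{i=k}^n\binom ik b_ip^{i-k}q^{n-i}\in\Z$; since $E_k$ is a sum of at most $n+1$ products $\binom ik b_ip^{i-k}q^{n-i}$ and $\tau(p),\tau(q)=\bigo{(n^2\tau)}$, one gets $\tau(E_k)=\bigo{(n^3\tau)}$. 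Set $A:=E_0=q^nf(t)$ (so $A>0$ because $f$ is positive on $\R$) and $C:=E_1=q^{n-1}f'(t)$, both of bitsize $\bigo{(n^3\tau)}$.

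First I would bound $\tau(f_t)$. Substituting the above into $f_t=f(t)+f'(t)(X-t)+\tfrac{f'(t)^2}{4f(t)}(X-t)^2$ and using $(X-t)^j=q^{-j}(qX-p)^j$ gives the clean identity
\[
(4Aq^n)\,f_t=4A^2+4AC\,(qX-p)+C^2(qX-p)^2\in\Z[X].
\]
Expanding $(qX-p)^2$, each coefficient of the right-hand side is a $\Z$-combination of $A^2,AC,C^2$ weighted by products of at most two of $p,q$, hence has bitsize $\bigo{(n^3\tau)}$; since the common denominator $4Aq^n$ also has bitsize $\bigo{(n^3\tau)}$, this yields $\tau(f_t)=\bigo{(n^3\tau)}$.

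Next I would produce the factorisation. By the Taylor identity already recalled in the proof of Lemma~\ref{th:ineqlocal}, $f-f_t=(X-t)^2g_0$ with $g_0:=\sum_{k=2}^n\tfrac{f^{(k)}(t)}{k!}(X-t)^{k-2}-\tfrac{f'(t)^2}{4f(t)}\in\Q[X]$ of degree $n-2$. Using $\tfrac{f^{(k)}(t)}{k!}=q^{-(n-k)}E_k$ and $(X-t)^{k-2}=q^{-(k-2)}(qX-p)^{k-2}$, every summand has denominator $q^{n-2}$, while $\tfrac{f'(t)^2}{4f(t)}=\tfrac{C^2}{4Aq^{n-2}}$, so
\[
(4Aq^{n-2})\,g_0=4A\sum_{k=2}^nE_k(qX-p)^{k-2}-C^2\in\Z[X],
\]
whose coefficients are sums of at most $n$ products of integers of bitsize $\bigo{(n^3\tau)}$ with the coefficients of $(qX-p)^{k-2}$ (themselves of bitsize $\bigo{(n^3\tau)}$), hence again $\bigo{(n^3\tau)}$; thus $\tau(g_0)=\bigo{(n^3\tau)}$. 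Now set $\lambda:=4Aq^n$ and put $\hat f:=\lambda f$, $\hat{f_t}:=\lambda f_t$, $g:=\lambda g_0$. Then $\hat f\in\Z[X]$ because $f\in\Z[X]$, and $\hat{f_t},g\in\Z[X]$ because $\lambda$ clears the denominators $4Aq^n$ and $4Aq^{n-2}\mid\lambda$ of $f_t$ and $g_0$; all three have bitsize $\bigo{(n^3\tau)}$ by the estimates above; and $\hat f-\hat{f_t}=\lambda(f-f_t)=\lambda(X-t)^2g_0=(X-t)^2(\lambda g_0)=(X-t)^2g$, which is the required identity.

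The only genuine difficulty is the denominator bookkeeping: one must verify that a single integer $\lambda$ of bitsize $\bigo{(n^3\tau)}$ simultaneously clears the denominators of $f_t$ and of $g_0$ — this is why the factor $A$ (rather than a possibly larger common multiple involving all of the $E_k$) suffices, and why the power $q^n$ is enough. Once the correct $\lambda$ is identified, the numerator magnitude estimates are routine consequences of $\tau(t)=\bigo{(n^2\tau)}$ together with $\tau(f)\le\tau$.
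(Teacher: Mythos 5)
Your proof is correct and follows the same overall strategy as the paper's: invoke Lemma~\ref{th:bitroot} to get $\tau(t)=\bigo{(n^2\tau)}$, then clear denominators by multiplying by $f(t)$ times a power of the denominator of $t$. The two sub-bounds are, however, obtained by genuinely different means. For $\tau(f_t)$, the paper first proves the magnitude bound $\|f_t\|_\infty\le(n+1)\|f\|_\infty$ via the evaluation trick $0\le f_t(x)\le f(x)$ at $x=0,\pm1$ (a nonnegative quadratic is controlled by three of its values), and only afterwards multiplies by $t_2^{2n}f(t)$; you instead expand the numerators explicitly through the integers $E_k$ and bound them directly from $\tau(t)$. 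For $\tau(g)$, the paper applies the divisor bound of Lemma~\ref{th:bitdiv} to the integer polynomial $\hat f-\hat{f_t}$, whereas you compute the cofactor explicitly from the Taylor identity $f-f_t=(X-t)^2g_0$ already established in the proof of Lemma~\ref{th:ineqlocal} and bound its coefficients by hand; both routes give $\bigo{(n^3\tau)}$. Your normalizing factor $\lambda=4Aq^n$ is in fact slightly more careful than the paper's $t_2^{2n}f(t)$: the latter leaves a residual denominator of $4$ in the quadratic coefficient of $\hat{f_t}$, which the paper silently ignores, so your explicit factor $4$ is a small gain in rigor. The price of your route is the denominator bookkeeping you acknowledge; the payoff is a self-contained argument that needs neither the evaluation trick nor Lemma~\ref{th:bitdiv}.
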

\begin{proof}
One can write $f_t = M_2(t) X^2 + M_1(t) X + M_0(t)$ with  
\begin{align*}
M_2(t) & := \frac{f'(t)^2}{4f(t)} \,,\\
M_1(t) & := \frac{2 f'(t) (2 f(t) - t f'(t))}{4 f(t)}  \,,\\
M_0(t) & := \frac{(2 f(t) - t f'(t))^2}{4 f(t)} \,,
\end{align*}
and $\| f_t \|_{\infty} \leq \max \{ M_2(t), |M_1(t)|, M_0(t) \}$.
One has $0 \leq M_0(t) = f_t(0) \leq f(0) \leq \| f \|_{\infty}$.

In addition,
$0 \leq M_0(t) + M_1(t) + M_2(t) = f_t(1) \leq f(1) \leq (n+1) \| f
\|_{\infty}$
and
$0 \leq M_0(t) - M_1(t) + M_2(t) = f_t(-1) \leq f(-1) \leq (n+1) \| f
\|_{\infty}$.
Thus, one has
$M_0(t) + |M_1(t)| + M_2(t) \leq (n+1) \| f \|_{\infty}$, wich implies
that $\| f_t \|_{\infty} \leq (n+1) \| f \|_{\infty}$.

Now let us note $t = \frac{t_1}{t_2}$, with
$t_1 \in \Z, t_2 \in \Z \backslash \{0\}$, {$t_1$ and $t_2$ being coprime.} Let us define the polynomials
$\hat{f}(X) := t_2^{2 n} f(t) f(X)$ and
$\hat{f_t}(X) := t_2^{2 n} f(t) f_t(X)$. By writing
$f(X) = \sum_{i=0}^n a_i X^i$, one has
$t_2^{2 n} f(t) = \sum_{i=0}^n a_i t_1^i t_2^{2n-i} \leq \| f
\|_{\infty} |t_1|^i |t_2|^{2 n - i}$
and $\tau(\hat{f}) \leq \tau + \tau(t^{2n})$. By
Lemma~\ref{th:bitroot}, one has $\tau(\hat{f}) = \bigo{(n^3 \tau)}$.

The polynomials $\hat{f}(X), \hat{f_t}(X)$ are polynomials in $\Z[X]$
and since
$\| \hat{f_t} \|_{\infty} \leq (n+1) \| \hat{f} \|_{\infty}$, the
triangular inequality
$\| \hat{f} - \hat{f_t} \|_{\infty} \leq \| \hat{f} \|_{\infty} + \|
\hat{f_t} \|_{\infty}$
implies that
$\tau(\hat{f} - \hat{f_t}) \leq \log_2 (n+2) + \tau(\hat{f})$. In
addition,
$\tau(f_t) \leq \tau(\hat{f_t}) + \tau(t_2^{2n} f(t)) = \bigo{(n^3
  \tau)}$.

As in the proof of {Proposition}~\ref{th:nnsatz}, one has $(\hat{f} - \hat{f_t})(t) = 0$ which allows to write the square-free decomposition of the polynomial $\hat{f} - \hat{f_t} \in \Z[X]$ as $\hat{f} - \hat{f_t} = (X - t)^2 g$, with $g \in \Z[X]$. By Lemma~\ref{th:bitdiv}, one has $\tau(g) \leq n-2 + \tau(\hat{f} - \hat{f_t}) + \log_2(n + 1) \leq n-2 + 2 \log_2(n + 2) + \tau(\hat{f}) = \bigo{(n^3 \tau)}$, which concludes the proof.
\end{proof}
\begin{theorem}
\label{th:bitsize}
Let $f \in \Z[X]$ be a positive polynomial over $\R$, with
$\deg f = n = 2 k$ and $\tau$ be an upper bound on the bitsize of the
coefficients of $f$.  Then the maximum bitsize of the coefficients involved in the
SOS decomposition of $f$ obtained with Algorithm~$\univsosone$ is
upper bounded by
$ \bigo{( (k!)^3 \tau )} = \bigo{((\frac{n}{2})^{\frac{3n}{2}}
  \tau)}$.
\end{theorem}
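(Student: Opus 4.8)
The plan is to track the growth of the coefficient bitsize through the successive iterations of the while loop in Algorithm~$\univsosone$. The key observation is that at each pass through the loop the degree of the current polynomial drops by exactly $2$, so starting from $\deg f = n = 2k$ the loop runs at most $k$ times, producing at most $k$ quadratic polynomials $q_i$ and $k$ square-free factors $h_i$. The coefficients of the SOS decomposition are precisely the coefficients of these $h_i$ and $q_i$, so it suffices to bound the bitsize of the polynomial handed to the $j$-th iteration, and then note that Lemma~\ref{th:bitft} controls the one-step blow-up.

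First I would set up the recursion. Write $f^{(0)} := f$ and let $f^{(j)}$ denote the polynomial on which the $j$-th iteration of the while loop operates (it is either the $g$ from the square-free decomposition of $f^{(j-1)}$ when $h$ is nontrivial, or the polynomial $g$ arising from $\widehat{f^{(j-1)}} - \widehat{f^{(j-1)}_t} = (X-t)^2 g$, possibly further reduced by a square-free step). Set $\tau_j := \tau(f^{(j)})$ and $n_j := \deg f^{(j)} \le n - 2j$. From Lemma~\ref{th:bitft} applied to $f^{(j-1)}$ (whose degree is at most $n_{j-1}$), together with Lemma~\ref{th:bitdiv} for the square-free step, we get a recurrence of the shape $\tau_j = \bigo{(n_{j-1}^3\, \tau_{j-1})}$; more precisely there is an absolute constant $C$ with $\tau_j \le C\, n_{j-1}^3 \,\tau_{j-1}$, and $n_{j-1} \le n$. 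Iterating this bound $k$ times would only give $\bigo{(n^{3k}\tau)}$, which is too weak, so the refinement I would make is to keep the degree dependence sharp: since $n_{j-1} \le n - 2(j-1) = 2(k-j+1)$, we have $\tau_j \le C\, (2(k-j+1))^3\, \tau_{j-1}$, and unrolling gives
\[
\tau_k \;\le\; C^k \prod_{j=1}^{k} \bigl(2(k-j+1)\bigr)^3 \,\tau \;=\; (8C)^k (k!)^3\, \tau.
\]
Since $C$ is absolute and $8^k (k!)^3 = \bigo{((k!)^3 \cdot (k!)^{o(1)})}$ is absorbed into the $\bigo{(\cdot)}$ after observing $8^k \le (k!)$ for $k$ large (or simply noting $8^k (k!)^3 = \bigo{((k!)^{4})}$ is \emph{not} what we want — rather one writes the bound as $\bigo{(C_0^k (k!)^3 \tau)}$ and then uses $C_0^k = 2^{\bigo{(k)}} = (k!)^{o(1)}$, hence $C_0^k (k!)^3 = \bigo{((k!)^{3}\, 2^{\bigo{(k)}})}$, which is $k^{\bigo{(k)}}\tau = \bigo{((\tfrac n2)^{\frac{3n}2}\tau)}$ after taking logarithms: $\log(k!) = k\log k - k + \bigo{(\log k)}$, so $3\log(k!) + \bigo{(k)} = 3k\log k + \bigo{(k)} = \tfrac{3n}{2}\log\tfrac n2 + \bigo{(n)}$). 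I would present this last step cleanly via Stirling: $(k!)^3 = 2^{3k\log_2 k + \bigo{(k)}}$ and the stray $2^{\bigo{(k)}}$ factors are swallowed into the exponent, giving $\bigo{((k!)^3\tau)} = \bigo{((\tfrac n2)^{\frac{3n}2}\tau)}$.

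The last iteration and the final quadratic $q$ left outside the loop (Step after the while) need a separate mention: the polynomial $f^{(k)}$ has degree $2$ (or the loop exits with $\deg f^{(k)} \le 2$), and its bitsize is $\tau_k = \bigo{((k!)^3\tau)}$; applying Lemma~\ref{th:parab} to it produces a two-square decomposition whose coefficients are, up to the rational arithmetic in completing the square, of bitsize $\bigo{(\tau_k)}$, hence still within the claimed bound. Likewise every $q_j = f^{(j-1)}_t$ has $\tau(q_j) = \bigo{(n_{j-1}^3 \tau_{j-1})} = \bigo{(\tau_j)} = \bigo{((k!)^3\tau)}$ by Lemma~\ref{th:bitft}, and every $h_j$ divides the relevant polynomial so Lemma~\ref{th:bitdiv} keeps it at $\bigo{(\tau_j)}$ as well. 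Taking the maximum over the $\bigo{(k)}$ pieces preserves the bound.

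The main obstacle is the bookkeeping that keeps the \emph{degree} factor in the recurrence tight rather than pessimistically replacing every $n_{j-1}$ by $n$: doing the latter turns $(k!)^3$ into $n^{3k}$, an exponentially worse bound. The point I would emphasize is that the product $\prod_{j=1}^k n_{j-1}^3$ telescopes into $(k!)^3$ (times a harmless $8^k$) precisely because each iteration strictly decreases the degree by $2$, a fact guaranteed by the structure of the square-free / \texttt{parab} steps established in the proof of Proposition~\ref{th:nnsatz}. A secondary, purely routine, point is to check that the constants hidden in Lemmas~\ref{th:bitroot}, \ref{th:bitft} and~\ref{th:bitdiv} are genuinely absolute (independent of $j$), which they are since those lemmas are stated for an arbitrary positive integer polynomial; only then does raising the one-step constant to the $k$-th power and absorbing it via Stirling make sense.
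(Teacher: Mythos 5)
Your proposal follows essentially the same route as the paper's proof: iterate the one-step bound of Lemma~\ref{th:bitft} while keeping the degree factor tight (the paper writes $\tau(f_{i-1}) = \bigo{(i^3\tau(f_i))}$ for the degree-$2i$ polynomial $f_i$, which is exactly your $n_{j-1}\le 2(k-j+1)$ observation), telescope the product into $(k!)^3$, and absorb the remaining exponential-in-$k$ factors via Stirling to reach $\bigo{((\tfrac n2)^{\frac{3n}{2}}\tau)}$. Your explicit tracking of the $(8C)^k$ constant and of the $q_j$, $h_j$ and final quadratic is slightly more careful than the paper's terse treatment, but it is the same argument.
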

\begin{proof}
  \if{ With $k = n/2$, one writes the polynomial
    $f = \sum_{i=0}^{2 k} \frac{b_i}{c_i} X^i$, such that each
    rational coefficient $\frac{b_i}{c_i}$ is irreducible and let
    $l \in \Z$ be the least common multiples of all the $c_i$.  First,
    let us estimate the bitsize of the coefficients of the polynomial
    $l f \in \Z[X]$. One has
    $\tau(l) \leq \sum_{i=0}^{2 k} \tau(c_i) \leq (2k+1) \tau $,
    yielding
    $\tau (l f) \leq \tau(l) + \tau (f) \leq (2k+2) \tau = \bigo{(k
      \tau)}$.
  }\fi With $k = n/2$ and starting from the polynomial $f$,
  Algorithm~$\univsosone$ generates, in the worst case scenario, two
  sequences of polynomials
  $f_{k}, \dots, f_1 \in \Z[X], q_k, \dots, q_{2} \in \Z[X]$ as well
  as rational numbers $t_k, \dots, t_2 \in \Q$ such that $f_k = f$,
  $t_i = \frac{t_{i1}}{t_{i2}}$, with $t_{i1} \in \Z$,
  $t_{i2} \in \Z \backslash \{0\}$ and
\begin{equation}
\label{eq:univsos}
t_{i2}^{4 i} f_i(t_i) f_i - q_i = (X - t_i)^2 f_{i-1}  \,, \quad i=2,\dots,k   \,.
\end{equation}
From Lemma~\ref{th:bitft}, for all $i=2,\dots,k$, one has
$\tau(f_{i-1}) = \bigo{(i^3 \tau (f_i))}$. This yields
$\tau(f_1) = \bigo{\bigl( (k!)^3 \tau(f) \bigr)}$.
%

Using Stirling's formula, we obtain
$k! \leq 2 \sqrt{2 \pi k} \bigl(\frac{k}{e}\bigr)^k$ and
$ (k!)^3 \leq 1024 \sqrt{2} \pi^{\frac{3}{2}} k^{\frac{3}{2}}
\bigl(\frac{k}{e}\bigr)^{3k}$, 
where $e$ denotes the Euler number.  Since $k \leq e^{k}$ for each
integer $k \geq 1$ and $\frac{3}{2} < 3$, one has
$(k!)^3 \in \bigo{(k^{3k})}$, yielding
$\tau(f_i) = \bigo{((\frac{n}{2})^{ \frac{3n}{2} } \tau) }$, for all
$i = 1,\dots,k$.  Similarly, we obtain
$\tau(q_i) = \bigo{((\frac{n}{2})^{ \frac{3n}{2} } \tau) }$, for all
$i = 1,\dots,k$.
Finally, using Lemma~\ref{th:bitroot}, one has $\tau(t_i) = \bigo{(i^2 \tau(f_i))}$, yielding the desired result. 
\end{proof}
\subsection{Bit complexity analysis}
\label{sec:bitop1}
\begin{theorem}
\label{th:bitop}
Let $f \in \Z[X]$ be a positive polynomial over $\R$, with
$\deg f = n = 2 k$ and $\tau$ be an upper bound on the bitsize of the
coefficients of $f$.
Then, on input $f$, Algorithm~$\univsosone$ runs in boolean time
$$ \bigotilde{( k^3 \cdot (k!)^3 \tau )} =
\bigotilde{\left (\left (\frac{n}{2}\right )^{\frac{3 n}{2}}
    \tau\right )}.$$
\end{theorem}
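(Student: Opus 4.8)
The plan is to bound the number of while-loop iterations and the cost of each iteration, then multiply. By Proposition~\ref{th:nnsatz}, the loop runs at most $k = n/2$ times, and at iteration $i$ (counting down from $k$ to $2$) we work with the polynomial $f_i$ whose bitsize, by the analysis in Theorem~\ref{th:bitsize}, satisfies $\tau(f_i) = \bigo{((k!)^3\tau)}$ in the worst case (in fact $\tau(f_i)=\bigo{((i!)^3\cdots)}$ growing with the iteration index, but the crude uniform bound $\bigo{((\frac n2)^{\frac{3n}{2}}\tau)}$ suffices). The degree is always at most $n$. So each iteration manipulates polynomials of degree $\bigo{(n)}$ with coefficients of bitsize $\tau_{\max} := \bigo{((\frac n2)^{\frac{3n}{2}}\tau)}$.

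Next I would account for the three sub-procedures called in each iteration. First, \texttt{sqrfree} on a degree-$\bigo{(n)}$ integer polynomial of bitsize $\tau_{\max}$: by Lemma~\ref{th:costsqrfree} this costs $\bigotilde{(n^2\tau_{\max})}$ boolean operations. Second, \texttt{parab}, which performs real root isolation on $f'$ to produce a suitable rational $t\in\Q$ approximating the smallest global minimizer from below: by Lemma~\ref{th:rootisolation}, isolating the real roots to the required precision costs $\bigotilde{(n^3 + n^2\tau_{\max} + n\kappa)}$ where $\kappa$ is the number of bits of precision needed; since Lemma~\ref{th:bitroot} guarantees $\tau(t)=\bigo{(n^2\tau_{(i)})}$ where $\tau_{(i)}$ is the bitsize of the current polynomial, we have $\kappa = \bigo{(n^2\tau_{\max})}$, so this is absorbed into $\bigotilde{(n^3\tau_{\max})}$; one must also check that only polynomially many bisection steps are needed to locate $t$ in the neighborhood $(a-\epsilon,a)$ of Proposition~\ref{th:ineqglobal} and the evaluations $f(t), f'(t)$ and the formation of $f_t$ cost $\bigotilde{(n\cdot\tau(t)) }=\bigotilde{(n^3\tau_{\max})}$. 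Third, forming $f - f_t$ and its integer rescaling $\hat f - \hat{f_t}$ (the subtraction and the multiplication by $t_2^{2n}f(t)$), again degree $\bigo{(n)}$, bitsize $\bigo{(n^3\tau_{\max})}$ by Lemma~\ref{th:bitft}; all arithmetic here is $\bigotilde{(n\cdot n^3\tau_{\max})}$ at most. Summing, one iteration costs $\bigotilde{(n^3\tau_{\max}) + n^2\cdot n^3\tau_{\max}}$; being generous, each of the $\bigo{(k)}$ iterations is $\bigotilde{((k!)^3 k^{\bigo{(1)}}\tau)}$, and multiplying by the $k$ iterations gives $\bigotilde{(k^3\cdot(k!)^3\tau)}$ after collapsing the polynomial-in-$k$ factors into the soft-Oh and using $(k!)^3=\bigo{(k^{3k})}$, i.e.~$\bigotilde{((\frac n2)^{\frac{3n}{2}}\tau)}$.

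The main obstacle, and the point requiring the most care, is the cost of \texttt{parab}: one must argue that the number of bisection steps performed while searching for a valid $t$ is polynomially bounded. This requires a lower bound on the width $\epsilon$ of the interval $(a-\epsilon,a)$ from Proposition~\ref{th:ineqglobal}, i.e.~a separation-type bound showing that once the isolating interval for $a$ (a root of $f'$) is narrow enough, any rational endpoint $t<a$ inside it already satisfies $f_t\le f$; combined with Cauchy-type root bounds (Lemma~\ref{th:cauchyint}) for $f'$, this forces $\kappa=\bigo{(n^2\tau_{\max})}$ bits, which is exactly what keeps the root-isolation cost within $\bigotilde{(n^3\tau_{\max})}$. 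Once this is granted, the rest is bookkeeping: all intermediate objects have degree $\bigo{(n)}$ and bitsize dominated by $\tau_{\max}=\bigo{((\frac n2)^{\frac{3n}{2}}\tau)}$ from Theorem~\ref{th:bitsize}, each of the $\bigo{(k)}$ passes through the loop costs a fixed polynomial in $n$ times $\tau_{\max}$ boolean operations up to logarithmic factors, and the product telescopes into the claimed bound. I would also remark that the same argument bounds the cost of \emph{verifying} the decomposition, since expanding the Horner-like form involves only $\bigo{(k)}$ multiplications of degree-$\bigo{(n)}$ polynomials of bitsize $\tau_{\max}$.
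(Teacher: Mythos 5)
Your proposal is correct and follows essentially the same route as the paper: both bound the $\bigo{(k)}$ iterations, use Lemma~\ref{th:costsqrfree} for the square-free decompositions, Lemma~\ref{th:rootisolation} together with the precision $\kappa=\bigo{(n^2\tau(f_i))}$ coming from Lemma~\ref{th:bitroot} for the \texttt{parab} calls, and the bitsize bounds of Theorem~\ref{th:bitsize}. The only cosmetic difference is that you use the uniform worst-case bitsize $\tau_{\max}$ for every iteration while the paper tracks $\tau(f_i)$ per iteration and sums the telescoping series; both are legitimate since the extra polynomial-in-$k$ factors are logarithmic in $(k!)^3$ and are absorbed by the soft-Oh, exactly as in the paper's final Stirling estimate.
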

\begin{proof}
  For $i=2, \ldots, k$ we obtain each polynomial $f_{i-1}$ as in the
  proof of Theorem~\ref{th:bitsize} by computing the square-free
  decomposition of the polynomial $t_{i2}^{4 i} f_i(t_i) f_i - q_i$.
%
  It follows by Lemma~\ref{th:costsqrfree} that the polynomial $f_{i-1}$ can be computed using an expected number
  of $\bigotilde{(i^2 \cdot i^3 \tau(f_i))}$ boolean operations. The number
  of boolean operations to compute all polynomials $f_1, \dots, f_{k-1}$
  is thus bounded by
  $\bigotilde{\bigl(k^2 \cdot k^3 \tau(lf) + (k-1)^2 (k-1)^3 k^3 \tau(lf)
    + \dots + (k!)^3 \tau(lf) \bigr)}$. 

{    
%
%
  For each $i=2, \dots, k$, the bitsize of the rational number $t_i$
  is upper bounded by $\bigo{(i^2 \tau(f_i))}$. Therefore, $t_i$ can
  be computed by approximating the roots of $f_i'$ with isolating
  intervals of radius less than $2^{-i^2 \tau(f_i)}$. By
  Lemma~\ref{th:rootisolation}, the corresponding computation cost is
  $\bigotilde{(i^3 \tau(f_i))}$ boolean operations. The number of
  boolean operations to compute all rational numbers $t_2, \dots, t_k$
  is bounded by
  $\bigotilde{\bigl(k^3 \cdot k^3 \tau(lf) + (k-1)^{3} (k-1)^3 k^3
    \tau(lf) + \dots + (k!)^3 \tau(lf) \bigr)}$.}
    
{In addition, one has
  $k^3 \cdot k^3 + (k-1)^3 (k - 1)^3 k^3 + \dots + (k!)^3 = (k!)^3
  \sum_{i=1}^{k} \frac{1}{(i!)^3} \leq 2 k^3 \cdot (k!)^3$.
  Using Stirling's formula, we obtain
  $k^3 \cdot (k!)^3 \leq 1024 \sqrt{2} \pi^{\frac{3}{2}}
  k^{\frac{9}{2}} \bigl(\frac{k}{e}\bigr)^{3k}$.
  Since $k^{3/2} \leq e^{k}$ for each integer $k \geq 1$, we obtain
  the announced complexity.}
\end{proof}
{  
For a given polynomial $f$ of degree $2 k$, one can check  the correctness of the SOS decomposition obtained with Algorithm~$\univsosone$ by evaluating this SOS polynomial at $2 k + 1$ distinct points and compare the results with the ones obtained while evaluating $f$ at the same points.
}  
\begin{theorem}
\label{th:verif}
Let $f \in \Z[X]$ be a positive polynomial over $\R$, with
$\deg f = n = 2 k$ and $\tau$ be an upper bound on the bitsize of the
coefficients of $f$.  Then one can check the correctness of the SOS
decomposition of $f$ obtained with Algorithm~$\univsosone$ within 
\[ \bigotilde{( k \cdot (k!)^3 \tau )} =
\bigotilde{((\frac{n}{2})^{\frac{3 n}{2}} \tau)}\] 
boolean operations.
\end{theorem}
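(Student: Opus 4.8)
The plan is to certify the identity produced by Algorithm~$\univsosone$ by multipoint evaluation, exactly as indicated just before the statement. Recall that the two output lists encode the Horner-like expression
$f = h_r^2\bigl(h_{r-1}^2(\dots)+q_{r-1}\bigr)+q_r$
with $r\le k$ and the $q_i$ quadratic. Both sides are polynomials of degree at most $n=2k$, so the identity is correct if and only if it holds at $2k+1$ pairwise distinct points; I would take the integers $0,1,\dots,2k$ and bound the cost of evaluating and comparing the two sides at each of them.

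For the bitsizes, Theorem~\ref{th:bitsize} guarantees that every coefficient of every $h_i$ and $q_i$, like every coefficient of $f$, is a rational of bitsize $\bigo{((k!)^3\tau)}$; moreover there are at most $2k$ of these polynomials and each has degree at most $2k$. Fix an integer point $j\le 2k$. Evaluating one such polynomial, of degree $d$, by Horner's scheme uses $\bigo{(d)}$ rational operations, all on numbers of bitsize $\bigo{(d\,(k!)^3\tau)}=\bigo{(k\,(k!)^3\tau)}$ — the running numerator and denominator stay bounded since they divide, respectively involve, the first $d$ coefficients — hence $\bigotilde{(k^2(k!)^3\tau)}$ boolean operations; doing this for all the $h_i,q_i$ costs $\bigotilde{(k^3(k!)^3\tau)}$, while the single value $f(j)$ costs only $\bigotilde{(k^2\tau)}$. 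Next, the value of the right-hand side at $j$ is assembled by the loop $v\leftarrow q_1(j)$, then $v\leftarrow h_i(j)^2v+q_i(j)$ for $i=2,\dots,r$; each step is a squaring, a multiplication and an addition, and raises the bitsize of $v$ by $\bigo{(k\,(k!)^3\tau)}$, so after the $\le k$ steps $v$ has bitsize $\bigo{(k^2(k!)^3\tau)}$ and the loop costs $\bigotilde{(k^3(k!)^3\tau)}$ boolean operations; one further comparison of $v$ with $f(j)$ costs no more. Altogether one sample point is handled in $\bigotilde{(k^3(k!)^3\tau)}$ boolean operations, and the $2k+1$ points in $\bigotilde{(k^4(k!)^3\tau)}$.

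Finally I would absorb the polynomial overhead. Since $\log_2\!\bigl(k\,(k!)^3\tau\bigr)\ge 3\log_2(k!)$, which is $\Omega(k\log k)$ for $k\ge 2$, every fixed power of $k$ is bounded by a power of $\log_2\!\bigl(k\,(k!)^3\tau\bigr)$, whence $\bigotilde{(k^4(k!)^3\tau)}=\bigotilde{(k\,(k!)^3\tau)}$; combining this with Stirling's formula, used exactly as in the proof of Theorem~\ref{th:bitsize} to get $(k!)^3=\bigo{(k^{3k})}$, yields the announced bound $\bigotilde{(k\,(k!)^3\tau)}=\bigotilde{((\tfrac n2)^{\frac{3n}2}\tau)}$. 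The only place requiring genuine care is the bitsize bookkeeping in the reconstruction loop: a naive $r$-fold nested product could a priori inflate the bitsize by a factor growing with $r$, so one must check — as above — that with the Horner-style folding every intermediate rational stays of bitsize $\bigo{(k^2(k!)^3\tau)}$; everything else is routine, and the factorial term is so dominant that the exact polynomial-in-$k$ factors are immaterial.
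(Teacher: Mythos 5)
Your proposal is correct and follows essentially the same route as the paper: both certify the Horner-like identity by evaluating the two sides at $2k+1$ distinct small integers and let the coefficient bitsize bound $B=\bigo{((k!)^3\tau)}$ from Theorem~\ref{th:bitsize} dominate the cost. The only difference is that the paper invokes fast multipoint evaluation (Corollary~10.8 of von zur Gathen--Gerhard) to obtain the factor $k\cdot B$ directly, whereas you evaluate naively point by point and then absorb the extra polynomial-in-$k$ overhead into the soft-$\mathcal{O}$ --- a step you rightly justify by noting that $\log_2\bigl((k!)^3\bigr)=\Omega(k\log k)$, so the two bounds coincide as $\bigotilde{(k\cdot(k!)^3\tau)}$.
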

\begin{proof}
  From~\cite[Corollary 8.27]{Gathen99}, the cost of polynomial
  multiplication in $\Z[X]$ of degree less than $n = 2k$ with
  coefficients of bitsize upper bounded by $B$ is bounded by
  $\bigotilde{(k \cdot B)}$.  By Theorem~\ref{th:bitsize}, the maximal
  bitsize of the coefficients of the SOS decomposition of $f$ obtained
  with Algorithm~$\univsosone$ is upper bounded by
  $B = \bigo{( (k!)^3 \tau )}$.  Let us consider $2 k+1$ distinct integers, with maximal bitsize upper bounded by $\log_2 n$. 
  Therefore, from~\cite[Corollary
  10.8]{Gathen99}, the cost of the evaluation of this decomposition at the $2 k + 1$ points 
  can be performed using at most
  $\bigotilde{(k \cdot (k!)^3 \tau)}$ boolean operations, the desired
  result.
\end{proof}
%
\begin{remark}
\label{rk:univsos1}
Let $f_k = f \in \Z[X]$. Under the strong assumption that all polynomials $f_k, \dots, f_1$ involved in Algorithm $\univsosone$ have at least one integer global minimizer, then Algorithm $\univsosone$ has a polynomial complexity. Indeed, in this case, $q_i = f_i(t_i)$, $\tau(t_i) = \bigo{(\tau(f_i))}$ and $\tau(f_{i-1}) = \bigo{( 2 (i - 1) + \tau(f_i)  )}$, for all $i=2,\dots,k$. {Hence, the maximal bitsize of the coefficients involved in the SOS decomposition of $f$ is upper bounded by $\bigo{(k^2 + \tau)}$ and this decomposition can be computed using an expected number of $\bigotilde{(k^4 + k^3 \tau)}$ boolean operations.} 
\end{remark}

\section{Nichtnegativstellens\"atze with perturbed polynomials}
\label{sec:method2}
Here, we recall the algorithm  given
in~\cite[Section~5.2]{Chevillard11}. The description of this
algorithm, denoted by~$\univsostwo$, is given in
Figure~\ref{alg:univsos2}.

\subsection{Algorithm~\texttt{univsos2}}
\label{sec:univsos2}
Given a subfield $K$ of $\R$ and a non-negative polynomial
$f = \sum_{i=0}^n f_i \, X^i \in K[X]$ of degree $n = 2 k$, one first
obtains the square-free decomposition of $f$, yielding $f = p \, h^2$
with $p > 0$ on $\R$ (see Step~\lineref{line:sqrfree3}).Then the idea
is to find a positive number $\varepsilon > 0$ in $K$ such that the
perturbed polynomial
$p_\varepsilon(X) := p(X) - \varepsilon \sum_{i=0}^k X^{2 i}$ is also
positive on $\R$ (see~\cite[Section~5.2.2]{Chevillard11} for more
details). This number is computed thanks to the loop going from
Step~\lineref{line:epsi} to Step~\lineref{line:epsf} and relies on the
auxiliary procedure $\hasrealroots$ which checks whether the
polynomial $p_\varepsilon$ has real roots using {root isolation techniques}. As mentioned in~\cite[Section~5.2.2]{Chevillard11}, the
number $\varepsilon$ is divided by 2 again to allow a margin of safety
(Step~\lineref{line:epsafe}).

Note that one can always ensure that the leading coefficient
${\ell} := p_n$ of $p$ is the same as the leading coefficients $f_n$ of the
input polynomial $f$.

We obtain an approximate rational sums of squares decomposition of the
polynomial $p_\varepsilon$ with the auxiliary procedure
$\sumtwosquares$ (Step~\lineref{line:sumtwosquares}), relying on an
arbitrary precision complex root finder.  Recalling
Theorem~\ref{th:sosR}, this implies that the polynomial $p$ can be
approximated as close as desired by the weighted sum of two polynomial
squares in $\Q[X]$, that is
${\ell} s_1^2+ {\ell} s_2^2$.

Thus there exists a remainder polynomial
$u := p_\varepsilon - \\
{\ell} s_1^2 - {\ell} s_2^2$ with coefficients of
arbitrary small magnitude (as mentioned
in~\cite[Section~5.2.3]{Chevillard11}).  The magnitude of the
coefficients converges to 0 as the precision $\delta$ of the complex
root finder goes to infinity. The precision is increased thanks to the
loop going from Step~\lineref{line:deltai} to
Step~\lineref{line:deltaf} until a condition between the coefficients
of $u$ and $\varepsilon$ becomes true, ensuring that
$\varepsilon \sum_{i=0}^k X^{2 i} + u(X)$ also admits a weighted SOS
decomposition. For more details,
see~\cite[Section~5.2.4]{Chevillard11}.

\begin{figure}[!t]
\begin{algorithmic}[1]                    
\Require non-negative polynomial $f \in K[X]$ of degree $n \geq 2$, with $K$ a subfield of $\R$, $\varepsilon \in K$ such that $0 < \varepsilon < f_n$, precision $\delta \in \N$ for complex root isolation
\Ensure list $\clist$ of numbers in $K$ and list $\slist$ of polynomials in $K[X]$
\State $(p, h) \gets \texttt{sqrfree}(f)$ \label{line:sqrfree3}
\Comment{$f = p \, h^2$}
\State $n' := \deg p$, $k := n'/2$
\State $p_\varepsilon \gets p - \varepsilon \sum_{i=0}^k X^{2 i}$
\While {$\hasrealrootsfun{p_\varepsilon}$} \label{line:epsi}
\State $\varepsilon \gets \frac{\varepsilon}{2}$, $p_\varepsilon \gets p - \varepsilon \sum_{i=0}^k X^{2 i}$
\EndWhile \label{line:epsf}
\State $\varepsilon \gets \frac{\varepsilon}{2}$ \label{line:epsafe}
\State $(s_1, s_2) \gets \sumtwosquaresfun{p_\varepsilon}{ \delta}$ \label{line:sumtwosquares}
\State ${\ell} \gets f_n$, $u \gets p_\varepsilon - {\ell} s_1^2 - {\ell} s_2^2$, 
 $u_{-1} \gets 0$, $u_{2k + 1} \gets 0$
\Comment{$u = \sum_{i=0}^{2 k - 1} u_i X^i$}
\While {$ \varepsilon < \min_{0 \leq i \leq k} \bigl\{ \, \frac{|u_{2 i + 1}|}{4} 
- u_{2 i} + |u_{2 i - 1}| \, \bigr\} $} \label{line:deltai}
\State $\delta \gets 2 \delta$, $(s_1, s_2) \gets \sumtwosquaresfun{p_\varepsilon}{\delta}$, $u \gets p_\varepsilon - {\ell} s_1^2 - {\ell} s_2^2$
\EndWhile \label{line:deltaf}
\State $\clist \gets [{\ell}, {\ell}] $, $\slist \gets [h \, s_1, h \, s_2] $
\For {$i=0$ to $k-1$}
\State $\clist \gets  \clist \cup \{ |u_{2 i + 1}| \} $, $\slist \gets  \slist \cup 
\{ h \, (X^{i + 1} + 
\frac{\sgn{(u_{2 i + 1})}}{2} X^{i}) \}$
\State $\clist \gets  \clist \cup \{ \varepsilon - \frac{|u_{2 i + 1}|}{4} 
+ u_{2 i} - |u_{2 i - 1 }| \} $, $\slist \gets  \slist \cup 
\{ h \, X^i \}$
\EndFor
\State \Return $\clist \cup \{ \varepsilon + u_{n} - |u_{n - 1}| \} $, $\slist \cup 
\{h \, X^k \}$
\end{algorithmic}
\caption{$\univsostwo$: algorithm to compute SOS decompositions of non-negative univariate polynomials.}
\label{alg:univsos2}
\end{figure}

The reason why Algorithm~$\univsostwo$ terminates is the following: at
first, one can always find a sufficiently small perturbation
$\varepsilon$ such that the perturbed polynomial $p_\varepsilon$
remains positive. Next, one can always find sufficiently precise
approximations of the complex roots of $p_\varepsilon$ ensuring that
the error between the initial polynomial $p$ and the approximate SOS
decomposition is compensated thanks to the perturbation term.

The output of Algorithm~$\univsostwo$ are a list of numbers in $K$ and
a list of polynomials in $K[X]$, allowing to retrieve a weighted SOS
decomposition of $f$. The size $r$ of both lists is equal to
$2 k+3 = n' + 3 \leq n + 3$.  If we note $c_r, \dots, c_1$ the numbers
belonging to $\clist$ and $s_r, \dots, s_1$ the polynomials belonging
to $\slist$, one obtains the following SOS decomposition:
$f = c_r s_r^2 + \dots + c_1 s_1^2$.
\begin{example}
\label{ex:2}
Let us consider the same polynomial
$f := \frac{1}{16} X^6 + X^4 - \frac{1}{9} X^3 - \frac{11}{10} X^2 +
\frac{2}{15} X + 2 \in \Q[X]$
as in Example~\ref{ex:1}.  We describe the different steps performed
by Algorithm~$\univsos1$:

\begin{itemize}
\item The polynomial $f$ is square-free so we obtain $p = f$
  (Step~\lineref{line:sqrfree3}). After performing the loop from
  Step~\lineref{line:epsi} to Step~\lineref{line:epsf},
  Algorithm~$\univsos2$ provides the value
  $\varepsilon = \frac{1}{32}$ at Step~\lineref{line:epsafe} as well
  as the polynomial
  $p_\varepsilon := p - \frac{1}{32} (1 + X^2 + X^4 + X^6)$ which has
  no real root.
\item Next, after increasing three times the precision in the loop
  going from Step~\lineref{line:deltai} to Step~\lineref{line:deltaf},
  the result of the approximate root computation yields
  $s_1 = X^3 - \frac{69}{8} X$ and
  $s_2 = 7 X^2 - \frac{1}{4} X - \frac{63}{8}$.
\end{itemize}

Applying Algorithm~$\univsos2$, we obtain the following two lists of
size $6 + 3 = 9 $:
\begin{align*}
\clist = & \biggl[\frac{1}{32},\frac{1}{32},\frac{913}{15360},\frac{731}{92160},\frac{7}{1152},\frac{1}{32},\frac{79}{7680},\frac{1}{576},0 \biggl] \,, \\ 
\slist = &
\biggl[X^3 - \frac{69}{8} X,
7 X^2 - \frac{1}{4} X - \frac{63}{8},  
1,
X,
X^2,
X^3,
X+ \frac{1}{2},
X (X - \frac{1}{2}),
X^2 (X + \frac{1}{2})
\biggr] 
\,, 
\end{align*}
yielding the following weighted SOS decomposition: 
\begin{align*}
f = 
\frac{1}{32} \biggl(X^3 - \frac{69}{8} X\biggr)^2 + \frac{1}{32} \biggl(7 X^2 - \frac{1}{4} X - \frac{63}{8}\biggr)^2 + \frac{913}{15360} + \frac{731}{92160} X^2 \\
+ \frac{7}{1152}X^4 + \frac{1}{32}X^6 + \frac{79}{7680} \biggl(X+ \frac{1}{2}\biggr)^2 + \frac{1}{576}X^2 \biggl(X - \frac{1}{2}\biggr)^2
 \,.
\end{align*}
\end{example}

\subsection{Bit size of the output}
\label{sec:bitsize2}
%
First, we need the following auxiliary result:
\begin{lemma}
\label{th:lowerbound}
Let $p \in \Z[X]$ be a positive polynomial over $\R$, with
$\deg p = n = 2 k$  
and $\tau$ be an upper bound on the bitsize of the
coefficients of $p$. 
Then, one has
\[
\inf_{x \in \R} p(x) 
> (n 2^{\tau})^{-n+2} 2^{-n \log_2 n - n \tau}
\,.
\]
\end{lemma}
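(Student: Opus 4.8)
The plan is to lower-bound $\inf_{x\in\R}p(x)$ by separately controlling the minimum over a bounded interval containing all critical points and the growth of $p$ outside that interval. Since $p$ is positive on $\R$, its infimum is attained at a real root $a$ of $p'$, or the infimum is $p$'s leading behaviour at $\pm\infty$ (which is $+\infty$ since $\deg p$ is even and $p>0$ forces positive leading coefficient), so in fact the infimum is attained at some real root $a$ of $p'$. Thus it suffices to bound $p(a)$ from below for such $a$.

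First I would bound $|a|$: by Lemma~\ref{th:cauchyint} applied to $p'$ (whose degree is $n-1$ and whose coefficient bitsize is at most $\tau+\log_2 n$, since differentiating multiplies coefficients by integers $\le n$), any nonzero root $a$ of $p'$ satisfies $|a|\le 2^{\tau+\log_2 n}+1$, and $a=0$ is harmless. So all critical points lie in an interval of radius roughly $n2^{\tau}$. Next I would obtain a lower bound on $|p(a)|$ for a root $a$ of $p'$: the key tool is that $p(a)$ is (up to clearing denominators) a nonzero algebraic number — it is a root of the resultant $\mathrm{Res}_X(p'(X),\,Y-p(X))\in\Z[Y]$, which is not identically zero and is nonzero at $Y=p(a)$ because $p(a)>0$. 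One then applies the Cauchy/Mignotte lower bound (Lemma~\ref{th:cauchyint} again) to this resultant polynomial in $Y$: its degree and coefficient bitsize can be estimated by standard resultant bounds (degree $\bigo{(n)}$, bitsize $\bigo{(n\tau+n\log n)}$), giving $p(a)\ge$ the reciprocal of something like $2^{\bigo{(n\tau+n\log n)}}$, which is of the claimed order.

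Alternatively, and perhaps more in the spirit of the elementary bounds already cited, I would write $p(a)$ directly: since $p'(a)=0$, Taylor expansion or the explicit formula for the minimum value of $p$ in terms of $a$ gives $p(a)$ as a rational number whose denominator, after clearing, has bitsize $\bigo{(n\tau+n\log n)}$ (coming from powers of the leading coefficient and of a common denominator of $a$, whose bitsize is $\bigo{(\tau+\log n)}$ by the rational-root theorem applied to $p'$, cf.\ Lemma~\ref{th:bitdiv}); since $p(a)>0$ this rational is at least the reciprocal of its denominator, namely at least $2^{-\bigo{(n\tau+n\log n)}}$. Matching constants against the target $(n2^{\tau})^{-n+2}2^{-n\log_2 n-n\tau}$ is then a routine bookkeeping exercise.

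The main obstacle I anticipate is getting the constants in the exponent to match the stated bound exactly rather than merely up to $\bigo{(\cdot)}$: one must track carefully the bitsize of the common denominator of the critical point $a$ (a root of $p'$), the power to which the leading coefficient $\ell$ of $p$ must be raised to clear denominators in $p(a)$, and the contribution $n\log_2 n$ coming from the number of terms in $p$. The cleanest route is probably: let $q:=\ell\,p'/\gcd(\ldots)$ so that $a$ is an algebraic integer times $1/\ell$ up to small factors, substitute into $\ell^{?}p(a)$ to clear all denominators, observe this is a positive integer hence $\ge 1$, and divide back through — the exponents produced this way should line up with $-n+2$ and $-n\log_2 n-n\tau$ after using $\tau(p')\le\tau+\log_2 n$ and $|\ell|\le 2^{\tau}$.
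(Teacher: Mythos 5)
Your primary route is sound and is, in substance, the argument behind the paper's proof: the paper simply invokes the third item of \cite[Lemma 3.2]{Melczer16} with $A=p'$, $Q=p$ and $\alpha$ a global minimizer of $p$ (which exists because $p$ is positive of even degree, hence coercive, so the minimum is attained at a real root of $p'$), obtaining $\inf_{x\in\R}p(x)>(n2^{\tau})^{-n+2}2^{-n\tau'}$ where $\tau'\le\tau+\log_2 n$ bounds the bitsize of $p'$. That cited lemma is proved essentially the way you propose in your first route, by viewing $p(a)$ as a nonzero root of $\mathrm{Res}_X\bigl(p'(X),\,Y-p(X)\bigr)\in\Z[Y]$ and applying a Cauchy-type lower bound to its roots. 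So, modulo the constant-tracking you explicitly defer (which is exactly what the citation supplies: the specific exponents $-n+2$ and $-n\tau'$), your first paragraph reproduces the paper's proof.

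Your ``alternative, more elementary'' route, however, has a genuine gap: the global minimizer $a$ is a root of $p'\in\Z[X]$ but is in general an \emph{irrational} algebraic number, so the rational root theorem applied to $p'$ says nothing about it, and the phrase ``a common denominator of $a$'' is meaningless. Consequently $p(a)$ is not a rational number whose denominator you can clear to get a positive integer $\ge 1$; it is an algebraic number, and the only way to extract an integrality/separation statement for it is through its minimal polynomial, i.e.\ through the resultant --- which brings you back to the first route. If you want the lemma self-contained rather than cited, keep the resultant argument and carry out the degree and height bounds for $\mathrm{Res}_X(p'(X),Y-p(X))$ explicitly; the stated exponents come out of precisely that computation combined with $\tau(p')\le\tau+\log_2 n$.
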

\begin{proof}
  Denoting by $\tau'$ the maximum bit size of the coefficients of $p'$
  and instantiating $\alpha=\inf_{x\in \R} p(x)$ with a global
  minimizer of $p$, $Q$ with $p$ and $A$ with $p'$ in the third item
  of~\cite[Lemma 3.2]{Melczer16}, one obtains.
\[
\inf_{x \in \R} p(x) > (n 2^{\tau})^{-n+2} 2^{-n \tau'}
\]
Now, remark that $\tau' \leq \log_2 n + \tau$. Using this inequality
in the one above allows to conclude. 
\end{proof}

\begin{lemma}
\label{th:bitrooteps}
Let $p \in \Z[X]$ be a positive polynomial over $\R$, with
$\deg p = n = 2 k$ and let $\tau$ be an upper bound on the bitsize of
the coefficients of $p$. Then there exists a positive integer $N$ such
that for $\varepsilon := \frac{1}{2^N}$, the polynomial
$p_\varepsilon := p- \varepsilon \sum_{i=0}^k X^{2 i}$ is positive
over $\R$ and
$N = \tau (\varepsilon) \leq \bigo{(n \log_2 n + n
  \tau)}$.
\end{lemma}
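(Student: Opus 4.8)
The plan is to produce an explicit admissible value of $N$ by forcing the perturbation $\varepsilon\sum_{i=0}^k X^{2i}$ to be uniformly dominated by the minimum of $p$ on $\R$. First I would observe that for every $x\in\R$ one has $p_\varepsilon(x)=p(x)-\varepsilon\sum_{i=0}^k x^{2i}$, so a sufficient condition for $p_\varepsilon$ to stay positive is $\varepsilon\sum_{i=0}^k x^{2i}<p(x)$ for all $x$. The difficulty is that $\sum_{i=0}^k x^{2i}$ is unbounded, so one cannot simply compare it with $\inf_{x\in\R}p(x)$; instead I would split $\R$ into a bounded part and an unbounded part. On the bounded part, say $|x|\le R$ for a well-chosen $R$, the sum $\sum_{i=0}^k x^{2i}$ is bounded by $(k+1)R^{2k}$, and $p(x)\ge\inf_{x\in\R}p(x)$, which is strictly positive and bounded below by the explicit quantity from Lemma~\ref{th:lowerbound}. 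On the unbounded part $|x|>R$, I would choose $R$ large enough (e.g.\ slightly above the Cauchy bound $2^{\tau}+1$ of Lemma~\ref{th:cauchyint} applied to $p$, and also above $1$) so that $p(x)$ is already comfortably larger than its leading term's competitors; more precisely, writing $p=\sum_{i=0}^{2k}b_iX^i$ with $b_{2k}\ge 1$, for $|x|>R$ one gets $p(x)\ge \tfrac12 b_{2k}x^{2k}\ge\tfrac12 x^{2k}$ by the standard Cauchy-type estimate, while $\varepsilon\sum_{i=0}^k x^{2i}\le\varepsilon(k+1)x^{2k}$ since $|x|>1$, so the inequality holds as soon as $\varepsilon<\tfrac{1}{2(k+1)}$.

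Next I would combine the two regimes: pick $\varepsilon=2^{-N}$ with $N$ large enough that simultaneously $\varepsilon<\tfrac{1}{2(k+1)}$ and $\varepsilon(k+1)R^{2k}<\inf_{x\in\R}p(x)$. The first condition costs only $N\ge\log_2(2(k+1))=\bigo{(\log_2 n)}$. For the second, taking $R=2^{\tau}+2$ (so $\tau(R)=\bigo{(\tau)}$) gives $R^{2k}\le 2^{2k(\tau+1)}$, hence $(k+1)R^{2k}\le 2^{\bigo{(n\tau+\log_2 n)}}$; on the other hand Lemma~\ref{th:lowerbound} gives $\inf_{x\in\R}p(x)>(n2^{\tau})^{-n+2}2^{-n\log_2 n-n\tau}$, whose base-$2$ logarithm is $-\bigo{(n\log_2 n+n\tau)}$. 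Therefore the requirement $2^{-N}(k+1)R^{2k}<\inf_{x\in\R}p(x)$ is met once $N$ exceeds $\log_2\!\big((k+1)R^{2k}\big)-\log_2\!\big(\inf_{x\in\R}p(x)\big)=\bigo{(n\log_2 n+n\tau)}$. Taking $N$ to be this bound (rounded up to an integer, and enlarged if necessary to also satisfy the cheap first condition) yields a positive integer $N=\bigo{(n\log_2 n+n\tau)}$ for which $p_\varepsilon$ is positive over $\R$, and $\tau(\varepsilon)=\tau(2^{-N})=N+\bigo{(1)}=\bigo{(n\log_2 n+n\tau)}$, which is the claim.

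The main obstacle is the bookkeeping in the unbounded-tail estimate: one must choose $R$ explicitly in terms of $\tau$ (via the Cauchy root bound) so that $p(x)\ge\tfrac12 b_{2k}x^{2k}$ there, and then verify that this same $R$ has bitsize $\bigo{(\tau)}$ so that $R^{2k}$ contributes only $\bigo{(n\tau)}$ to the final logarithm rather than something worse; everything else is a direct comparison of the two explicit bounds. Note that the statement only asserts existence of such an $N$ with the stated size bound — it does not claim that the $N$ actually found by the halving loop in Algorithm~\ref{alg:univsos2} equals this value — so it suffices to exhibit one admissible $N$ of the right order, and the loop, which stops at the first admissible power of two encountered while halving, can only terminate with a value of $\varepsilon$ no smaller than $2^{-N-1}$ for this $N$ (giving the same asymptotic bound on the bitsize of the $\varepsilon$ actually returned).
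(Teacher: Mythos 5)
Your plan follows essentially the same route as the paper: split $\R$ into $|x|\le R$ and $|x|>R$, handle the tail with a Cauchy-type bound, handle the bounded part by comparing $\varepsilon(1+R^2+\dots+R^n)$ against $\inf_{x\in\R}p(x)$ via Lemma~\ref{th:lowerbound}, and read off $N=\bigo{(n\log_2 n+n\tau)}$. The one quantitative slip is your choice $R=2^{\tau}+2$: the Cauchy root bound of Lemma~\ref{th:cauchyint} only tells you that $p$ has no roots beyond $2^{\tau}+1$, and it does \emph{not} give the domination $p(x)\ge\tfrac12 b_{2k}x^{2k}$ there. For that estimate you need $\sum_{i<2k}\frac{|b_i|}{b_{2k}}|x|^{i-2k}\le\tfrac12$, which for $|x|\ge 1$ requires $|x|$ of order $n2^{\tau}$, not $2^{\tau}$. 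The paper sidesteps this by applying the Cauchy bound to the auxiliary polynomial $r:=p-\frac{p_n}{2}\sum_{i=0}^k X^{2i}$, obtaining $R=2n2^{\tau}$ beyond which $r$ (hence $p_\varepsilon\ge r$ for $\varepsilon\le\frac12\le\frac{p_n}{2}$) is positive. Replacing your $R$ by anything of order $n2^{\tau+1}$ repairs the tail estimate and still has $\log_2 R=\bigo{(\tau+\log_2 n)}$, so $R^{2k}$ contributes $\bigo{(n\tau+n\log_2 n)}$ to the final logarithm and your stated bound on $N$ is unaffected. With that correction the argument is complete.
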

\begin{proof}
  Let us first consider the polynomial
  $r := p - \frac{p_n}{2} \sum_{i=0}^k X^{2 i}$. Using~\cite[Corollary
  10.4]{BPR06}, the absolute value of each real root of the polynomial
  $r$ is bounded by $n \, 2^{\tau(r)} \leq 2 n 2^{\tau}$. By defining
  $R := 2 n 2^{\tau}$, it follows that the polynomial $r$ is positive
  for all $|x| > R$. In addition, for all positive integer $N$ and
  $\varepsilon = \frac{1}{2^N}$, one has {$\varepsilon \leq \frac{1}{2} \leq \frac{p_n}{2}$} and 
  $p_\varepsilon = p- \varepsilon \sum_{i=0}^k X^{2 i} \geq p -
  \frac{p_n}{2} \sum_{i=0}^k X^{2 i} = r$,
  which implies that the polynomial $p_\varepsilon$ is also positive
  for all $|x| > R$.
%
%
  Since {$R = 2 n 2^{\tau} > 1$}, one has
  $1 + R^2 \dots + R^n < n R^n$.  Let us choose the smallest positive
  integer $N$ such that $n R^n \leq 2^N \inf_{|x| \leq R} p$.  This
  implies that
  $\varepsilon < \frac{\inf_{|x| \leq R} p} {1 + R^2 \dots + R^n} $,
  ensuring that the polynomial $p_\varepsilon$ is also positive for
  all $|x| \leq R$. Applying Lemma~\ref{th:lowerbound}, we obtain the
  following upper bound: \if{
\[
2^N \leq n R^n \frac{2^{(2 n - 1) \tau(p)} (n+1)^{2 n - 1/2}}{3^{n/2} } 
= \frac{n^{n+1} 2^n 2^{n \tau(p)} 2^{(2 n - 1) \tau(p)} (n+1)^{2 n - 1/2}}{3^{n/2} } 
\,,
\]
}
\fi
\[
2^N \leq n R^n (n 2^{\tau})^{n-2} 2^{n \log_2 n + n \tau}  
= n^{n+1} 2^n 2^{n \tau} (n 2^{\tau})^{n-2} 2^{n \log_2 n + n \tau}. 
\]
The announced estimate follows straightforwardly. 
%
%
\end{proof}

In the sequel, we denote by $z_1,\dots, z_n$ the {(not necessarily distinct)} complex roots of the
polynomial $p_\varepsilon$. {Assuming that we approximate each complex root with a relative precision of $\delta$, 
  we write $\hat{z_1},\dots, \hat{z_n}$ for the approximate complex
  root values satisfying $ \hat{z_i} = z_i (1 + e_i)$, with
  $|e_i| \leq 2^{- \delta}$, for all $i=1,\dots,n$.}
\begin{theorem}
\label{th:bitsize2}
Let $f \in \Z[X]$ be a positive polynomial over $\R$, with
$\deg f = n$ and $\tau$ be an upper bound on the bitsize of the
coefficients of $f$.  Then the maximal bitsize of the coefficients
involved in the weighted SOS decomposition of $f$ obtained with
Algorithm~$\univsostwo$ is upper bounded by
$ \bigo{( n^3 + n^2 \tau )}$.
\end{theorem}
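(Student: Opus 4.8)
The plan is to track the bitsize growth through each step of Algorithm~$\univsostwo$, exactly as the algorithm proceeds. First I would dispose of the square-free decomposition: by Lemma~\ref{th:costsqrfree} (and the standard bound Lemma~\ref{th:bitdiv}) the polynomials $p$ and $h$ produced at Step~\lineref{line:sqrfree3} have degree at most $n$ and coefficient bitsize $\bigo{(n + \tau)}$, so it suffices to analyze the perturbation-and-approximation stage on $p$ with $\deg p = n' \le n$ and $\tau(p) = \bigo{(n + \tau)}$. Next I would invoke Lemma~\ref{th:bitrooteps}, which already gives $\tau(\varepsilon) = N = \bigo{(n\log_2 n + n\tau)}$, so $\varepsilon$ and the trailing coefficients $\varepsilon + u_n - |u_{n-1}|$, $\varepsilon - \frac{|u_{2i+1}|}{4} + u_{2i} - |u_{2i-1}|$ contribute at most $\bigo{(n\log_2 n + n\tau)} + \tau(u)$ to the bitsize; likewise the constants $|u_{2i+1}|$ in $\clist$ are governed by $\tau(u)$. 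Hence the whole theorem reduces to bounding $\tau(s_1), \tau(s_2)$ and $\tau(u)$, where $u = p_\varepsilon - \ell s_1^2 - \ell s_2^2$.

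The core of the argument is therefore to bound the bitsize of the approximate SOS pieces $s_1, s_2$ returned by $\sumtwosquares$ and of the remainder $u$, and to control how large the precision $\delta$ must be chosen before the \textbf{while}-loop at Step~\lineref{line:deltai} exits. Following the construction behind Theorem~\ref{th:sosR}: write $p_\varepsilon = \ell\prod_{k}((X-(b_k+ic_k))(X-(b_k-ic_k)))^{s_k}$ (all roots of $p_\varepsilon$ are non-real since it is positive), set $q+ir := \prod_k(X-(b_k+ic_k))^{s_k}$ and take $s_1 := q$, $s_2 := r$ up to rational rounding of the coefficients. I would use Cauchy's bound (Lemma~\ref{th:cauchyint}) to see $|z_i| \le 2^{\tau(p_\varepsilon)}+1$ with $\tau(p_\varepsilon) = \bigo{(n\log_2 n + n\tau)}$, then apply Vieta's formulas (Lemma~\ref{eq:vieta}) to the factor $\prod_k(X-(b_k+ic_k))^{s_k}$ to bound its exact coefficients — each is a signed sum of at most $2^n$ products of roots, hence of modulus at most $2^n(2^{\tau(p_\varepsilon)}+1)^n$, giving exact coefficients of bitsize $\bigo{(n\log_2 n + n\tau)}$. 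Using the approximations $\hat z_i = z_i(1+e_i)$ with $|e_i| \le 2^{-\delta}$, a first-order error propagation through these products (there are $n$ factors, so relative error $\le n 2^{-\delta}$ up to constants, times the absolute bound $2^n(2^{\tau(p_\varepsilon)}+1)^n$) shows that rounding the approximate coefficients to precision comparable to $\delta$ yields $s_1, s_2 \in \Q[X]$ with $\tau(s_1), \tau(s_2) = \bigo{(n\log_2 n + n\tau + \delta)}$, and then $u = p_\varepsilon - \ell s_1^2 - \ell s_2^2$ has $\|u\|_\infty = \bigo{(2^{\bigo{(n\log_2 n+n\tau)}} \cdot n 2^{-\delta})}$, so $\tau(u) = \bigo{(n\log_2 n + n\tau - \delta + \log_2 n)}$ plus the bitsize of the exact part, which is $\bigo{(n\log_2 n + n\tau + \delta)}$. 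The loop exit condition $\varepsilon < \min_i\{\frac{|u_{2i+1}|}{4} - u_{2i} + |u_{2i-1}|\}$ is implied once $\|u\|_\infty$ is small compared to $\varepsilon = 2^{-N}$, i.e. once $\delta$ exceeds $N + \bigo{(n\log_2 n + n\tau)} = \bigo{(n\log_2 n + n\tau)}$; since $\delta$ only doubles, the final $\delta$ is still $\bigo{(n\log_2 n + n\tau)}$. Substituting this $\delta$ back gives $\tau(s_1), \tau(s_2), \tau(u) = \bigo{(n\log_2 n + n\tau)}$, and multiplying back the $h$-factors and squaring in the $\slist$ entries only multiplies degrees and doubles bitsizes, so all coefficients of the final decomposition have bitsize $\bigo{(n\log_2 n + n\tau)} \subseteq \bigo{(n^3 + n^2\tau)}$ (indeed the cruder bound $\bigo{(n^2 + n\tau)}$ would follow, but the stated $\bigo{(n^3 + n^2\tau)}$ is what we record for uniformity with the runtime analysis).

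The main obstacle is the error-propagation step: making rigorous the claim that the inexact complex roots $\hat z_i$ yield, after forming $\prod_k(X-(\hat b_k + i\hat c_k))^{s_k}$ and taking real and imaginary parts, a genuine \emph{rational} SOS approximant whose remainder $u$ is provably small enough to trigger the loop exit — this requires carefully combining the absolute Cauchy/Vieta bound on the magnitude of the exact coefficients with the relative-precision hypothesis $|e_i|\le 2^{-\delta}$, and checking that the rounding to $\Q$ does not spoil the inequality at Step~\lineref{line:deltai}. A secondary subtlety is that $\varepsilon$ may be halved an a priori unbounded number of times in the first loop (Steps~\lineref{line:epsi}--\lineref{line:epsf}), so one must argue via Lemma~\ref{th:bitrooteps} that the value of $N$ at which the loop stops is already $\bigo{(n\log_2 n + n\tau)}$, which is exactly what that lemma provides.
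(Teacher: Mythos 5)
Your overall route is the same as the paper's: reduce to the square-free part $p$, invoke Lemma~\ref{th:bitrooteps} for $\varepsilon$, and use Vieta's formulas together with the relative-error model $\hat z_i = z_i(1+e_i)$, $|e_i|\le 2^{-\delta}$, to control the remainder $u$ and the loop exit at Step~\lineref{line:deltai}. However, two quantitative steps are wrong, and they are precisely the ones that produce the exponents in the statement. First, Lemma~\ref{th:bitrooteps} must be applied to $p$, whose coefficient bitsize is $\bigo{(n+\tau)}$ by Lemma~\ref{th:bitdiv}, not $\tau$; this gives $N=\tau(\varepsilon)=\bigo{(n\log_2 n + n(n+\tau))}=\bigo{(n^2+n\tau)}$, and consequently $\delta=\bigo{(n^2+n\tau)}$, not $\bigo{(n\log_2 n+n\tau)}$ as you assert.

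Second, and more seriously, you conflate the \emph{magnitude} of the coefficients of $u$ with their \emph{bitsize} as exact rational numbers. By Vieta, $u_{n-j}=\ell\sum_{1\le i_1<\dots<i_j\le n}\bigl(z_{i_1}\cdots z_{i_j}-\hat z_{i_1}\cdots\hat z_{i_j}\bigr)$; each product of $j$ approximate roots carries a factor $(1+e_{i_1})\cdots(1+e_{i_j})$, and in the extremal instance $e_i=2^{-\delta}$ for all $i$ one gets $|u_{n-j}|=|a_{n-j}|\,|1-(1+2^{-\delta})^j|$, a rational whose exact representation costs about $j\delta$ bits. For $j=n$ this is $n\delta=\bigo{(n^3+n^2\tau)}$, which is exactly where the theorem's bound comes from. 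So your conclusion that all coefficients have bitsize $\bigo{(n\log_2 n+n\tau)}$ and that $\bigo{(n^3+n^2\tau)}$ is merely ``recorded for uniformity'' is a misreading: a coefficient of tiny magnitude can still require a huge denominator, and $\bigo{(n^3+n^2\tau)}$ is the genuine outcome of the analysis. (Your aside about rounding the coefficients of $s_1,s_2$ to $\bigo{(\delta)}$ bits describes a different algorithm from the one analyzed here, and even then would give $\bigo{(n^2+n\tau)}$ rather than $\bigo{(n\log_2 n+n\tau)}$, since representing numbers to relative precision $2^{-\delta}$ already costs $\delta=\bigo{(n^2+n\tau)}$ bits.)
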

\begin{proof}
  Let $p$ be the square-free part of the polynomial $f$ (see
  Step~\lineref{line:sqrfree3} of Algorithm~$\univsostwo$). Then by
  using Lemma~\ref{th:bitdiv}, one has
  $\tau(p) \leq n + \tau + \log_2 (n+1) = \bigo{(n + \tau)}$.

  Let $\varepsilon = \frac{1}{2^N}$ as in Lemma~\ref{th:bitrooteps} so
  that the polynomial
  $p_\varepsilon = p- \varepsilon \sum_{i=0}^k X^{2 i}$ is positive
  over $\R$. By Lemma~\ref{th:bitrooteps}, one has
  $N = C(n^2 + n \tau)$ for some $C>1$.  Let us write
  $p_\varepsilon = \sum_{i=0}^n a_i X^i$ with $a_n = {\ell}$
  {and prove that a precision of
    $\delta := N + \log_2(5 n \|p\|_\infty) = C (n^2 + n \tau) +
    \log_2(5 n \|p\|_\infty)$
    is enough to ensure that the coefficients of $u$ satisfy
    $\varepsilon \geq \frac{|u_{2 i + 1}|}{4} - u_{2 i} + |u_{2 i -
      1}|$,
    for all $i=0,\dots, k$.  First, note that
    $e := 2^{-\delta} < \frac{1}{n (n+1)}$ holds.}  By using Vieta's
  formulas provided in Lemma~\ref{eq:vieta}, one has for all
  $j=1,\dots,n$:
\[
\sum_{1 \leq i_1 < \dots < i_j \leq n} z_{i_1} \dots z_{i_j} = (-1)^j \frac{a_{n-j}}{{\ell}} \,.
\]
Then one has for all $j=1,\dots,n$:
\[
 u_{n-j}  = {\ell}   \sum_{1 \leq i_1 < \dots < i_j \leq n} (z_{i_1} \dots z_{i_j} -  \hat{z}_{i_1} \dots \hat{z}_{i_j} ) =
  \sum_{1 \leq i_1 < \dots < i_j \leq n} z_{i_1} \dots z_{i_j}  \bigl(1 -  (1 + e_{i_1})\dots (1 + e_{i_j})  \bigr) \,.
\]
Since $e <  \frac{1}{n}$, one can apply~\cite[Lemma 3.3]{higham2002accuracy}, which yields $\prod_{1 \leq i_1 < \dots < i_j \leq n} (1 + e_{i_j}) \leq 1 + \theta_j$, with $|\theta_j| \leq \frac{j e}{1 - j e}$. In addition, one has $(j+1) e - \frac{j e}{1 - j e} = \frac{e (1 - j (j+1) e)}{1 - j e}  \geq 0$ since $e < \frac{1}{n (n+1)} < \frac{1}{j (j+1)}$, for all $j=1,\dots,n$. Hence, one has $|u_{n-j}| \leq |a_{n-j}| (j+1) e$, for all $j=1,\dots,n$. 

This implies that for all $i=0,\dots, k$:
\[
\frac{|u_{2 i + 1}|}{4} - u_{2 i} + |u_{2 i - 1}| \leq 
{e \| p_\varepsilon \|_{\infty} } \bigl(\frac{2n}{4} + 2n -1 + 2 n - 2 \bigr) \leq 5 n e \| p_\varepsilon \|_{\infty} \leq 5 n e \| p \|_{\infty} \,.
\]
%
{Since $\delta = N + \log_2(5 n \|p\|_\infty)$, one has $5 n e \| p \|_{\infty} = \varepsilon$. Thus, for all $i=0,\dots, k$, $\varepsilon \geq \frac{|u_{2 i + 1}|}{4} - u_{2 i} + |u_{2 i - 1}|$ holds with $\delta = \bigo{(n^2 + n \tau + \log_2 n + n + \tau)} = \bigo{(n^2 + n \tau)}$.}

For each $j=1, \dots, n$, choosing {$e_j = e = 2^{-\delta}$}
and $\hat{z}_j = z_j (1+2^{-\delta})$, yields
$|u_{n-j}| = | a_{n-j} | | 1 - (1 + 2^{-\delta})^j | $.  Next, we
bound the size of the weighted SOS decomposition.  One has
$\tau(\delta) = \bigo{ (n^2 + n \tau)}$ and for all $i=1,\dots, n$,
$\tau(a_{n-i}) \leq \tau (\varepsilon) = \bigo{( n^2 + n \tau)}$.
Therefore, for all $j=1, \dots, n$,
$\tau(u_{n-j}) \leq \bigo{(n^2 + n \tau + j (n^2 + n \tau) )}$ and the
maximal bitsize of the coefficients of $u$ is bounded by
$\bigo{(n^3 + n^2 \tau)}$.

From Lemma~\ref{th:cauchyint}, one has
$|\hat{z}_j | = | z_j | (1 + 2^{-\delta}) \geq
\frac{1}{2^{\tau(p_\varepsilon)} + 1} (1 + 2^{-\delta})|$,
so that it is enough to perform root isolation for the polynomial
$p_\varepsilon$ with a precision bounded by
$\bigo{( \tau(p_\varepsilon) + \delta)} = \bigo{(n^2 + n \tau)}$.


Finally, the weighted SOS decomposition of $f$ has coefficients of
maximal bitsize bounded by $\bigo{(n^3 + n^2 \tau)}$ as claimed.
\end{proof}

\subsection{Bit complexity analysis}
\label{sec:bitop2}
\begin{theorem}
\label{th:bitop2}
Let $f \in \Z[X]$ be a positive polynomial over $\R$, with
$\deg f = n = 2 k$ and $\tau$ be an upper bound on the bitsize of the
coefficients of $f$. Then, on input $f$, Algorithm~$\univsostwo$ runs in boolean time
$$ \bigotilde{(n^4 + n^3 \tau )}.$$ 
\end{theorem}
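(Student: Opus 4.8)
The plan is to bound separately the cost of each phase of Algorithm~$\univsostwo$ and then sum them, using the bitsize bounds already established in Theorem~\ref{th:bitsize2} together with the cost lemmas for square-free decomposition (Lemma~\ref{th:costsqrfree}) and root isolation (Lemma~\ref{th:rootisolation}). First I would account for Step~\lineref{line:sqrfree3}: by Lemma~\ref{th:costsqrfree} the square-free decomposition $f = p\,h^2$ costs $\bigotilde{(n^2\tau)}$ boolean operations, and by Lemma~\ref{th:bitdiv} the square-free part $p$ has $\tau(p) = \bigo{(n+\tau)}$, so this phase is negligible compared to the claimed bound.

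Next I would analyze the loop computing $\varepsilon$ (Steps~\lineref{line:epsi}--\lineref{line:epsafe}). By Lemma~\ref{th:bitrooteps} the loop terminates with $N = \tau(\varepsilon) = \bigo{(n\log_2 n + n\tau)} = \bigo{(n^2 + n\tau)}$, hence it performs $\bigo{(n^2 + n\tau)}$ iterations. Each iteration calls $\hasrealroots$ on $p_\varepsilon$, a polynomial of degree $n$ whose coefficients have bitsize $\bigo{(n^2 + n\tau)}$ (dominated by $\tau(\varepsilon)$), which by Lemma~\ref{th:rootisolation} costs $\bigotilde{(n^3 + n^2(n^2+n\tau))} = \bigotilde{(n^4 + n^3\tau)}$ per call. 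Multiplying by the number of iterations would overshoot, so the key observation here is that one should instead binary-search / reuse the fact that only $N = \bigo{(n^2+n\tau)}$ bits of $\varepsilon$ matter: the relevant sign test needs to be done essentially once at the final precision, or the iterations can be telescoped so that the total is $\bigotilde{(n^4 + n^3\tau)}$. I would phrase this as: the total cost of determining $N$ is bounded by the cost of a single root-isolation call at bitsize $\bigo{(n^2+n\tau)}$, namely $\bigotilde{(n^4 + n^3\tau)}$.

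Then I would treat the calls to $\sumtwosquares$ and the precision-doubling loop (Steps~\lineref{line:sumtwosquares}, \lineref{line:deltai}--\lineref{line:deltaf}). From the proof of Theorem~\ref{th:bitsize2}, it suffices to reach precision $\delta = \bigo{(n^2 + n\tau)}$, and since $\delta$ is doubled at each step, only $\bigo{(\log\delta)} = \bigo{(\log n + \log\tau)}$ iterations occur; the dominant single call is $\sumtwosquares(p_\varepsilon,\delta)$, which performs complex root isolation of $p_\varepsilon$ to precision $\bigo{(\tau(p_\varepsilon)+\delta)} = \bigo{(n^2+n\tau)}$ — costing $\bigotilde{(n^3 + n^2\cdot(n^2+n\tau))} = \bigotilde{(n^4 + n^3\tau)}$ by Lemma~\ref{th:rootisolation} — followed by forming $s_1, s_2$ from the approximate roots via Vieta-type products and computing $u = p_\varepsilon - \ell s_1^2 - \ell s_2^2$. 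Multiplying two polynomials of degree $n$ with coefficients of bitsize $\bigo{(n^3 + n^2\tau)}$ (the output bitsize from Theorem~\ref{th:bitsize2}) costs $\bigotilde{(n\cdot(n^3+n^2\tau))} = \bigotilde{(n^4 + n^3\tau)}$ by fast multiplication (\cite[Corollary~8.27]{Gathen99}). Finally, the \textbf{for}-loop assembling $\clist$ and $\slist$ does $\bigo{(n)}$ arithmetic operations on numbers and polynomials of bitsize $\bigo{(n^3+n^2\tau)}$ and degree $\bigo{(n)}$, contributing $\bigotilde{(n^4 + n^3\tau)}$; and multiplication by $h$ is likewise within budget. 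Summing all phases gives the announced $\bigotilde{(n^4 + n^3\tau)}$.

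The main obstacle I anticipate is the $\varepsilon$-loop bookkeeping: naively one multiplies "number of halvings" by "cost of one $\hasrealroots$ call at worst-case bitsize," which is too large. The careful argument is that the cost of the sign tests is geometric — at iteration $j$ one tests $p_\varepsilon$ with $\tau(\varepsilon) = j$, costing $\bigotilde{(n^3 + n^2 j)}$, and $\sum_{j=1}^{N}\bigotilde{(n^3 + n^2 j)} = \bigotilde{(n^3 N + n^2 N^2)} = \bigotilde{(n^4 + n^3\tau + n^2(n^2+n\tau)^2)}$, which unfortunately already exceeds the target. So I would instead argue that $N$ itself can be found directly from the explicit bound in Lemma~\ref{th:bitrooteps} together with a single verification that $p_\varepsilon$ is positive at that $N$ (one root-isolation call of cost $\bigotilde{(n^4+n^3\tau)}$), rather than by iterated halving; similarly the $\delta$-loop is controlled because $\delta$ grows geometrically and $\log\delta$ is small. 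This is the step that requires the most care to keep the bound tight; everything else is a routine summation of terms each already $\bigotilde{(n^4 + n^3\tau)}$.
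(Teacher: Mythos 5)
Your proposal is correct and follows essentially the same phase-by-phase accounting as the paper: square-free decomposition via Lemma~\ref{th:costsqrfree}, one dominant complex root isolation of $p_\varepsilon$ at precision $\bigo{(\tau(p_\varepsilon)+\delta)}=\bigo{(n^2+n\tau)}$ via Lemma~\ref{th:rootisolation}, and the computation of $u$, each term fitting in $\bigotilde{(n^4+n^3\tau)}$. Two divergences are worth noting. First, for the test $\hasrealrootsfun{p_\varepsilon}$ the paper invokes Sylvester--Habicht sequences (Lickteig--Roy) at $\bigotilde{(n^2\,\tau(\varepsilon))}=\bigotilde{(n^3\tau)}$ per call rather than Lemma~\ref{th:rootisolation}; both choices stay within budget. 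Second, and more substantively, the paper charges only a single real-root test and is silent on the $N=\bigo{(n\log_2 n+n\tau)}$ iterations of the $\varepsilon$-halving loop --- exactly the bookkeeping issue you flag, and your observation that the naive iteration-times-worst-case product (or even the telescoped sum $\sum_j n^2 j$) overshoots is accurate. Your resolution --- read off $N$ from the explicit bound of Lemma~\ref{th:bitrooteps} (or find it by a geometric search) and verify positivity once --- is a legitimate way to close that gap and is in fact more careful than the published argument, at the mild cost of reorganizing the loop as written. Finally, the paper obtains the coefficients of $u$ from the closed form $|u_{n-j}|=|a_{n-j}|\,\bigl|1-(1+2^{-\delta})^j\bigr|$ rather than by polynomial multiplication as you do; both give $\bigotilde{(n^4+n^3\tau)}$ in total, so this difference is immaterial.
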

\begin{proof}
  By Lemma~\ref{th:costsqrfree}, the square-free decomposition of $f$
  can be computed using an expected number of $\bigotilde{(n^2\tau )}$
  boolean operations.
  {Checking that the polynomial $p_\varepsilon$ has no real
    root can be performed using an expected number of
    $\bigotilde{(n^2 \cdot \tau(\varepsilon))} = \bigotilde{(n^3
      \tau)}$
    boolean operations while relying on Sylvester-Habicht
    Sequences~\cite[Corollary~5.2]{Lickteig01}.}

  As seen in the proof of Theorem~\ref{th:bitsize2}, the complex roots
  of $p_\varepsilon$ must be approximated with isolating intervals
  (resp.~disks) of radius less than
  $2^{ - \tau(p_\varepsilon)- \delta}$.  Thus, by
  Lemma~\ref{th:rootisolation}, all real (resp.~complex) roots of
  $p_\varepsilon$ can be computed in
  $\bigotilde{(n^3 + n^2 \tau(p_\varepsilon) + n (\delta +
    \tau(p_\varepsilon) )} = \bigotilde{(n^4 + n^3 \tau )}$
  boolean operations.
  
  {As in the proof of Theorem~\ref{th:bitsize2}, one can select
    $|u_{n-j}| = | a_{n-j} | | 1 - (1 + 2^{-\delta})^j |$, for all
    $j=1, \dots, n$. This implies that the computation of each
    coefficient of $u$ can be performed with at most
    $\bigotilde{(n \cdot \tau(\delta))} =\bigotilde{(n^3 + n^2 \tau )}
    $
    boolean operations. Eventually, we obtain a bound of
    $\bigotilde{(n^4 + n^3 \tau )}$ for the computation of all
    coefficients of $u$, which yields the desired result.}
\end{proof}
We state now the complexity result for checking the SOS certificates
output by Algorithm $\univsostwo$. As for the output of
Algorithm~$\univsosone$, this is done through evaluation of the output
at $n+1$ distinct values where $n$ is the degree of the output. 
\begin{theorem}
\label{th:verif2}
Let $f \in \Z[X]$ be a positive polynomial over $\R$, with
$\deg f = n = 2 k$ and $\tau$ be an upper bound on the bitsize of the
coefficients of $f$.  Then one can check the correctness of the
weighted SOS decomposition of $f$ obtained with
Algorithm~$\univsostwo$ using $ \bigotilde{( n^4 + n^3 \tau )}$ bit
operations.
\end{theorem}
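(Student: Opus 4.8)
The plan is to reduce the verification of the SOS certificate to polynomial evaluation at sufficiently many points, exactly as was done for Algorithm~$\univsosone$ in the proof of Theorem~\ref{th:verif}. Recall that Algorithm~$\univsostwo$ outputs two lists $\clist = [c_r,\dots,c_1]$ and $\slist = [s_r,\dots,s_1]$ of length $r = n'+3 \le n+3$, and the claimed identity is $f = \sum_{j=1}^r c_j s_j^2$. Since both sides are polynomials of degree at most $n$, this identity holds if and only if it holds after evaluation at $n+1$ distinct points; we may take $n+1$ distinct integers of bitsize $\bigo{(\log_2 n)}$. So the verification procedure is: evaluate $f$ and each $c_j s_j^2$ at these $n+1$ points and compare.

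The key steps are then the following. First I would invoke Theorem~\ref{th:bitsize2}: the coefficients of all the polynomials $s_j$ and the numbers $c_j$ appearing in the output have bitsize bounded by $B := \bigo{(n^3 + n^2\tau)}$, and each $s_j$ has degree at most $k \le n/2$. Second, forming $s_j^2$ costs one polynomial multiplication of degree-$\le k$ polynomials with coefficients of bitsize $\bigo{(B)}$; by~\cite[Corollary~8.27]{Gathen99} this is $\bigotilde{(n B)} = \bigotilde{(n^4 + n^3\tau)}$ bit operations per square, and there are $r = \bigo{(n)}$ of them, giving $\bigotilde{(n^5 + n^4\tau)}$ overall for this naive route --- which is too slow, so instead one evaluates first. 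Third, and better: evaluate each $s_j$ at the $n+1$ points using fast multipoint evaluation~\cite[Corollary~10.8]{Gathen99}, costing $\bigotilde{(n \cdot B)}$ bit operations per polynomial; square the resulting values and multiply by $c_j$ pointwise (each such arithmetic operation on integers of bitsize $\bigo{(B)}$ costs $\bigotilde{(B)}$, and there are $\bigo{(n)}$ points, hence $\bigotilde{(nB)}$ per $j$); sum over $j$. With $r = \bigo{(n)}$ terms this totals $\bigotilde{(n^2 B)} = \bigotilde{(n^5 + n^4 \tau)}$, still one factor of $n$ too large. The fix is to note that what must be evaluated is the single polynomial identity $f = \sum_j c_j s_j^2$: the left-hand side is evaluated once ($\bigotilde{(nB)}$ via multipoint evaluation, noting $\tau \le B$), and on the right-hand side the dominant cost is evaluating the $\bigo{(n)}$ polynomials $s_j$; since $\sum_j \deg s_j = \bigo{(n^2)}$ but each individual multipoint evaluation of a degree-$\le k$ polynomial at $n+1$ points is $\bigotilde{(nB)}$, a careful accounting gives total cost $\bigotilde{(n^4 + n^3\tau)}$ once one observes (as in Theorem~\ref{th:verif}) that polynomial multiplication and evaluation in degree $\le n = 2k$ with coefficients of bitsize $\le B$ costs $\bigotilde{(k\cdot B)}$, and $B = \bigo{(n^3 + n^2\tau)}$, so $k \cdot B = \bigo{(n^4 + n^3\tau)}$.

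The main obstacle I anticipate is precisely the bookkeeping in the last step: one must be careful that the $r = \bigo{(n)}$ separate squarings/evaluations do not contribute an extra factor of $n$. The cleanest way around this is to mirror the argument of Theorem~\ref{th:verif} verbatim: a single polynomial multiplication or a single multipoint evaluation in degree $< 2k$ with coefficients of bitsize at most $B = \bigo{((k!)^3\tau)}$ in that theorem, here $B = \bigo{(n^3 + n^2\tau)}$, costs $\bigotilde{(k\cdot B)}$ bit operations, and then bound the whole verification by a constant number of such operations applied to the aggregated data, using that the total degree $\sum_j \deg(c_j s_j^2)$ and the number of terms are both $\bigo{(n)}$ so that the concatenated computation still fits within a $\bigotilde{(k\cdot B)} = \bigotilde{(n^4 + n^3\tau)}$ budget. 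Substituting $B = \bigo{(n^3 + n^2\tau)}$ and $k = n/2$ then yields the announced bound $\bigotilde{(n^4 + n^3\tau)}$, matching the runtime of Algorithm~$\univsostwo$ itself in Theorem~\ref{th:bitop2}.
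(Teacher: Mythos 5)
Your approach is the same as the paper's: reduce verification to evaluation of the identity $f=\sum_j c_j s_j^2$ at $n+1$ small integer points, plug in the coefficient bitsize bound $B=\bigo{(n^3+n^2\tau)}$ from Theorem~\ref{th:bitsize2}, and invoke \cite[Corollary~8.27]{Gathen99} and \cite[Corollary~10.8]{Gathen99}. The paper's proof is, almost word for word, your ``mirror Theorem~\ref{th:verif}'' step and nothing more.

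The difficulty is precisely the step you flag and then wave away. You correctly compute that treating the certificate as $r=\bigo{(n)}$ independent squarings or multipoint evaluations costs $\bigotilde{(n^2B)}=\bigotilde{(n^5+n^4\tau)}$, but your proposed repair --- ``a careful accounting gives total cost $\bigotilde{(n^4+n^3\tau)}$'', ``the concatenated computation still fits within a $\bigotilde{(k\cdot B)}$ budget'' --- is an assertion, not an argument, and it is false for a generic certificate with $\bigo{(n)}$ summands: even after all values $s_j(x)$ are known, forming $\sum_j c_j s_j(x)^2$ at one point means adding $\bigo{(n)}$ integers of bitsize $\bigo{(B)}$, hence $\bigotilde{(nB)}$ per point and $\bigotilde{(n^2B)}$ over $n+1$ points. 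What actually rescues the bound is the special structure of the output of Algorithm~$\univsostwo$: apart from $hs_1$ and $hs_2$, every $s_j$ is $h$ times a monomial $X^i$ or a binomial $X^{i+1}+\tfrac{\pm1}{2}X^i$, so the right-hand side equals $h^2$ times $\bigl(\ell s_1^2+\ell s_2^2+v\bigr)$ where $v$ is a degree-$n'$ polynomial each of whose coefficients is a signed sum of $\bigo{(1)}$ of the $c_j$. The whole check therefore collapses to $\bigo{(1)}$ polynomial multiplications in degree $\bigo{(n)}$ with $\bigo{(B)}$-bit coefficients plus $\bigo{(n)}$ scalar operations, i.e.\ $\bigotilde{(nB)}=\bigotilde{(n^4+n^3\tau)}$. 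To be fair, the paper's own two-line proof silently skips this point as well, so your write-up is no less rigorous than the published one --- but the gap you identified is genuine, and the missing ingredient is the structure of the certificate, not better bookkeeping of generic multipoint evaluations.
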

\begin{proof}
  From~\cite[Corollary 8.27]{Gathen99}, the cost of polynomial
  multiplication in $\Z[X]$ of degree less than $n $ with coefficients
  of bitsize upper bounded by $l$ is bounded by
  $\bigotilde{(n \cdot l)}$.  By Theorem~\ref{th:bitsize2}, the
  maximal coefficient bitsize of the SOS decomposition of $f$ obtained
  with Algorithm~$\univsostwo$ is upper bounded by
  $l = \bigo{( n^3 + n^2 \tau )}$.  Therefore, from~\cite[Corollary
  10.8]{Gathen99}, the cost of the evaluation of this decomposition at
  $n$ points can be performed using at most
  $\bigotilde{(n \cdot (n^3 + n \tau) )}$ boolean operations as
  claimed.
\end{proof}
\if{
\begin{remark}
  \label{rk:univsos2} \mohab{\bf bof bof pour cette remarque ; je
    sugg\`ere de tout faire avec des entr\'ees dans $\Z[X]$ et de dire
    dans l'intro que tout est valable avec des entr\'ees dans $\Q[X]$
    en se ramenant \`a $\Z[X]$}. A similar complexity analysis can be
  done while assuming that $f \in \Q[X]$.  Let us write the polynomial
  $f = \sum_{i=0}^{2 k} \frac{b_i}{c_i} X^i$, such that each rational
  coefficient $\frac{b_i}{c_i}$ is irreducible and let $l \in \Z$ be
  the least common multiples of all the $c_i$.  Then, one has
  $\tau(l) \leq \sum_{i=0}^{2 k} \tau(c_i) \leq (2k+1) \tau $,
  yielding
  $\tau (l f) \leq \tau(l) + \tau (f) \leq (2k+2) \tau = \bigo{(k
    \tau)}$.
  Therefore, we obtain an SOS decomposition whose total bitsize length
  is upper bounded by $\bigo{(n^4 \tau)}$ and this decomposition can
  be computed using an expected number of $\bigotilde{(n^4 \tau)}$ bit
  operations.
\end{remark}
}\fi

\section{Practical experiments}
\label{sec:benchs}
Now, we present experimental results obtained by applying
Algorithm~$\univsosone$ and Algorithm~$\univsostwo$, respectively
presented before in Sections~\ref{sec:method1}
and~\ref{sec:method2}. Both algorithms have been implemented in a
tool, called $\univsos$, written in Maple version 16. The interested
reader can find more details about installation and benchmark
execution on the dedicated
webpage.\footnote{\url{https://github.com/magronv/univsos}} This tool
is integrated to the RAGlib Maple
package\footnote{\url{http://www-polsys.lip6.fr/~safey/RAGLib/}}. We
obtained all results on an Intel Core i7-5600U CPU (2.60 GHz) with 16Gb of RAM. SOS
decomposition (resp.~verification) times are provided after averaging
over five (resp.~thousand) runs.

As mentioned in~\cite[Section 6]{Chevillard11}, the SOS decomposition
performed by Algorithm~$\univsostwo$ has been implemented using the
PARI/GP software tool\footnote{\url{http://pari.math.u-bordeaux.fr}}
and is freely
available.\footnote{\url{https://hal.archives-ouvertes.fr/ensl-00445343v2}}
To ensure fair comparison, we have rewritten this algorithm in
Maple. To compute approximate complex roots of univariate polynomials,
we rely on the PARI/GP procedure \texttt{polroots} through an
interface with our Maple library. We also tried to use the Maple
procedure \texttt{fsolve} but the \texttt{polroots} routine from
Pari/GPyields significantly better performance for the polynomials
involved in our examples.
%

The nine polynomial benchmarks presented in Table~\ref{table:bench1}
allow to approximate some given mathematical functions, considered
in~\cite[Section 6]{Chevillard11}. Computation and verification of SOS
certificates are a mandatory step required to validate the supremum
norm of the difference between such functions and their respective
approximation polynomials on given closed intervals. This boils down
to certify two inequalities of the form
$\forall x \in [b, c], p(x) \geq 0$, with $p \in \Q[X]$, $b, c \in \Q$
and $\deg p = n$. As recalled in~\cite[Section 5.2.5]{Chevillard11},
this latter problem can be addressed by computing a weighted SOS
decomposition of the polynomial
$q(Y) := (1+Y^2)^n \, p \Bigl( \frac{b+c Y^2}{1 + Y^2}\Bigr)$, with
either Algorithm~$\univsosone$ or Algorithm~$\univsostwo$.  For each
benchmark, we indicate in Table~\ref{table:bench1} the degree $n$ and
the bitsize $\tau$ of the input polynomial, the bitsize $\tau_1$ of
the weighted SOS decomposition provided by Algorithm~$\univsosone$ as
well as the corresponding computation (resp.~verification) time $t_1$
(resp.~$t_1'$). Similarly, we display $\tau_2, t_2, t_2'$ for
Algorithm~$\univsostwo$.
%
The table results show that for all other eight benchmarks,
Algorithm~$\univsostwo$ yields better certification and verification
performance, together with more concise SOS certificates. This
observation confirms what we could expect after comparing the
theoretical complexity results from Sections~\ref{sec:method1}
and~\ref{sec:method2}.

\begin{table*}[!ht]
\begin{center}
\caption{Comparison results of output size and performance between  Algorithm~$\univsosone$ and Algorithm~$\univsostwo$ for non-negative polynomial benchmarks from~\cite{Chevillard11}.}
\begin{tabular}{ccr|rrr|rrr}
\hline
\multirow{2}{*}{Id} & \multirow{2}{*}{$n$} & \multirow{2}{*}{$\tau$ (bits)} & \multicolumn{3}{c|}{$\univsosone$} & \multicolumn{3}{c}{$\univsostwo$} \\
 & & & $\tau_1$ (bits) & $t_1$ (ms) & $t_1'$ (ms) & $\tau_2$ (bits) & $t_2$ (ms) & $t_2'$ (ms) \\
\hline  
\# 1 & 13 & 22 682 & 3 403 218 & 2 723 & 0.40 & 51 992 & 824 & 0.14 \\
\# 3 & 32 & 269 958 & 11 613 480 & 13 109 &  1.18 & 580 335 & 2 640 & 0.68 \\
\# 4 & 22 & 47 019 &  1 009 507 & 4 063 & 1.45 & 106 797 & 1 776 & 0.31\\
\# 5 & 34 & 117 307 & 8 205 372 & 102 207  & 20.1 &   265 330 & 5 204 & 0.60 \\
\# 6 & 17 & 26 438 & 525 858 & 1 513 & 0.74 & 59 926 & 1 029 & 0.21 \\
\# 7 & 43 & 67 399 & 62 680 827 &  217 424 & 48.1 & 152 277 & 11 190 & 0.87 \\
\# 8 & 22 & 27 581 & 546 056 & 1 979 & 0.77 & 63 630 & 1 860 & 0.38 \\
\# 9 & 20 & 30 414 & 992 076 & 964 & 0.44 & 68 664 & 1 605 & 0.25 \\
\# 10 & 25 & 42 749 & 3 146 982 & 1 100 & 0.38 &  98 926 & 2 753 & 0.39 \\
\hline
\end{tabular}
\label{table:bench1}
\end{center}
\end{table*}

The comparison results available in Table~\ref{table:bench2} are obtained for power sums of increasing degrees. For a given natural integer $n = 2k$ with $10 \leq n \leq 500$, we consider the polynomial $P_n := 1 + X + \dots + X^n$. The roots of this polynomial are the $n+1$-th roots of unity, thus yielding the following  SOS decomposition with real coefficients: $P_n := \prod_{j=1}^k ((X - \cos \theta_j)^2 + \sin^2 \theta_j)$, with $\theta_j := \frac{2 j \pi}{n+1}$, for each $j=1, \dots, k$. 
By contrast with the benchmarks from Table~\ref{table:bench1}, Table~\ref{table:bench2} shows that Algorithm~$\univsosone$ produces output certificates of much smaller size compared to Algorithm~$\univsostwo$, with a bitsize ratio lying between $6$ and $38$ for values of $n$ between $10$ and $200$. This is due to the fact that Algorithm~$\univsosone$ outputs a value of t equal to 0 at each step.
The execution performance of Algorithm~$\univsosone$ are also much better in this case. The lack of efficiency of Algorithm~$\univsostwo$ is due to the computational bottleneck occurring in order to obtain accurate approximation of the relatively close roots $\cos \theta_j \pm i \sin \theta_j$, $j = 1, \dots, k$. For $n \geq 300$, execution of Algorithm~$\univsostwo$ did not succeed after two hours of computation, as meant by the symbol $-$ in the corresponding line.

\begin{table*}[!ht]
\begin{center}
\caption{Comparison results of output size and performance between  Algorithm~$\univsosone$ and Algorithm~$\univsostwo$ for non-negative power sums of increasing degrees.}
\begin{tabular}{c|rrr|rrr}
\hline
\multirow{2}{*}{$n$} & \multicolumn{3}{c|}{$\univsosone$} & \multicolumn{3}{c}{$\univsostwo$} \\
&  $\tau_1$ (bits) & $t_1$ (ms) & $t_1'$ (ms) & $\tau_2$ (bits) & $t_2$ (ms) & $t_2'$ (ms) \\
\hline  
10 & 84 & 7 & 0.03 & 
567 & 264 & 0.03\\
20 & 195 & 10 & 0.05 & 
1 598 & 485 & 0.06\\
40 & 467 & 26 & 0.09 & 
6 034 & 2 622 & 0.18\\
60 & 754 & 45 & 0.14 & 
12 326 & 6 320 & 0.32 \\
80 & 1 083 & 105 & 0.18 & 
21 230 & 12 153 & 0.47\\
100 & 1 411 & 109 & 0.26 & 
31 823 & 19 466 & 0.69 \\
200 & 3 211 & 444 & 0.48 & 
120 831 & 171 217 & 2.08 \\
300 & 5 149 & 1 218 & 0.74 & \multirow{4}{*}{$-$} & \multirow{4}{*}{$-$} & \multirow{4}{*}{$-$} \\
400 & 7 203 & 2 402 & 0.95 & & & \\
500 & 9 251 & 4 292 & 1.19 & & & \\
1000 & 20 483 & 30 738 & 2.56 & & & \\
\hline
\end{tabular}
\label{table:bench2}
\end{center}
\end{table*}

Further experiments are summarized in Table~\ref{table:bench3} for
modified Wilkinson polynomials $W_n$ of increasing degrees $n = 2k$
with $10 \leq n \leq 600$ and $W_n := 1 + \prod_{j=1}^k (X -j)^2$. The
complex roots $j \pm i$, $j=1,\dots,k$ of $W_n$ are relatively close,
which yields again a significant lack of performance of
Algorithm~$\univsostwo$. As observed in the case of power sums,
timeout behaviors occur for $n \geq 60$.  In addition, the bitsize of
the SOS decompositions returned by Algorithm~$\univsosone$ are much
smaller.  This is a consequence of the fact that in this case, $a = 1$
is the global minimizer of $W_n$. Hence, the algorithm always
terminates at the first iteration by returning the trivial quadratic
approximation $f_t = f_a = 1$ together with the square-free
decomposition of $W_n - f_t = \prod_{j=1}^k (X -j)^2$.

\begin{table*}[!ht]
\begin{center}
\caption{Comparison results of output size and performance between  Algorithm~$\univsosone$ and Algorithm~$\univsostwo$ for modified Wilkinson polynomials of increasing degrees.}
\begin{tabular}{cr|rrr|rrr}
\hline
\multirow{2}{*}{$n$} & \multirow{2}{*}{$\tau$ (bits)} & \multicolumn{3}{c|}{$\univsosone$} & \multicolumn{3}{c}{$\univsostwo$} \\
& &  $\tau_1$ (bits) & $t_1$ (ms) & $t_1'$ (ms) & $\tau_2$ (bits) & $t_2$ (ms) & $t_2'$ (ms) \\
\hline
10 & 140 & 47 & 17 & 0.01 & 2 373 & 751 & 0.03 \\
20 & 737 & 198 & 31 & 0.01 & 12 652 & 3 569  & 0.08 \\
40 & 3 692 & 939 & 35 & 0.01 & 65 404 & 47 022  &  0.17\\
60 & 9 313 & 2 344 & 101 & 0.01 & \multirow{8}{*}{$-$} & \multirow{8}{*}{$-$} & \multirow{8}{*}{$-$} \\
80 & 17 833 & 4 480 & 216 & 0.01 &  & & \\
100 & 29 443 & 7 384 & 441 & 0.01 & & &  \\
200 & 137 420 & 34 389 & 3 249 & 0.01 & & & \\
300 & 335 245 & 83 859 & 11 440 & 0.01 & & &  \\
400 & 628 968 & 157 303 & 34 707 & 0.02 & & &  \\
500 & 1 022 771 & 255 767 & 73 522 & 0.02 & & & \\
600 & 1 519 908 & 380 065 & 149 700 & 0.04 & & & \\
\hline  
\end{tabular}
\label{table:bench3}
\end{center}
\end{table*}
Finally, we consider experimentation performed on modified Mignotte polynomials defined by $M_{n,m} := X^n + 2 (101 X - 1)^{m}$ and $N_n := (X^n + 2 (101 X - 1)^2) (X^n + 2 ((101 + \frac{1}{101})X - 1)^2)$, for even natural integers $n$ and $m \leq 2$.
The corresponding results are displayed in Table~\ref{table:bench4} for $M_{n,m}$ with $m = 2$ and $10 \leq n \leq 10 000$, $m = n -2$ and $10 \leq n \leq 100$ as well as for $N_n$ with $10 \leq n \leq 100$. Note that similar benchmarks are used in~\cite{RootIsol11} to anayze the efficiency of (real) root isolation techniques over polynomial with relatively close roots.
As for modified Wilkinson polynomials, Algorithm~$\univsostwo$ can only handle small size instances, due to  limited scalablity of the \texttt{polroots} procedure. 
In this case $a = \frac{1}{100}$ is the unique global minimizer of $M_{n,2}$. Thus, Algorithm~$\univsosone$ always outputs weighted SOS decompositions of polynomials $M_{n,2}$ within a single iteration by first computing the quadratic polynomial $f_t = f_a = 2 (101 X - 1)^2$ and the trivial square-free decomposition $W_n - f_t = X^n$. In the absence of such minimizers, Algorithm~$\univsosone$ can only handle instances of polynomials $M_{n,n-2}$ and $N_n$ with moderate degree (less than 100).


\begin{table*}[!ht]
\begin{center}
\caption{Comparison results of output size and performance between  Algorithm~$\univsosone$ and Algorithm~$\univsostwo$ for modified Mignotte polynomials of increasing degrees.}
\begin{tabular}{ccr|rrr|rrr}
\hline
\multirow{2}{*}{Id} & \multirow{2}{*}{$n$} & \multirow{2}{*}{$\tau$ (bits)} & \multicolumn{3}{c|}{$\univsosone$} & \multicolumn{3}{c}{$\univsostwo$} \\
& & &  $\tau_1$ (bits) & $t_1$ (ms) & $t_1'$ (ms) & $\tau_2$ (bits) & $t_2$ (ms) & $t_2'$ (ms) \\
\hline
\multirow{4}{*}{$M_{n,2}$} & 10~ & \multirow{4}{*}{27} & \multirow{4}{*}{23} & 2 & \multirow{4}{*}{0.01} &  4 958 & 1 659 & 0.04 \\
& $10^2$ &  &  & 3 &   & \multirow{3}{*}{$-$} & \multirow{3}{*}{$-$} & \multirow{3}{*}{$-$} \\
& $10^3$ &  &  & 85 &  & &  &  \\
& $10^4$ &  &  & 3 041 &  & & &  \\  
\hline  
\multirow{5}{*}{$M_{n,n-2}$} & 10 & 288 & 25 010 & 21 & 0.03 & 
6 079 & 2 347 & 0.04 \\
& 20 & 1 364 & 182 544 & 138 & 0.04 & 
26 186 & 10 922 & 0.06 \\
& 40 & 5 936 & 1 365 585 & 1 189 & 0.13 & \multirow{3}{*}{$-$} & \multirow{3}{*}{$-$} & \multirow{3}{*}{$-$} \\
& 60 & 13 746 & 4 502 551 & 4 966 & 0.33 & &  & \\
& 100 & 39 065 & 20 384 472  & 38 716 & 1.66 & & &  \\
\hline  
\multirow{5}{*}{$N_n$} & 10 & \multirow{5}{*}{212} &   25 567 & 27 & 0.04 & \multirow{5}{*}{$-$} & \multirow{5}{*}{$-$}   & \multirow{5}{*}{$-$}   \\
& 20 & & 189 336 & 87 & 0.05 & & &  \\
& 40 & & 5 027 377 & 1 704 & 0.17 &  & & \\
& 60 & & 16 551 235  & 8 075 & 0.84 &  &  &  \\
& 100 & & 147 717 572 & 155 458  & 11.1 & & & \\
\hline  
\end{tabular}
\label{table:bench4}
\end{center}
\end{table*}

\section{Conclusion and perspectives}
We presented and analyzed two different algorithms $\univsosone$ and $\univsostwo$ to compute weighted sums of squares (SOS) decompositions of non-negative univariate polynomials. When the input polynomial has rational coefficients, one feature shared by both algorithms is their ability to provide non-negativity certificates whose coefficients are also rational.
Our study shows that the complexity analysis of Algorithm~$\univsosone$ yields an upper bound which is exponential w.r.t.~the input degree, while the complexity of Algorithm~$\univsostwo$ is polynomial.
However, comparison benchmarks emphasize the need for both algorithms to handle various classes of non-negative polynomials, e.g.~in the presence of rational global minimizers or when root isolation can be performed efficiently. 

A first direction of further research is a variant of 
Algorithm~$\univsostwo$ where one would compute approximate SOS decompositions of perturbed positive polynomials by using semidefinite programming (SDP) instead of root isolation. Preliminary 
experiments yield very promising results when the bitsize of the polynomials is small, e.g.~for power sums of degree up to 1000. However, the performance decrease when the bitsize becomes larger, either for polynomial benchmarks from~\cite{Chevillard11} or modified Wilkinson polynomials. At the moment, we are not able to provide any SOS decomposition for all such benchmarks.
Our SDP-based algorithm relies on the high-precision solver SDPA-GMP~\cite{Nakata10GMP} but it is still challenging to obtain precise values of eigenvalues/vectors of SDP output matrices. Another advantage of this technique is its ability to perform global polynomial optimization. A topic of interest would be to obtain the same feature with the two current algorithms. We also plan to develop extensions to the non-polynomial case.

\bibliographystyle{plain}

\end{document}